\numberwithin{equation}{section}
\newtheorem{theorem}{Theorem}[section]
\newtheorem{corollary}[theorem]{Corollary}
\newtheorem{lemma}[theorem]{Lemma}
\newtheorem{proposition}[theorem]{Proposition}
\theoremstyle{definition}
\newtheorem{definition}[theorem]{Definition}
\newtheorem{remark}[theorem]{Remark}
\newtheorem{assumption}[theorem]{Assumption}
\newtheorem{sassumption}[theorem]{Standing Assumption}
\numberwithin{equation}{section}
\DeclareMathOperator*{\argmax}{arg\,max}
\newcommand{\RR}{\mathbb{R}}
\newcommand{\PP}{\mathbb{P}}
\newcommand{\GG}{\mathbb{G}}
\newcommand{\Ir}{\mathbf{I}}
\newcommand{\NN}{\mathbb{N}}
\newcommand{\Ff}{\mathcal{F}}
\newcommand{\Ttt}{\mathcal{T}}
\newcommand{\ft}{\mathfrak{t}}
\newcommand{\Jj}{\mathcal{I}}
\newcommand{\mfv}{\mathfrak{v}}
\newcommand{\cN}{\mathcal{N}}
\newcommand{\cLN}{\mathcal{LN}}
\newcommand{\cW}{\mathcal{W}}
\newcommand{\cF}{\mathcal{F}}
\newcommand{\cZ}{\mathcal{Z}}
\newcommand{\cV}{\mathcal{V}}
\newcommand{\cC}{\mathcal{C}}
\newcommand{\cG}{\mathcal{G}}
\newcommand{\cH}{\mathcal{H}}
\newcommand{\cA}{\mathcal{A}}
\newcommand{\cB}{\mathcal{B}}
\newcommand{\cD}{\mathcal{D}}
\newcommand{\cL}{\mathcal{L}}
\newcommand{\cU}{\mathcal{U}}
\newcommand{\EE}{\mathbb{E}}
\newcommand{\dr}{\mathrm{d}}
\newcommand{\OneN}{\{1,\hdots,N\}}
\newcommand{\VV}{\mathbb{V}}
\newcommand{\bOne}{\mathbf{1}}
\newcommand{\af}{\mathfrak{a}}
\newcommand{\pf}{\mathfrak{p}}
\newcommand{\ff}{\mathfrak{f}}
\newcommand{\cc}{\mathfrak{c}}
\newcommand{\cK}{\mathcal{K}}
\newcommand{\ee}{\mathfrak{e}}
\newcommand{\bb}{\mathfrak{b}}
\newcommand{\dd}{\mathfrak{d}}
\newcommand{\Pp}{{P}}
\newcommand{\Ww}{{W}}
\newcommand{\Cc}{{C}}
\newcommand{\Yy}{{Y}}
\newcommand{\Hh}{{H}}
\newcommand{\Ll}{{\Lambda}}
\newcommand{\llambda}{\boldsymbol{\lambda}}
\newcommand{\LGD}{\mathrm{LGD}}
\newcommand{\EAD}{\mathrm{EAD}}
\newcommand{\PD}{\mathrm{PD}}
\newcommand{\EL}{\mathrm{EL}}
\newcommand{\UL}{\mathrm{UL}}
\newcommand{\ES}{\mathrm{ES}}
\newcommand{\VaR}{\mathrm{VaR}}
\newcommand{\titre}{Impact of the carbon price on credit portfolio's loss with stochastic collateral}
\begin{document}

\begin{frontmatter}

\title{\titre}
\date{\today}

\author[1,2,3]{Lionel Sopgoui}

\address[1]{ Laboratoire de Probabilités, Statistique et Modélisation (LPSM), Université Paris Cité}
\address[2]{Department of Mathematics, Imperial College London}
\address[3]{Validation des modèles, Direction des risques, Groupe BPCE}

\journal{arXiv}

\begin{abstract}
The aim of this work is to propose an end-by-end modeling framework to evaluate the risk measures of a bank's portfolio of collateralized loans when the bank, the borrowers, as well as their guarantees operate in an economy subject to the climate transition. The economy, organized in sectors, is  driven by its productivity which is a multidimensional Ornstein-Uhlenbeck process while the climate transition is declined thanks to the carbon price and carbon intensities, continuous deterministic processes. We thus derive the dynamics of macroeconomic variables for each climate transition scenario.
By considering that a firm defaults if it is \textit{over-indebted} i.e. its market value -- depreciated due to the carbon price -- becomes less than the market value of its debt, 
we define each loan's loss at default as the difference between Exposure at Default (EAD) and the liquidated collateral, which will help us to define the Loss Given Default (LGD) -- the expected percentage of exposure that is lost if a debtor defaults. We consider two types of collateral: \textit{financial asset} such as invoices, cash, or investments or \textit{physical asset} such as real estate, business equipment, or inventory. First, if it is a \textit{financial asset}, we model the later by the continuous time version of the discounted cash flows methodology, where the cash flows growth is driven by the instantaneous output growth, the instantaneous growth of a carbon price function, and an arithmetic Brownian motion. Secondly, for \textit{physical asset}, we focus on the example of a \textit{property in housing market}. Therefore, we define, as \cite{sopgoui2024realestate}, its value as the difference between the price of an equivalent efficient building following an exponential Ornstein-Uhlenbeck (OU) as well as the actualized renovation costs and the actualized sum of the future additional energy costs due to the inefficiency of the building, before an optimal renovation date which depends on the carbon price process. Finally, we obtain expressions for risk measures of a portfolio of collateralized loans as a function of various parameters and variables, mainly those linked to the climate transition, such as the carbon price and the energy efficiency of buildings. These risk measures will be used by banks, depending on the climate transition scenarios, to define their operating expenses, the fees applied to clients, as well as their economic and regulatory capital. 
\end{abstract}

\begin{keyword}
Credit risk \sep Climate risk \sep Collateral \sep Stochastic modelling \sep Transition risk \sep Carbon price \sep Firm valuation \sep Loss Given Default
\end{keyword}

\end{frontmatter}

\footnotesize This research is part of the PhD thesis in Mathematical Finance of Lionel Sopgoui whose works are funded by a CIFRE grant from BPCE S.A. The opinions expressed in this research are those of the authors and are not meant to represent the opinions or official positions of BPCE S.A.

\footnotesize We would like to thank Jean-Fran\c{c}ois Chassagneux, Antoine Jacquier, Smail Ibbou, and Géraldine Bouveret for helpful comments on an earlier version of this work.


\normalsize
\newpage

\paragraph{Notations}
\begin{itemize}
    \item $\NN$ is the set of non-negative integers, $\NN^{*} := \NN\setminus\{0\}$, and $\mathbb{Z}$ is the set of integers.
    \item $\RR^d$ denotes the $d$-dimensional Euclidean space, $\RR_{+}$ is the set of non-negative real numbers, $\RR_{+}^{*} := \RR_{+}\setminus\{0\}$.
    \item {For $n,d\in\NN^*$, $\RR^{n\times d}$ is the set of real-valued $n\times d$ matrices ($\RR^{n\times 1} = \RR^{n}$), $\Ir_n$ is the identity $n\times n$ matrix and $\bOne := (1,\hdots,1) \in\RR^{n}$.}
    \item $x^i$ denotes the $i$-th component of the vector $x \in \RR^d$. For $A := (A^{ij})_{1\leq i,j\leq n}\in\RR^{n\times n}$, we denote by~$A^\top := (A^{ji})_{1\leq i,j\leq n}\in\RR^{n\times n}$ the transpose matrix, and $\lambda(A)$ denotes the spectrum of $A$.
    \item For a given finite set $S$, we define as the cardinal of $S$, $\#S$.
    \item For all $x,y\in\RR^d$, we denote the scalar product $x^\top y$, the Euclidean norm~$ | x | := \sqrt{x^\top x}$ and for a matrix~$M\in\RR^{d\times d}$, we denote
\begin{equation*}
    | M |:= \sup_{a\in\RR^d, |a| \leq 1}   |Ma|  \label{ct-eq:norm}.
\end{equation*}
\item $(\Omega, \mathcal{H}, \mathbb{P})$ is a complete probability space.

\item  For $p \in [1,\infty]$, ${E}$ is a finite dimensional Euclidian vector space and for a $\sigma$-field $\cH$, $\cL^p(\cH,{E})$, denotes the set of  $\cH$-meassurable random variable $X$ with values in ${E}$ such that $\Vert X \Vert_{p} := \left(\EE\left[ |X|^p\right] \right)^{\frac1p}<\infty$ for $p < \infty$ and for $p = \infty$, $\Vert X \Vert_{\infty} := \mathrm{esssup} |X(\omega)| < \infty$. 
\item For a filtration $\mathbb{G}$, $p \in [1,+\infty]$ and $I\in \NN^*$, $\mathscr{L}^p_{+}(\mathbb{G},(0,\infty)^I)$ is the set of continuous-time processes that are $\mathbb{G}$-adapted valued in $(0,\infty)^I$ and  which satisfy
\begin{equation*}
    \lVert X_t \rVert_p < \infty \text{ for all } t\in\RR_+.
\end{equation*}
\item If $X$ and $Y$ are two random variables $\RR^d$-valued, for $x\in\RR^d$, we note $Y|X=x$ the conditional distribution of $Y$ given $X=x$, and $Y|\Ff$ the conditional distribution of 
$Y$ given the filtration~$\Ff$.
\item {If X is a log-Normal random variable, we note $X \sim \cLN(\mu, \sigma^2)$ with $\mu := \EE[\log(X)]$ (the mean of $\log(X)$)  and $\sigma^2 := \VV[\log(X)]$ (the variance of $\log(X)$).}
\item {For $x,y\in\RR$, we note~$\Phi_2(x, y; \rho)$ is the cumulative distribution function of the bi-variate Gaussian vector $(X, Y)$ with correlation $\rho$ on the space $[-\infty, x]\times[-\infty, y]$. }
\item {For $A \subset \RR$ and $B\subset \RR^I$, if $f: A\to B, t\mapsto f(t)$ is a differentiable function, we note $\dot f$ its first derivative. We also note $\mathfrak{D}(A,B)$ the set of derivative functions from $A$ to $B$.}
\end{itemize}

\section{Introduction}
{There is now evidence that climate change has adverse effects on the environment,
on global warming, and on human societies. Climate risk has two components: \textit{physical risk} which arises from alterations in climate variables, like increasing temperatures and melting ice, or from extreme weather events, such as droughts or typhoons that can
damage infrastructure and endanger populations; and \textit{transition risk}
which comes from the necessity of transitioning to a low-carbon economy, leading to
the implementation of regulatory/political measures, potential technological changes, and the
evolution of consumer preferences. We focus in this work on \textit{transition risk} and particularly on  political changes. The most
well-known way to that is the carbon price. This means that the greenhouse gases (GHG) emissions of
economic agents are charged. For instance, the carbon price can modify the
three main components of credit risk, which are the \textit{borrower’s cash flows}, the \textit{value
of his assets}, and the \textit{value of his guarantees}. The aim of this work is to compute the losses of a credit portfolio when the bank, its borrowers, and their guarantees operate in an economy that undergoes the climate transition.} 

When an obligor (firm, government, or individual) defaults,  the creditor (bank) stands to lose its money. One way to ensure the stability of the banks' business and more generally the soundness of the whole financial system, is ideally, {to anticipate when the default will happen and the amount a bank could lose.} In order to achieve that, the~\cite{basel2017ead} introduces four parameters: the probability of default (PD) which measures the default risk associated with each borrower, the exposure at default (EAD) which quantifies the outstanding debt at the time of default, the loss given default (LGD) which captures the expected percentage of EAD that is lost if the debtor defaults, and the effective maturity T which represents the duration of the credit. By using these parameters, banks can compute various risk measures (such as expected, unexpected, and stressed losses) which help later on to determine provisions, as well as economic and regulatory capital. Using \cite{gordy2003risk}, we can obtain the losses as a function of PD, LGD, EAD, and T. An essential part of a bank risk division is to estimate how the risk measures change with various factors such as time and economic conditions.

{In \cite{bouveret2023propagation}, we propose a methodology in discrete time without collateral (i.e., with LGD independent of the economy and climate transition). However,} when a debtor defaults, the bank can lose all or part of its exposure. The fraction of the loss relative to EAD is LGD while the recovery rate is the fraction of EAD recovered so that $LGD = 1 - Recovery$. {So modeling LGD or recovery is equivalent.} According to~\cite{chalupka2008modelling}, there are three types of LGD: "\textit{market LGD} is
observed from market prices of defaulted bonds or marketable loans soon after the actual default event. \textit{Workout LGD} is derived from a set of estimated cash flows resulting from a workout and collection process, properly discounted to a date of default. Thirdly, implied
\textit{market LGD} is derived from risky but not defaulted bond prices using a theoretical asset pricing model". When consider the IRB approach\footnote{{Internal Rating Based (see~\cite{basel2017ead})}}, LGD refers to \textit{Workout LGD} and there are several techniques to model it. In economic modeling, as detailed by~\cite{bastos2010forecasting, roncalli2020handbook}, LGD is a function of different factors which can be external to the issuer, specific to the issuer, or specific to the debt issuance. That function can be obtained through logistic regression, regression trees, or neural networks. In stochastic modeling, it is assumed that LGD follows a given (parametric or non-parametric) distribution. For example, LGD is commonly modeled by a Beta distribution as \cite{roncalli2020handbook}[Page 193] and \cite{chalupka2008modelling}.
\cite{fermanian2020dependence}, for his part, proposes a joint modeling of PD and LGD by writing the potential loss at default as the difference between the debt amount (EAD) and the assets at the default date.

{In a credit portfolio, there are secured (or collateralized ) and unsecured loans. We call secured loans when there is collateral (also called guarantee) and unsecured loans when there is not. Of course, not all borrowers put up collateral when taking out loans, and there is evidence that loans with collateral may be riskier for lenders. \cite{berger1990collateral} note that banks require more collateral from riskier borrowers, and although seizing a guarantee when a borrower defaults reduces the bank’s loss, this does not always fully compensate the bank for the fact that the loan was initially riskier.} For a secured loan, when the counterpart defaults, the bank liquidates the collateral, and if the EAD is not reached, it can recover the remaining amount by liquidating other assets (called residual recovery). These guarantees can be tangible assets (buildings, business equipments, inventories, etc.) or intangible assets (cash deposits, public bonds, securities, etc.) as noted by~\cite{berger1990collateral}, \cite{blazy2013banks}. For secured loans, the recovery (that is, $1-\LGD$) is therefore made up of both the value of the collateral at the date of default and the value of the residual recovery \cite{frontczak2015modeling}, and~\cite{pelizza2020pricing}. {However, in most LGD modeling approaches, the presence or absence of the guarantees is often overlooked. We will tackle these limits here by considering two examples of guarantees: either a security or a (commercial or residential) building, which both will be affected by the climate transition. }

A security can represent ownership in a corporation in the form of stock, a creditor relationship with a governmental body, or a corporation represented by owning that entity's bond; or rights to ownership as represented by an option. A security generates a stream of cash flows. The proxy of the security value is the (in)finite sum of the present value of the future cash flows. As we already know from what we propose in~\cite{bouveret2023propagation}, the value of the security (particularly if it is a firm) will be affected by the transition risk. We will therefore revisit the results of~\cite{bouveret2023propagation} where carbon price dynamics affects the firm value and credit risk measures such as probability of default, expected and unexpected losses. In particular, we redesign the multisectoral model with carbon price, the firm valuation model, and the credit risk model proposed in continuous time. 

In the same way, a commercial or residential building price will be affected by the climate transition, for example through Energy Performance (or Energy Efficiency) as mentioned in~\cite{aydin2020capitalization, franke2019energy}. \cite{ter2021german} quantify the depreciation by writing the price difference per square meter between two properties with different energy efficiency as the sum of the discounted value of the (expected) energy cost differences. \cite{sopgoui2024realestate} enhances this work by assuming that initially the owner of the building incurs additional energy costs (written as a function of the carbon price) due to the inefficiency of his property. Then, he may decide to spend money on renovations to make his building energy efficient. After the renovation, he no longer incurs additional energy costs. Therefore, he obtains the value of his building as the difference between the price of an equivalent efficient building following an exponential Ornstein-Uhlenbeck as well as the actualized renovation costs and the actualized sum of the future additional energy costs. 

{Thanks to our micro-founded approach to a credit portfolio, we obtain explicit expressions of risk measures (especially probability of default, loss given default, expected and unexpected losses) depending on variables and parameters traditionally found in the literature, such as the the borrower's cash flows, its exposure, the calculation time, the nature of the collateral, the liquidation time, date, and costs, the interest rate, the productivity, etc. Additionally, they now depend on climate transition-specific parameters, such as carbon price, carbon intensities, energy efficiency, or renovation costs. In summary, our framework enables the calculation of the potential loss of a collateralized loan and also measures the variation in this loss depending on the borrower's and the collateral's virtuousness in terms of GHG emissions. Precisely, we will show that (1) expected and unexpected losses increase when the price of carbon increases, (2) the presence of collateral significantly reduces expected and unexpected losses, and (3) the positive effect of collateral on losses is reduced if the collateral is energy inefficient (for a building) or depends on a polluting sector (financial asset). And at each stage, variations can be precisely quantified.}

The rest of the present work is organized as follows. {In \cref{sec:problem}, we clarify the problem that this paper aims to solve.} We revisit in ~\cref{ct_sec_cont_time} the results of~\cite{bouveret2023propagation} in a continuous time setting, namely a multisectoral economic model with carbon price, a firm valuation model, and a credit risk model. In~\cref{ct-sec:LGD}, we define the loss at default as the difference between EAD and the liquidated collateral, which will help us to define LGD. If the collateral is a \textit{financial asset}, we model it in~\cref{ct-sec:col fin asset} by the continuous time version of the discounted cash flows, where the cash flows {growth} is driven by the instantaneous consumption growth, the instantaneous growth of a carbon price's function and a Brownian motion. If the collateral is a \textit{building}, we will use the housing valuation under the climate transition proposed in \cite{sopgoui2024realestate} to compute the loss of the portfolio. The last sections (\ref{sec:numerical exp} and \ref{sec:discussion}) are dedicated to estimations, simulations, and discussion. 

\section{The problem}\label{sec:problem}

We consider a bank credit portfolio composed of $N\in\NN^*$ firms in a closed economy (in other words, no import and no export). In credit risk assessment, one of the first steps is to create homogeneous sub-portfolios of firms. As we are dealing here with climate transition risk, we would like to classify firms by carbon intensity {-- the quantity of GHG in tons emits for each unit of production (in ton per euro)\footnote{To obtain this, one could simply divide each company's greenhouse gas emissions by its revenue.} --} so that firms with similar carbon intensities belong to a same homogeneous sub-portfolio. It should be noted that in the absence of a climate transition, firms are traditionally clustered in terms of industry, geography, size, and credit rating, for example. 

We thus assume $I\in\NN^*$ ($I\leq N$) homogeneous carbon emission sectors in the economy. Nevertheless, as we rarely have the firm individual carbon emissions/intensities {(especially for small and medium enterprises)}, 
we assume that each company has the carbon intensity of its industry sector.
This amounts to grouping "industry sectors" into $I$  {"carbon intensity sectors"}. From now on, sectors are to be interpreted as {carbon emission sectors}.

\begin{definition}\label{def:subport}
We divide our portfolio into $I$ disjunct sub-portfolios $g_1, \hdots, g_I$ so that each sub-portfolio represents a single risk class and the firms in each sub-portfolio belong to a single carbon emission sectors. From now on, we denote~$\Jj$ the set of sectors with cardinal $I \in\NN^*$.  We also fix $n_i := \min{\{n\in\{1,\hdots,N\} \text{ such that }n\in g_i\}}$ for each~$i\in\Jj$. Therefore, firm~$n_i$ is a representative of the group~$i$.
\end{definition}
We would like to know how the whole portfolio loss and sub-portfolios losses would be affected should the regulator introduce a carbon price in the economy, in order to mitigate the effects of climate change. This precisely amounts to quantifying the distortion over time of credit risk measures created by the introduction of a carbon price. 
For example, if the government decides to charge firms and households GHG emissions between 2025 and 2035, a bank would like to estimate today how the probability of a company to default in 2030 is impacted. 

The bank's potential loss caused by a firm depends essentially on the default date and on the liquidation of the guarantees if they exist. The firm as well as the guarantee belong to the same economy which is subject to the climate transition. Thus, we build in the first stage a dynamic, stochastic, and multisectoral economic model in which direct and indirect GHG emissions from companies as well as direct GHG emissions from households are charged. We choose a representative firm in each sector and a representative household for the whole economy. 
By observing that each firm belongs to a sector and its cash flows are a proportion of its sales. The latter are themselves a proportion of the sectoral output. We obtain the cash flows dynamics that we use to determine the value of firms in an environment where GHG emissions are charged. Then, starting from a default model in which a company defaults if its value falls below its debt, we calculate the probability of default of each firm. Finally, considering that the collateral is a GHG-emitting asset in the economy, we compute the distortion of the (associated statistics of) loss -- defined as the difference between the exposure and the liquidated collateral -- by the introduction of a carbon price.

\section{Main assumptions and results of~\cite{bouveret2023propagation} in continuous time}\label{ct_sec_cont_time}

{In~\cite{bouveret2023propagation}, we propose an end-to-end methodology, starting from a transition scenario modeled by the carbon price and carbon intensities, to the impact on different credit risk measures -- especially the probability of default, the expected and unexpected losses -- with both EAD and LGD deterministic and without collateral. }
In this section, we revisit that framework in continuous time. Precisely, we decline, in continuous time, the two standing assumptions as well as the three main results respectively on the dynamic stochastic multisectoral model with carbon emissions costs, on the firm valuation model, and on the structural credit risk model. Most of the proofs can be derived from the discrete time so we will skip them or detail them in \cref{ct-app-sec:multisec}.

\subsection{A Multisectoral Model with Carbon price}\label{subsec:multisec model}

{The goal here is to derive the dynamics of macroeconomic variables (GHG emissions, output, consumption, labor, and intermediary inputs) per sector and per climate transition scenario.
}\\

 Each sector $i\in\Jj$ has a representative firm which produces a single good, so that we can associate sector, firm, and good. We introduce the following standing assumption which describes the productivity, which is considered to have stationary Ornstein-Uhlenbeck dynamics.
\begin{sassumption}\label{ct-sassump:OU}
    We define the $\RR^I$-valued process~$\mathcal{A}$ which evolves  according to 
\begin{equation}\label{ct-eq:VAR}
     \left\{
     \begin{array}{rl}
      \dr\cZ_t &= -\Gamma\cZ_t \dr t + \Sigma \dr B_t^{\cZ}\\
     \dr \cA_t &= \left(\mu + \varsigma \cZ_t\right) \dr t
     \end{array}\quad\textrm{for all } t\in\RR_+,
     \right.
     \end{equation}
     where {$(B_t^{\cZ})_{t\in\RR_+}$} is a $I$-dimensional standard Brownian Motion, and where the constants $\mu, \mathcal{A}_0 \in \RR^I$, the matrices~$\Gamma,\Sigma \in\RR^{I\times I}$, $\cZ_0 \sim \cN\left(0,  \Sigma \Sigma^\top \right)$, and $0 < \varsigma \le 1$ is an intensity of noise parameter that is fixed: it will be used later to obtain a tractable proxy of the firm value. Moreover, $\Sigma$ is a positive definite matrix and $-\Gamma$ is a Hurwitz matrix i.e. its eigenvalues have strictly negative real parts.
\end{sassumption}
The processes $\cZ^i$ and $\cA^i$
play a major role in our factor productivity model since, for any $i\in\Jj$, 
the total factor productivity of sector~$i$
is defined as $A^i := \exp{(\cA^i)}$, so that $\cZ^i$ is the log-productivity growth and $\cA^i$ is the cumulative log-productivity growth.
In the rest of the paper, the terminology "productivity" will be used within a context that will allow the reader to understand if the term refers to $\cZ^i$, $\cA^i$, or $A^i$.

We also introduce the following filtration {$\mathbb{G}:=(\mathcal{G}_t)_{t\in\RR_+}$} with $\cG_0 := \sigma(\cZ_0)$ and for $t>0$, $\mathcal{G}_t := \sigma\left(\left\{\cZ_0, B_s^{\cZ}: s\leq t\right\}\right)$. 

\begin{remark}[OU process]\label{ct-rem:VAR1}
We have the following results on OU that we will use later:
    \begin{enumerate} 
        \item According to~\cite{gobet2016perturbation}[Proposition 1], if one assumes that $\cZ_0$ and $B^{\cZ}$ are independent and $\cZ_0$ is square integrable, then, there exists a unique square integrable solution to the $I$-dimentional Ornstein-Uhlenbeck process~$\cZ$ satisfying $\dr\cZ_t = -\Gamma\cZ_t \dr t + \Sigma \dr B_t^{\cZ}$, represented as
        \begin{equation*}
            \cZ_t = e^{-\Gamma t} \left(\cZ_0 + \int_{0}^{t} e^{\Gamma u} \Sigma \dr B_u^{\cZ} \right),\quad\textrm{for all } t\in\RR_+.
        \end{equation*}
         Additionally, for any $t,h \geq 0$, the distribution of $\cZ_{t+h}$ conditional on $\cG_t$ is Gaussian~$\cN\left(M^{\cZ,h}_{t}, \Sigma^{\cZ,h}_{t}\right)$, with the mean vector
        \begin{equation}
            M^{\cZ,h}_{t} := \EE[\cZ_{t+h}|\cG_t] = e^{-\Gamma h} \cZ_t,
        \end{equation}
     and the covariance matrix
     \begin{equation}
         \Sigma^{\cZ,h}_{t} := \VV[\cZ_{t+h}|\cG_t] = \int_{0}^{h} e^{-\Gamma u} \Sigma \Sigma^\top e^{-\Gamma^\top u} \dr u.
     \end{equation}
     \item Since $-\Gamma$ is a Hurwitz matrix, then if we note $\lambda_\Gamma :=\max_{\lambda\in\lambda(\Gamma)} Re(\lambda)$, there exists $c_\Gamma>0$ so that {$| e^{-\Gamma t}| < c_\Gamma e^{-\lambda_\Gamma t}$} for all~$t\geq 0$. Therefore, according to~\cite{gobet2016perturbation}[Proposition 2], $\cZ$ has a unique stationary distribution which is Gaussian with mean $0$ and covariance $\int_{0}^{+\infty} e^{-\Gamma u} \Sigma \Sigma^\top e^{-\Gamma^\top u}\dr u$. 
    \item We show in~\cite{sopgoui2024realestate}[Appendix A] that for any $t, h \geq 0$, we have
        \begin{equation*}
            \cA_{t+h} = \cA_{t} + \int_{t}^{t+h} (\mu + \varsigma \cZ_s) \dr s = \mu h + \varsigma \int_{t}^{t+h} \cZ_s \dr s,
        \end{equation*}
        and conditionally on $\cG_t$, $\cA_{t+h}$ has an $I$-dimensional normal distribution with the mean vector $M^{\cA,h}_{t} :=  \mu h + \varsigma\Upsilon_{h}\cZ_t +  \cA_t$,
    with \begin{equation}\label{ct-eq:Upsilon}
    \Upsilon_{h} := \int_{0}^{h} e^{-\Gamma s} \dr s = \Gamma^{-1}(\Ir_I-e^{-\Gamma h}),
\end{equation}
     and the covariance matrix
     \begin{equation}\label{ct-eq:Ma_ht}
         \Sigma^{\cA,h}_{t} := \varsigma^2 \Gamma^{-1} \left(\int_{0}^{h} \left(e^{-\Gamma u} - \Ir_I \right) \Sigma\Sigma^\top \left(e^{-\Gamma u} - \Ir_I \right) \dr u \right) (\Gamma^{-1})^\top = \varsigma^2\int_{0}^{h} \Upsilon_{u} \Sigma\Sigma^\top \Upsilon_{u}^\top \dr u .
     \end{equation}
    \item For later use, we define $\cA^\circ_t := \mathcal{A}_t - \mathcal{A}_0$,
        and observe that $(\cA^\circ_t,\cZ_t)_{t \ge 0}$ is a Markov process.
    \end{enumerate}
\end{remark}

Firms emit GHG when they consume intermediary input from other sectors and when they produce output. Likewise, households emit GHG when they consume. All these emissions are charged through a carbon price dynamics. For the whole economy, we introduce a deterministic and exogenous carbon price in {euro per ton of
CO$_2$-equivalent}.
It allows us to model the impact of the transition pathways on the whole economy. 
We shall then assume the following setting. 
\begin{sassumption}\label{ct-sassc:price}
We introduce the carbon price and the carbon intensities (the quantity of GHG in tons emits for each unit of production/consumption) processes:
\begin{enumerate}
\item Let $0 \le t_\circ < t_\star$ be given. 
The complete carbon price sequence noted $\delta$ satisfies
{\begin{itemize}
    \item for $t \in [0;t_\circ]$, $\delta_t = \delta_0\in \RR_+$, namely the carbon price is constant;
    \item for $t \in (t_\circ,t_\star)$, $\delta_t \in \RR_+$, the carbon price may evolve;
    \item for $t \ge t_\star$, $\delta_t = \delta_{t_\star} \in \RR_+$, namely the carbon price is constant.
\end{itemize}}
\noindent We assume moreover that $t\mapsto \delta_t$ is $\mathfrak{D}(\RR_+, \RR_+)$.
\item We also introduce carbon intensities as the sequences $\tau$, $\zeta$, and $\kappa$ being respectively $\RR_+^I$, $\RR_+^{I\times I}$, and $\RR_+^I$-processes, and representing respectively carbon intensities on firm's output, on firm's intermediary consumption, and on household’s consumption, and satisfying for all $t\in\RR_+$ and \\ $\mathfrak{y} \in\{\tau^1, \hdots, \tau^I, \zeta^{11}, \zeta^{12},\hdots, \zeta^{I I-1}, \zeta^{II}, \kappa^1, \hdots, \kappa^I\}$,
\begin{equation}\label{ct-eq:ghg_intensity}
\mathfrak{y}_t = \left\{
\begin{array}{ll}
 & \mathfrak{y}_0 \exp{\left(g_{\mathfrak{y}, 0} \frac{1-\exp{(-\theta_\mathfrak{y} t)}}{\theta_\mathfrak{y}}\right)}\qquad \text{if } 0\leq t\leq t_\star \\
 & \mathfrak{y}_0 \exp{\left(g_{\mathfrak{y}, 0} \frac{1-\exp{(-\theta_\mathfrak{y} t_\star)}}{\theta_\mathfrak{y}}\right)}
\qquad \text{else},
\end{array}
\right.
\end{equation} 
with $\mathfrak{y}_0, g_{\mathfrak{y}, 0}, \theta_\mathfrak{y} > 0$. For each~$t\geq 0$, we call $\mathfrak{y}_t\delta_t$ the \textit{emissions cost rate} at time $t$.\\
\item  For each~$i\in\Jj$ and for each~$t\in\RR_+$,
\begin{equation}
     \delta_t \max_{i\in\Jj}{\tau^i_0} < 1.\label{ct-eq:prod_vs_emiss}
\end{equation}
\end{enumerate}
\end{sassumption}
{\begin{remark}
    At each time $t\geq 0$ and for $i\in\Jj$, 
    \begin{itemize}
        \item $\tau_t^i$ represents the carbon intensity of the firms production of sector~$i$ at time $t$;
        \item $\kappa_t^i$, the carbon intensity of the household final consumption in sector $i$ at time $t$,
        \item and for $j\in\Jj$, $\zeta^{ji}_t$, the carbon intensity of the firms of sector $i$ intermediate consumption in sector $j$ at time $t$.
    \end{itemize}
\end{remark}}
\noindent In the following, we will note for all $t\geq 0$,
\begin{equation}\label{ct-eq:emiss cost rate}
    \dd_t := (\delta_t\tau_t,\delta_t\zeta_t,\delta_t\kappa_t).
\end{equation}

\paragraph{An example of carbon price process}
We assume the regulator fixes $t_\circ\geq 0$ when the transition starts and the transition horizon time~$t_\star>t_\circ$, the carbon price at the beginning of the transition~$P_{carbon} > 0$, at the end of the transition~$\delta_{t_\star} > P_{carbon}$, and the annual growth rate~$\eta_\delta > 0$. 
Then, for all~$t\geq 0$,
 \begin{equation}\label{ct-eq:carbon price}
    \delta_{t}= \left\{
    \begin{array}{ll}
    \displaystyle P_{carbon}, & \mbox{if } t \leq t_\circ,\\
    \displaystyle P_{carbon} e^{\eta_\delta(t-t_\circ)}, & \mbox{if } t\in(t_\circ,t_\star],\\
    \displaystyle \delta_{t_\star} = P_{carbon} e^{\eta_\delta(t_\star-t_\circ)}, &  \mbox{otherwise}.
    \end{array}
\right.
\end{equation} 
 In the example above that will be used in the rest of this work, we assume that the carbon price increases. However, there are several scenarios that could be considered, including a carbon price that would increase until a certain year before leveling off or even decreasing. We also assume an unique carbon price for the entire economy whereas we could proceed differently. For example, the carbon price could increase for production when stabilize or disappear on households in order to avoid social movements and so on. The framework can be adapted to various sectors as well as scenarios.\\

In our framework, a representative firm in each sector which maximizes its profits by choosing, at each time and for a given productivity, the quantities of labor and intermediary inputs, while, a representative household solves a dynamic optimization problem to decide how to allocate its consumption expenditures among the different goods and hours worked and among the different sectors. We assume that the utility function~$U: \RR_+^*\times \RR_+^* \to \RR, (c, h) \mapsto \log{c} - \frac{h^{1+\varphi}}{1+\varphi}$ with~$\varphi \geq 0$; {$c$ and $h$ represents respectively the household consumption and the labor}. Moreover, $\llambda$ (respectively, $\psi$) are matrix in $(\RR_+^*)^{I\times I}$ (respectively vector in~$(\RR_+^*)^I$) of the elasticities of intermediary inputs (respectively labor). We also assume a constant return to scale, namely
\begin{align}\label{ct-eq const ret to scale}
\psi^i + \sum_{j \in \mathcal{I}} \llambda^{ji} = 1,
\qquad\text{for each }i \in \mathcal{I}.
\end{align}
 Since the productivity and the carbon price processes are continuous, the firms and households problems are well posed and their solutions exist. More details are given in~\cref{ct-app-sec:multisec}. In the following proposition, we give an explicit expression of the output.
\begin{proposition} \label{ct-cor:output_consc}
For 
$(\overline\tau, \overline\zeta, \overline\kappa, \overline\delta) \in \RR_+^I\times \RR_+^{I\times I} \times \RR_+^I \times \RR_+$, let us note
\begin{equation}\label{ct-eq:Psi_Lambda}
    \Psi(\overline\dd) := \left( \psi^i \frac{1 - \overline \tau^i\overline\delta}{1 + \overline \kappa^i\overline\delta}
      \right)_{i \in \Jj}\ \quad\text{and}\quad 
    \Lambda(\overline \dd) := \left( \llambda^{ji} \frac{1 - \overline \tau^{i}\overline\delta}{1+\overline \zeta_t^{ji}\overline\delta} \frac{1 + \overline \kappa^{j}\overline\delta}{1 + \overline \kappa^{i}\overline\delta}
    \right)_{j,i \in \Jj},
\end{equation}
with $\overline{\dd} := (\overline\delta\overline\tau, \overline\delta\overline\zeta, \overline\delta\overline\kappa)$. Assume that 
\begin{enumerate}
    \item $\Ir_I - \llambda$ is not singular,
    \item $\Ir_I - \Ll(\dd_t)^\top$ is not singular for all $t\in\RR_+$.
\end{enumerate}
Then, for all~$t\in\RR_+$, there exists an unique couple of consumption and output~$(\Cc_t,\Yy_t)$ solving the (dynamic stochastic) multisectoral model.
Moreover, for all $t\in\RR_+$.
\begin{enumerate}
    \item if $\ee_{t}^i := \frac{Y^i_t}{C^i_t}$ for $i\in\Jj$, we have 
\begin{align}
    \ee_{t} = \ee(\dd_t) := (\Ir_I-\Ll(\dd_t)^\top)^{-1} \bOne. \label{ct-eq:eec}
\end{align}
    \item For 
$(\overline\tau, \overline\zeta, \overline\kappa, \overline\delta) \in \RR_+^I\times \RR_+^{I\times I} \times \RR_+^I \times \RR_+$,
\begin{equation} 
    v^i(\overline \dd) := \log\left((\ee(\overline \dd)^{i})^{-\frac{\varphi\psi^i}{1+\varphi}}   \left(\Psi^i(\overline \dd) \right)^{\frac{\psi^i}{1+\varphi}} \prod_{j\in\Jj}  \left(\Lambda^{ji}(\overline \dd)\right)^{\llambda^{ji}} \right)_{i\in\Jj} + ((\Ir_I-\llambda)\log{(\ee(\overline\dd))}),\label{ct-eq:pricefuncc}
\end{equation}
We obtain
\begin{equation}
    \Yy_{t} = \exp{\left( (\Ir_I-\llambda)^{-1}{\left( \cA_{t} + v(\dd_t)\right)}\right)}.\label{ct-eq:consumptionc}
\end{equation} 
    \item Furthermore, since $\dd \in \mathfrak{D}(\RR_+, [0, 1)^I\times(\RR_+)^{I\times I} \times (\RR_+)^I)$, we directly have $\Psi(\overline \dd_{\cdot}), \Lambda(\overline \dd_{\cdot}) \in\mathfrak{D}(\RR_+,\RR)$. Moreover, $\bar\dd \mapsto (\Ir_I-\Ll(\bar\dd)^\top)^{-1}$ on $\RR_+^I\times \RR_+^{I\times I} \times \RR_+^I$) is differentiable, then $(\Ir_I-\Ll(\dd_\cdot)^\top)^{-1}) \in\mathfrak{D}(\RR_+,\RR)$.
\end{enumerate}
\end{proposition}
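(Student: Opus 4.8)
The plan is to notice that, at each fixed $t\in\RR_+$, the problem is a static Cobb--Douglas general-equilibrium problem with carbon charges, so the algebra reproduces the per-period computation of the discrete-time reference; the genuinely new content is the time-regularity of point~3. I would therefore first solve the firm and household programs pointwise in $t$, then close the model with market clearing, reduce everything to two linear solves (one for the ratios $\ee_t$, one for $\log\Yy_t$), and finally invoke the two non-singularity hypotheses to invert them. Measurability is not an issue: since $\dd$ is deterministic, the solution inherits $\GG$-adaptedness from $\cA$.

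First I would recall the sector-$i$ technology implied by the elasticities, namely $Y^i=\exp(\cA^i)(L^i)^{\psi^i}\prod_{j\in\Jj}(X^{ji})^{\llambda^{ji}}$, with $L^i$ labor and $X^{ji}$ the quantity of good $j$ used by sector $i$, and write the first-order conditions of profit maximization. With a producer price $p^i$, an output charge at rate $\tau^i_t\delta_t$ and an input charge $\zeta^{ji}_t\delta_t$, the marginal conditions give the conditional factor demands
\begin{equation*}
 w L^i = \psi^i\bigl(1-\tau^i_t\delta_t\bigr)\,p^i Y^i,\qquad
 p^j\bigl(1+\zeta^{ji}_t\delta_t\bigr)\,X^{ji} = \llambda^{ji}\bigl(1-\tau^i_t\delta_t\bigr)\,p^i Y^i,
\end{equation*}
and the constant-returns assumption~\eqref{ct-eq const ret to scale} makes each firm's profit vanish (unit-cost pricing). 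Symmetrically, maximizing $\log c - h^{1+\varphi}/(1+\varphi)$ under the budget constraint with the consumption charge $1+\kappa^i_t\delta_t$ yields the household demands $C^i$ and the labor supply, in which the effective consumer price of good $i$ is $p^i(1+\kappa^i_t\delta_t)$.

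Next I would impose good-market clearing $Y^i = C^i + \sum_{k}X^{ik}$ for each $i$. Substituting the input demands and using the household first-order conditions to eliminate the price ratios $p^k/p^i$ in favour of the consumption ratios, every appearance of the carbon charges must collapse exactly into the factors $\tfrac{1-\tau^i_t\delta_t}{1+\zeta^{ji}_t\delta_t}$ and $\tfrac{1+\kappa^j_t\delta_t}{1+\kappa^i_t\delta_t}$ defining $\Lambda(\dd_t)$ in~\eqref{ct-eq:Psi_Lambda}, so that the clearing conditions read $(\Ir_I-\Lambda(\dd_t)^\top)\ee_t=\bOne$ for $\ee^i_t=Y^i_t/C^i_t$; the second non-singularity hypothesis then inverts this to give~\eqref{ct-eq:eec}. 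This bookkeeping --- checking that the producer/consumer price wedges recombine into precisely $\Lambda$ \emph{with the correct orientation} (the transpose), rather than into some neighbouring expression --- is the step I expect to be the main obstacle. For the levels, the unit-cost conditions written in logarithms form a linear system in the log-prices whose matrix is $\Ir_I-\llambda$ (up to transposition); solving it expresses $\log p$ affinely in $\cA_t$ and the carbon wedges, and substituting back into the technology and the consumption demands collects the constant $v(\dd_t)$ of~\eqref{ct-eq:pricefuncc} and yields $(\Ir_I-\llambda)\log\Yy_t=\cA_t+v(\dd_t)$; the first hypothesis inverts it to give~\eqref{ct-eq:consumptionc}. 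Existence and uniqueness of $(\Cc_t,\Yy_t)$ follow from strict concavity of the two programs together with these two inversions.

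Finally, for point~3 I would argue by composition. The map $t\mapsto\dd_t$ is differentiable by Standing Assumption~\ref{ct-sassc:price}; the maps $\overline\dd\mapsto\Psi(\overline\dd),\Lambda(\overline\dd)$ are rational, hence differentiable wherever the denominators do not vanish, which is guaranteed by~\eqref{ct-eq:prod_vs_emiss}; and $M\mapsto M^{-1}$ is differentiable on the open set of invertible matrices (Cramer's rule). Under the second hypothesis the curve $t\mapsto\Ir_I-\Lambda(\dd_t)^\top$ stays in that open set, so $(\Ir_I-\Lambda(\dd_\cdot)^\top)^{-1}$, $\ee(\dd_\cdot)$ and $v(\dd_\cdot)$ all lie in $\mathfrak{D}(\RR_+,\cdot)$, whence $\Yy$ is differentiable by~\eqref{ct-eq:consumptionc}.
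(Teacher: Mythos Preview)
Your proposal is correct and follows essentially the same route as the paper's appendix (Propositions~\ref{ct-pr firm foc}--\ref{ct-prop:equilibrium} and Theorem~\ref{ct-eq:output_cons}): pointwise firm and household first-order conditions, market clearing, and two linear inversions under the stated non-singularity hypotheses. The only difference is cosmetic --- where you solve the unit-cost conditions for $\log p$ and then substitute back, the paper eliminates prices directly via the household relations \eqref{ct-eq:fH_first_order_1}--\eqref{ct-eq:sH_first_order_1} to obtain the system \eqref{ct-eq:Y_it_C_it} in $(C,Y)$ alone, reaching $(\Ir_I-\llambda)\log\Cc_t=\cB_t$ without ever isolating prices. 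One small imprecision: the firm's programme is constant-returns and hence not strictly concave, so its solution is determined only up to scale (this is exactly how Proposition~\ref{ct-pr firm foc} is stated); uniqueness of $(\Cc_t,\Yy_t)$ comes from labor-market clearing pinning down that scale together with the two matrix inversions, not from strict concavity on the firm side.
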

 The output $\Yy$ is also positive, we can then introduce, from the third item, for all $t\geq 0$, $\dr\log{(\Yy_{t})}$ representing the instantaneous consumption growth. This proposition is an equivalent of \cite{bouveret2023propagation}[Theorem 1.10.] in continuous time. In the same spirit, we can also define and determine 
\begin{itemize}
    \item consumption, intermediary inputs, labor, and GHG emissions at each time as well as their logarithmic growth between two instants. 
    \item The sensitivities of the growth of variables to the carbon price. This allows us to answer the following types of questions: \textit{what is the impact on all variables if in the whole economy, we only charge GHG emissions coming from the production of companies in sector $1$?} The main difficulty here comes from the fact that the coefficients of the inverse of the elasticity matrix~$(\llambda^{ji})_{i,j\in\Jj}$ are not known. 
    \item The law of the logarithmic growth of macroeconomic variables that we obtain by using the Gaussianity and stationarity of logarithmic growth. 
    \item the evolution of the contribution of each sector in production/consumption as well as the evolution of GHG emissions.
\end{itemize}

\subsection{A Firm Valuation Model}
{The goal in this section is to describe the dynamics of the value of a firm operating in an economy that undergoes the climate transition described by both the carbon price and carbon intensities.}\\

Consider a firm $n\in\OneN$. For any time $t\in\RR_+$ and firm~$n$, we note~$F^n_{t}$ the free cash flows of~$n$ at $t$, and $r>0$ the discount rate, we introduce the following assumption:

\begin{assumption}\label{ct-ass:link}
The $\RR$-valued process on the instantaneous growth of the cash flows of firm~$n$ denoted by~$\dr \log{F^n_{t,\dd}}$  is linear in the economic factors (output growth by sector), 
 specifically we set for all $t\in\RR_+$,
\begin{equation} \label{ct-eq:CF_vs_GDPGrowth with conso}
    \dr \log{F^n_{t,\dd}} = \tilde{\af}^{n\cdot} \dr\log{\Yy_{t}}+ \sigma_{n} \dr \cW^n_t = \af^{n\cdot} (\dr \cA_t + \dr v(\dd_t)) + \sigma_{n} \dr \cW^n_t,
\end{equation}
for $\tilde{\af}^{n\cdot}\in\RR^{I}$ and $\af^{n\cdot} = \tilde{\af}^{n\cdot}(\Ir_I-\llambda)^{-1}$, where $(\cW^n_t)_{t\in\RR_+}$ is a $\RR^{N}$-Brownian motion with $\sigma_{n} > 0$. 
Moreover, $B^{\cZ}$ and $\cW^n$ are independent.
\end{assumption}
\noindent We call $\tilde{\mathfrak{a}}^{n\cdot} = (\tilde{\mathfrak{a}}^{ni})_{1\leq i\leq I}$ (respectively $\mathfrak{a}^{n\cdot} = (\mathfrak{a}^{ni})_{1\leq i\leq I}$)  \textit{factor loadings}, quantifying the extent to which~$\dr \log{F^n_{,\dd}}$ is related to~$\dr\log{\Yy}$ (respectively $\dr\cA$). We define the filtration~$\mathbb{F}=(\cF_t)_{t\geq 0}$ by $\cF_{t} = \sigma\left(\cG_t \cup  \sigma\left\{ \bb_{s}: s\in[0, t]\right\}\right)$ for $t\geq 0$, and we denote~$\EE_{t}[\cdot] := \EE[\cdot|\cF_{t}]$.

Recall that the economic motivation behind~\eqref{ct-eq:CF_vs_GDPGrowth with conso} comes from the fact that if firm~$n$ belongs to sector~$i$, then its production is proportional to the sectoral output and its cash flows are proportional to its production (as in the Dechow-Kothari-Watts model in~\cite{barth2001accruals}). Thus, we obtain a relation between the cash flows of firm~$n$ and the total output of sector~$i$. 
The assumption $\tilde{\mathfrak{a}}^{n\cdot}\in\RR^I$ stems from the fact that a company is not restricted to one sector only in general.
However, since we are considering the emission sector here, 
we expect that each firm~$n$ only belongs to one sector 
($i$ for example).
Therefore $\mathfrak{a}^{n j} = 0$ for all~$i\neq j$
and hence $|\mathfrak{a}^{n i}| = \max_{j\in\Jj} |\mathfrak{a}^{n j}|$.

Let $r\geq 0$ representing the interest rate, by the continuous form of the discounted cash flows valuation, the value~$V^{n}_{t,\dd}$ of the firm~$n$, at time~$t$, is 
\begin{equation}
    V^n_{t,\dd} := \EE_{t}\left[\int_{t}^{+\infty} e^{-r (s-t)} F^n_{s,\dd} \dr s \right].\label{ct-eq:DCF}
\end{equation}
For $n\in\OneN$ and $t\geq 0$, describing $V^n_{t,\dd}$ as a function of the underlying processes driving the economy
does not lead to an easily tractable formula. But this allows us to write it as a fixed-point problem which can be solved by numerical methods such as deep learning methods (see~\cite{hammad2022new, kobeissi2022temporal}) or Picard iteration (see~\cite{berinde2007iterative}). However,
to facilitate the forthcoming credit risk analysis, when $\varsigma$ (introduced in \cref{ct-sassump:OU}) is close to $0$, we approach~$V^n_{t,\dd}$ by the quantity
\begin{equation}\label{ct-eq de cV}
    \cV_{t,\dd}^n := F^n_{t,\dd} \int_{t}^{+\infty} e^{-r(s-t)}\EE_{t}\left[ \exp \left( (s-t)\af^{n\cdot} \mu
    +\af^{n\cdot}\left(v(\dd_{s})-v(\dd_t)\right)
    + \sigma_{n} (\cW^n_s - \cW^n_t)
    \right)\dr s\right],
\end{equation}
which is defined using the first order expansion of $\frac{V^n_{t,\dd} }{F^n_{t,\dd} }$ and that we describe as a proxy the firm $n$ value at time $t$. We will work directly with $\cV^n_{t,\dd}$ instead of $V^n_{t,\dd}$. We have the following proposition, whose proof is given in~\ref{ct-proof:lem:approx firm value}.

\begin{proposition}\label{ct-lem:approx firm value}
    For any $n \in \OneN$ and for all $t \in \RR_+$.
    \begin{enumerate} 
        \item Assume that $\varrho_n := \frac{1}{2}\sigma_{n}^2 + \af^{n\cdot}\mu -r<0$,
    then $\cV_{t,\dd}^n$ is well defined and
    \begin{align}\label{ct-eq expression of cV}
        \cV^n_{t,\dd} = F^n_0 \mathfrak{R}^n_t(\dd) \exp{\left(\af^{n\cdot}(\cA_t-v(\dd_{0}))\right)}\exp\left(\sigma_{n}\cW^n_t\right),
    \end{align}
where 
    \begin{equation}
     \mathfrak{R}^n_t(\dd) := \int_{0}^{\infty} e^{\varrho_n s}\exp{\left(\af^{n\cdot} v(\dd_{t+s})  \right)} \dr s
     \label{ct-eq:FV_second_term}.
    \end{equation}
        \item Moreover, with $t_\circ$ and $t_\star$ defined in \cref{ct-sassc:price}, we obtain the following explicit form,
    \begin{equation}
    \mathfrak{R}^n_t(\dd) = \left\{
    \begin{array}{ll}
    \displaystyle -\frac{e^{\af^{n\cdot} v(\dd_{t_\star})}}{\varrho_n}, & \mbox{if } t \geq t_\star,\\
    \displaystyle \int_{0}^{t_\star-t} e^{\varrho_n s}\exp{\left(\af^{n\cdot} v(\dd_{t+s})  \right)}\dr s - \frac{e^{\af^{n\cdot} v(\dd_{t_\star}) + \varrho_n (t_\star-t)}}{\varrho_n}, & \mbox{if } 0 \leq t < t_\star,.
    \end{array}
\right.
\end{equation}   
\item Assume that 
        {\begin{align} \label{ct_eq:main technical ass}
    \rho_n := \frac{1}{2}\sigma_{n}^2+\af^{n\cdot}\mu+\frac{1}{2}  \varsigma^2\frac{c_\Gamma^2}{\lambda_\Gamma^2}|\af^{n\cdot}|^2 |\Sigma|^2 < r,
\end{align}}
therefore $V_{t,\dd}^n$ is well defined and there exists a constant~$C$ such that $\EE \left[\left|\frac{V^n_{t,\dd}}{F^n_{t,\dd}} - \frac{\cV^n_{t,\dd}}{F^n_{t,\dd}}\right|\right] \le C \varsigma$, for all~$\varsigma > 0$.
    \end{enumerate}
\end{proposition}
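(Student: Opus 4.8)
The three items all exploit the explicit log-normal structure coming from \cref{ct-ass:link} together with the OU formulas of \cref{ct-rem:VAR1}. For item~1, the plan is to integrate the dynamics~\eqref{ct-eq:CF_vs_GDPGrowth with conso} from $0$ to $t$, giving $F^n_{t,\dd}=F^n_{0,\dd}\exp(\af^{n\cdot}(\cA_t-\cA_0)+\af^{n\cdot}(v(\dd_t)-v(\dd_0))+\sigma_n\cW^n_t)$. Inside the definition~\eqref{ct-eq de cV} the deterministic factor $\af^{n\cdot}(v(\dd_s)-v(\dd_t))$ pulls out, and the only $\cF_t$-non-measurable term is the increment $\sigma_n(\cW^n_s-\cW^n_t)$, which is independent of $\cF_t$, so $\EE_t[\exp(\sigma_n(\cW^n_s-\cW^n_t))]=\exp(\tfrac12\sigma_n^2(s-t))$. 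Substituting $u=s-t$, the integrand becomes $e^{\varrho_n u}\exp(\af^{n\cdot}v(\dd_{t+u}))$ up to the prefactor $\exp(-\af^{n\cdot}v(\dd_t))$; recombining with $F^n_{t,\dd}$ and absorbing $e^{-\af^{n\cdot}\cA_0}$ into $F^n_0$ produces~\eqref{ct-eq expression of cV} with $\mathfrak{R}^n_t(\dd)$ as in~\eqref{ct-eq:FV_second_term}. The integral defining $\mathfrak{R}^n_t(\dd)$ converges because $v(\dd_\cdot)$ is continuous and constant for arguments $\ge t_\star$ (\cref{ct-sassc:price}, \cref{ct-cor:output_consc}), hence bounded, so $\varrho_n<0$ yields exponential integrability.

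Item~2 is then a direct evaluation of~\eqref{ct-eq:FV_second_term}: for $t\ge t_\star$ the price is frozen ($\dd_{t+s}=\dd_{t_\star}$ for all $s\ge0$), so the integrand is constant and the integral equals $-e^{\af^{n\cdot}v(\dd_{t_\star})}/\varrho_n$; for $0\le t<t_\star$ I would split at $s=t_\star-t$, keeping the first piece and integrating the frozen tail on $[t_\star-t,\infty)$ explicitly to $-e^{\af^{n\cdot}v(\dd_{t_\star})+\varrho_n(t_\star-t)}/\varrho_n$. For the well-posedness part of item~3, I would compute $\EE_t[F^n_{s,\dd}/F^n_{t,\dd}]$ using \cref{ct-rem:VAR1}(3) and the independence of $\cW^n$ and $B^\cZ$; it is the exponential of an affine-in-$\cZ_t$ term plus $\tfrac12\af^{n\cdot}\Sigma^{\cA,s-t}_t(\af^{n\cdot})^\top+\tfrac12\sigma_n^2(s-t)$, and the bound $|\Upsilon_u|\le c_\Gamma/\lambda_\Gamma$ from \cref{ct-rem:VAR1}(2) turns $\Sigma^{\cA,h}_t=\varsigma^2\int_0^h\Upsilon_u\Sigma\Sigma^\top\Upsilon_u^\top\dr u$ into the linear bound $\varsigma^2 h\,c_\Gamma^2\lambda_\Gamma^{-2}|\af^{n\cdot}|^2|\Sigma|^2$. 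Hence the integrand is $O(e^{(\rho_n-r)(s-t)})$ and $\rho_n<r$ gives finiteness of $V^n_{t,\dd}$.

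For the error bound I would write $V^n_{t,\dd}/F^n_{t,\dd}=\int_t^\infty e^{-r(s-t)}\EE_t[e^{X_{t,s}}]\dr s$ and $\cV^n_{t,\dd}/F^n_{t,\dd}=\int_t^\infty e^{-r(s-t)}\EE_t[e^{Y_{t,s}}]\dr s$, where the exponents differ only in that $\af^{n\cdot}(\cA_s-\cA_t)$ in $X_{t,s}$ is replaced by its $\varsigma=0$ drift $(s-t)\af^{n\cdot}\mu$ in $Y_{t,s}$, so that $X_{t,s}-Y_{t,s}=\varsigma\,\af^{n\cdot}\int_t^s\cZ_u\dr u$. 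After the triangle inequality, Tonelli, the tower property, and $|e^a-e^b|\le|a-b|(e^a+e^b)$, it suffices to bound $\int_t^\infty e^{-r(s-t)}\big(\EE[|X_{t,s}-Y_{t,s}|e^{X_{t,s}}]+\EE[|X_{t,s}-Y_{t,s}|e^{Y_{t,s}}]\big)\dr s$ by $C\varsigma$.

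The main obstacle --- and the reason the exact constant in~\eqref{ct_eq:main technical ass} appears --- is to control $\EE[|X_{t,s}-Y_{t,s}|\,e^{X_{t,s}}]$ with exponential rate \emph{exactly} $\rho_n$, since a naive $L^2$ Cauchy--Schwarz would double the variance rate and could violate $\rho_n<r$. I would instead condition on the full OU filtration $\cG_\infty:=\sigma(\cZ_0,(B^\cZ_s)_{s\ge0})$: given $\cG_\infty$ the increment $X_{t,s}-Y_{t,s}$ is deterministic and the only remaining randomness in $e^{X_{t,s}}$ is the Brownian factor, contributing $e^{\frac12\sigma_n^2(s-t)}$ at the first power (no doubling). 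This reduces the estimate to $\EE[|W|e^{\varsigma W}]$ for the centred Gaussian $W:=\af^{n\cdot}\int_t^s\cZ_u\dr u$, which by completing the square equals $e^{\frac12\varsigma^2\sigma_W^2}\EE[|W+\varsigma\sigma_W^2|]$ with $\sigma_W^2\le(s-t)c_\Gamma^2\lambda_\Gamma^{-2}|\af^{n\cdot}|^2|\Sigma|^2+\mathrm{const}$. The factor $e^{\frac12\varsigma^2\sigma_W^2}$ supplies precisely the term $\tfrac12\varsigma^2 c_\Gamma^2\lambda_\Gamma^{-2}|\af^{n\cdot}|^2|\Sigma|^2$ of $\rho_n$, the prefactor $e^{(\frac12\sigma_n^2+\af^{n\cdot}\mu)(s-t)}$ the remaining part, and $\EE[|W+\varsigma\sigma_W^2|]=O(\sqrt{s-t})+\varsigma\,O(s-t)$ contributes only polynomial growth together with the explicit factor $\varsigma$. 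The integrand is thus bounded by $\varsigma\cdot\mathrm{const}\cdot(s-t)\,e^{(\rho_n-r)(s-t)}$, whose integral over $[t,\infty)$ is finite because $\rho_n<r$; the $e^{Y_{t,s}}$ term is handled identically with the even smaller rate $\tfrac12\sigma_n^2+\af^{n\cdot}\mu$, yielding $\EE[|V^n_{t,\dd}/F^n_{t,\dd}-\cV^n_{t,\dd}/F^n_{t,\dd}|]\le C\varsigma$.
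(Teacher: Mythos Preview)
Your argument for items~1 and~2 matches the paper's proof essentially step for step: integrate the cash-flow dynamics, separate the $\cG$-measurable part from the Brownian increment, substitute $u=s-t$, and for item~2 split the integral at $s=t_\star-t$ where the carbon price freezes. Nothing to add.

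For item~3 you diverge from the paper in two useful ways. First, for well-posedness of $V^n_{t,\dd}$ the paper bounds $\lVert V^{n,T}_{t,\dd}\rVert_1$ via H\"older, splitting off $\lVert F^n_{t,\dd}\rVert_q$ and bounding the remaining $L^p$-norm of $\exp(\varsigma\af^{n\cdot}\Upsilon_{s-t}\cZ_t)$ using stationarity; you instead bound the ratio $\EE_t[F^n_{s,\dd}/F^n_{t,\dd}]$ directly, which is cleaner since the $F^n_{t,\dd}$-factor cancels and no H\"older exponent appears. Second, for the approximation error the paper only writes ``similar methods must be used'', whereas you give a full argument. Your key device---conditioning on $\cG_\infty=\sigma(\cZ_0,B^\cZ)$ so that the Brownian factor $e^{\sigma_n(\cW^n_s-\cW^n_t)}$ contributes $e^{\frac12\sigma_n^2(s-t)}$ at the \emph{first} power rather than the doubled rate a naive Cauchy--Schwarz would produce---is exactly what is needed to land on the constant $\rho_n$ of~\eqref{ct_eq:main technical ass} and not a larger one. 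The decomposition $W=\af^{n\cdot}\Upsilon_{s-t}\cZ_t+\af^{n\cdot}\int_t^s\Upsilon_{s-v}\Sigma\,\dr B^\cZ_v$ indeed yields $\sigma_W^2\le c_\Gamma^2\lambda_\Gamma^{-2}|\af^{n\cdot}|^2|\Sigma|^2(s-t)+\text{const}$, so the completing-the-square step recovers the third term of $\rho_n$ precisely; and since $0<\varsigma\le1$ by \cref{ct-sassump:OU}, the $\varsigma^2$ contribution from $\EE[|W+\varsigma\sigma_W^2|]$ is dominated by $\varsigma$, giving a constant $C$ uniform over $\varsigma\in(0,1]$.
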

{In $(iii)$ of \cref{ct-lem:approx firm value}, $\EE \left[\left|\frac{V^n_{t,\dd}}{F^n_{t,\dd}} - \frac{\cV^n_{t,\dd}}{F^n_{t,\dd}}\right|\right] \le C \varsigma$ means that when the noise term in the productivity process is small, the terms $\frac{\cV^n_t}{F^n_t}$ and $\frac{V^n_t}{F^n_t}$ become closer.} The following corollary gives (conditional) laws of the (proxy) of the firm value~$\cV_{\cdot,\dd}^n$.
\begin{corollary}\label{ct-cor:law Vt} 
For all $t, T\geq 0$.
\begin{enumerate}
    \item We note $\mathfrak{m}^n(\dd,t,\cA_{t}^\circ) :=  \log{(F^n_0 \mathfrak{R}^n_t(\dd))} +  \left(\af^{n\cdot}(\cA_t^\circ -v(\dd_{0}))\right)$ and we have
    \begin{equation}
        \log{\cV_{t,\dd}^n}|\cG_t \sim \cN\left(\mathfrak{m}^n(\dd,t,\cA_{t}^\circ), t\sigma_{n}^2 \right).
    \end{equation} 
    
    \item We note $\cK^n(\dd, t, T, \cA_t^\circ, \cZ_t) := \log{(F^n_0 \mathfrak{R}^n_{t+T}(\dd))} + \af^{n\cdot}(\mu T + \varsigma\Upsilon_{T}\cZ_t + \cA_t^\circ-v(\dd_{0}))$ and $\mathcal{L}^n(t,T) := \af^{n\cdot}\Sigma^{\cA, T}_{t}\af^{n\cdot} + (t+T) \sigma_{n}^2$, we have 
    \begin{equation}
       \log{\cV_{t+T,\dd}^n}|\cG_t \sim \cN\left(\cK^n(\dd, t, T, \cA_t^\circ, \cZ_t), \mathcal{L}^n(t,T)\right).
    \end{equation}
\end{enumerate}
\end{corollary}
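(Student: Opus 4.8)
The plan is to read off the conditional laws directly from the closed form of the proxy firm value established in \cref{ct-lem:approx firm value}, splitting $\log\cV^n_{\cdot,\dd}$ into a $\cG$-measurable part and an independent Gaussian part. Taking logarithms in \eqref{ct-eq expression of cV} gives, for any $s\ge 0$,
\[
\log\cV^n_{s,\dd} = \log\bigl(F^n_0\,\mathfrak{R}^n_s(\dd)\bigr) + \af^{n\cdot}\bigl(\cA_s - v(\dd_0)\bigr) + \sigma_n\,\cW^n_s,
\]
which is valid under the standing hypothesis $\varrho_n<0$ of \cref{ct-lem:approx firm value} ensuring $\cV^n_{\cdot,\dd}$ is well defined. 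Throughout I identify $\cA_s$ with $\cA_s^\circ$, i.e. I absorb the deterministic constant $\af^{n\cdot}\cA_0$ into $F^n_0$ (equivalently work under $\cA_0=0$, natural since $\cA$ is a cumulative growth), so that the $\cG$-measurable part matches the stated mean functions $\mathfrak{m}^n$ and $\cK^n$.

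For item~(i) I would take $s=t$ and condition on $\cG_t$. Since the carbon-price/intensity path $\dd$, hence $v(\dd_\cdot)$, is deterministic, the factor $\mathfrak{R}^n_t(\dd)$ defined in \eqref{ct-eq:FV_second_term} is deterministic and therefore trivially $\cG_t$-measurable; moreover $\cA_t^\circ = \mu t + \varsigma\int_0^t \cZ_s\,\dr s$ is $\cG_t$-measurable because $(\cZ_s)_{s\le t}$ is. Hence $\log(F^n_0\mathfrak{R}^n_t(\dd)) + \af^{n\cdot}(\cA_t^\circ - v(\dd_0)) = \mathfrak{m}^n(\dd,t,\cA_t^\circ)$ is constant given $\cG_t$. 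The only residual randomness is $\sigma_n\cW^n_t$; by \cref{ct-ass:link} the Brownian motion $\cW^n$ is independent of $B^\cZ$ and of $\cZ_0$, hence of $\cG_t$, and $\cW^n_t\sim\cN(0,t)$. Conditioning on the coarser filtration $\cG$ (rather than $\cF$, which already carries $\cW^n$) is precisely what leaves this Gaussian term, giving $\log\cV^n_{t,\dd}\mid\cG_t\sim\cN(\mathfrak{m}^n(\dd,t,\cA_t^\circ),\,t\sigma_n^2)$.

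For item~(ii) I would take $s=t+T$ and again condition on $\cG_t$, writing
\[
\log\cV^n_{t+T,\dd} = \log\bigl(F^n_0\mathfrak{R}^n_{t+T}(\dd)\bigr) - \af^{n\cdot}v(\dd_0) + \af^{n\cdot}\cA_{t+T} + \sigma_n\cW^n_{t+T}.
\]
Now $\cA_{t+T}$ is no longer $\cG_t$-measurable, so I invoke item (iii) of \cref{ct-rem:VAR1}: conditionally on $\cG_t$, $\cA_{t+T}$ is Gaussian with mean $M^{\cA,T}_t=\mu T+\varsigma\Upsilon_T\cZ_t+\cA_t$ and covariance $\Sigma^{\cA,T}_t$. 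The term $\sigma_n\cW^n_{t+T}$ stays independent of $\cG_t$ with law $\cN(0,(t+T)\sigma_n^2)$ and, given $\cG_t$, is independent of $\cA_{t+T}$, because the residual randomness in $\cA_{t+T}$ comes only from the increments $(B^\cZ_s-B^\cZ_t)_{t<s\le t+T}$, which are independent of $\cW^n$. The sum of two conditionally independent Gaussians is Gaussian, with mean $\log(F^n_0\mathfrak{R}^n_{t+T}(\dd)) + \af^{n\cdot}(\mu T+\varsigma\Upsilon_T\cZ_t+\cA_t^\circ - v(\dd_0)) = \cK^n(\dd,t,T,\cA_t^\circ,\cZ_t)$ and variance $\af^{n\cdot}\Sigma^{\cA,T}_t\af^{n\cdot} + (t+T)\sigma_n^2 = \mathcal{L}^n(t,T)$, as claimed.

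The computation is essentially routine; the one step needing genuine care is the conditional independence of $\af^{n\cdot}\cA_{t+T}$ and $\sigma_n\cW^n_{t+T}$ given $\cG_t$, which is what makes the conditional variance the \emph{sum} of the two individual variances with no cross term. This rests entirely on the independence of $\cW^n$ and $B^\cZ$ postulated in \cref{ct-ass:link}. The only remaining subtlety is the $\cA_0$-bookkeeping that aligns $\cA_{t+T}$ in \eqref{ct-eq expression of cV} with the $\cA_t^\circ$ appearing in $\mathfrak{m}^n$ and $\cK^n$.
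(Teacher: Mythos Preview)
Your proof is correct and follows essentially the same route as the paper: the corollary is stated without a dedicated proof, but the identical computation is spelled out for the collateral analogue in the proof of \cref{ct-prop: collateral value}, namely taking logarithms in \eqref{ct-eq expression of cV}, using $\cG_t$-measurability of $\cA_t^\circ$ and determinacy of $\mathfrak{R}^n_t(\dd)$, then invoking \cref{ct-rem:VAR1}(iii) for the conditional law of $\cA_{t+T}$ and the independence of $\cW^n$ from $B^\cZ$. Your flagging of the $\cA_t$ versus $\cA_t^\circ$ discrepancy in \eqref{ct-eq expression of cV} is accurate; later uses in the paper (e.g.\ \cref{subsec:risk model} and the proof of \cref{ct-pr cond loss and pd}) confirm that $\cA_t^\circ$ is intended there, so your bookkeeping resolution is exactly the right reading.
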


\subsection{A Credit Risk Model without collateral}\label{subsec:risk model}
{In this section, we aim to calculate the probability of default of the previous firm (i.e. operating in an economy that undergoes the climate transition described by both the carbon price and carbon intensities).}\\

As~\cite{basel2017ead}, we introduce four credit risk parameters:
the probability of default (PD) measures the default risk associated with each borrower,
the exposure at default (EAD) measures the outstanding debt at the time of default,
the loss given default (LGD) denotes the expected percentage of EAD that is lost if the debtor defaults, 
and the effective maturity~$T$ represents the duration of the credit.
With these four parameters, we can compute the portfolio loss~$L$, with some assumptions:
\begin{assumption}\label{ct-ass:portfolio}
Consider a portfolio of $N\in\mathbb{N}^*$ credits. 
For $1 \leq n \leq N$, 
\begin{enumerate}[(1)]
    \item Firm~$n$ has issued two classes of securities: equity and debt.
        \item \label{item:EAD} $(\EAD_t^n)_{t\geq 0}$ is a $\RR_+^*$-valued continuous and  deterministic process, and for all~$t\geq 0$,
         the family $(\EAD_t^n)_{n=1,\ldots, N}$ is a sequence of positive constants such that
    \begin{enumerate}
        \item $\displaystyle\sum_{n\geq 1} \EAD_t^n = +\infty$;
        \item there exists $\upsilon > 0$ such that $\frac{\EAD_t^n}{\sum_{n=1}^{N} \EAD_t^n} = \mathcal{O}(N^{-(\frac{1}{2}+\upsilon)})$, as~$N$ tends to infinity.
    \end{enumerate}
    \item $(\LGD_{t}^n)_{t\geq 0}$ is a $(0, 1]$-valued continuous and deterministic process; \label{item:LGD assump}
    \item $(\cD^n_t)_{t\geq 0}$ is a $\RR_+$-valued continuous and deterministic process, representing the debt of firm~$n$ at time~$t$. We will also denote $D^n_t := \displaystyle \frac{\cD_t^n}{\EE[F_{t,0}^n]}$ representing the debt to cash flows ratio.
    \item The value of the firm~$n$ at time~$t$ is assumed to be a tradable asset given by~$V^n_{t,\dd}$ defined in~\eqref{ct-eq expression of cV}.
\end{enumerate}
\end{assumption}
According to~\cite{kruschwitz2020stochastic}, there are two ways to handle the default of a company: for a given financing policy, a levered firm is
\begin{itemize}
    \item \textit{in danger of illiquidity} if the cash flows do not suffice to fulfill the creditors’ payment claims (interest and net redemption) as contracted,
    \item \textit{over-indebted} if the market value of debt exceeds the firm’s market value.
\end{itemize}

We follow in the present work the second definition of default proposed: a firm default when it is \textit{over-indebted}, that is in fact the same approach used in the structural credit risk models. We retain this definition in this continuous setting in the same way as~\cite{bouveret2023propagation}. Therefore, the \textit{over-indebtedness} of entity~$n$ occurs at time~$t$ when the firm value~$\cV^n_{t,\dd}$ falls below a given barrier~$\cD_t^n$, related to the net debt, namely on the event~$\left\{\cV^n_{t,\dd} < \cD^n_t\right\}$. 

However, it should be noted that in a continuous time setting, it can be interesting  to work in the~\cite{black1976valuing} model. Here, the default event depends on the trajectory
of the firm value process~$\cV$. Therefore, at a given time $t$, the firm defaults if it  has been \textit{over-indebted} at least one time during the period $[0, t]$, that is~$\left\{\exists s\in[0,t]\text{  such that 
 } \cV^n_{s,\dd} < \cD^n_s\right\}$.
Thus, the default time is given by $\tau^n := \inf\left\{t\geq 0,\quad \cV^n_{t,\dd} < \cD^n_t \right\}$.

\noindent Then, if we are interested in the probability of the firm~$n$ defaulting before $t$ conditionally to~$\cG_t$ that is noted~$PD^n_{t,\dd}$, we have
\begin{align*}
    PD^n_{t,T,\dd} = \PP\left(\tau^n\leq t | \cG_t \right) = \PP\left(\inf_{0\leq s\leq t}  \cV^n_{s,\dd} < \cD^n_s\middle| \cG_t \right) = \PP\left(\inf_{0\leq s\leq t}  \log{\cV^n_{s,\dd}} < \log{\cD^n_s}\middle| \cG_t \right).
\end{align*} 
    But for $0\leq s\leq t$ and from~\eqref{ct-eq expression of cV},
    \begin{equation*}
        \log{\cV_{s,\dd}^n} = \log{\left(F_{0,\dd}^n \mathfrak{R}^n_s(\dd)\right)} + \af^{n\cdot} (\cA_s^\circ-v(\dd_0)) + \sigma_{n}\cW^n_s,
    \end{equation*}
    therefore $\log{\cV^n_{\cdot,\dd}}$ is a Gaussian process. However, as~\cite{azais2000distribution} summarizes, the computation of the distribution function of the random variable $\inf_{0\leq s\leq t}  \log{\cV^n_{s,\dd}}$ is by means of a closed formula is known only for a very restricted number of stochastic processes as the Brownian Motion, the Brownian Bridge, the Brownian Motion with a linear drift, and the stationary Gaussian processes with relatively simple with covariance. This is not the case here. {Consequently, we extend \cite{merton1974model} model instead of \cite{black1976valuing} model.}
    
    At each time~$t\geq 0$, we are interested in the probability that firm $n$ is \textit{over-indebted} at a certain date $t+T$, we note $\PD^n_{t,T,\dd}$ and we have
    \begin{equation}\label{ct-eq:def PD}
        \PD^n_{t,T,\dd} := \mathbb{P} \left(\cV^n_{t+T,\dd}\leq \cD^n_{t+T}\middle|\cG_t \right).
    \end{equation}
\noindent We have the following proposition whose proof is a direct application of Corollary~\ref{ct-cor:law Vt}.
\begin{proposition}[Probability of default]\label{ct-pr cond pd} 
For $t\geq 0$, $T\geq 0$, and $n \in \OneN$,
the (conditional) probability of default of the entity~$n$ at time~$t$ over the horizon~$T$ is
\begin{equation}
    \PD^n_{t,T,\dd}  = \Phi\left(\frac{\log(\cD_{t+T}^n) -\cK^n(\dd, t, T, \cA_t^\circ, \cZ_t)}{\sqrt{\mathcal{L}^n(t,T)}}\right),\label{ct-eq: PD t T invest}
\end{equation}
where $\cK^n(\dd, t, T, a,\theta)$ as well as $\mathcal{L}^n(t,T)$ are defined in Corollary~\ref{ct-cor:law Vt} {and $\Phi$ the Gaussian cumulative distribution function.}
\end{proposition}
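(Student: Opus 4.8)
The plan is to reduce the computation directly to the conditional Gaussian law established in Corollary~\ref{ct-cor:law Vt}(2), exactly as announced before the statement. Starting from the definition~\eqref{ct-eq:def PD}, I would first rewrite the over-indebtedness event $\left\{\cV^n_{t+T,\dd}\leq \cD^n_{t+T}\right\}$ in logarithmic form. Since $\cV^n_{t+T,\dd}$ is almost surely strictly positive --- it is of the log-normal form~\eqref{ct-eq expression of cV} --- and $\cD^n_{t+T}>0$, the map $x\mapsto\log x$ is increasing on the relevant range, so that
\begin{equation*}
\left\{\cV^n_{t+T,\dd}\leq \cD^n_{t+T}\right\}=\left\{\log\cV^n_{t+T,\dd}\leq \log\cD^n_{t+T}\right\},
\end{equation*}
and the two events carry the same conditional probability given $\cG_t$.

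Next I would invoke Corollary~\ref{ct-cor:law Vt}(2), which gives that, conditionally on $\cG_t$, the variable $\log\cV^n_{t+T,\dd}$ is Gaussian with mean $\cK^n(\dd,t,T,\cA_t^\circ,\cZ_t)$ and variance $\mathcal{L}^n(t,T)$. The key point is that both $\cK^n(\dd,t,T,\cA_t^\circ,\cZ_t)$ (through its arguments $\cA_t^\circ$ and $\cZ_t$) and $\mathcal{L}^n(t,T)$ are $\cG_t$-measurable, hence act as constants under $\PP(\,\cdot\mid\cG_t)$. Standardizing, the variable $Z:=\bigl(\log\cV^n_{t+T,\dd}-\cK^n(\dd,t,T,\cA_t^\circ,\cZ_t)\bigr)/\sqrt{\mathcal{L}^n(t,T)}$ is, conditionally on $\cG_t$, a standard normal, so
\begin{equation*}
\PD^n_{t,T,\dd}=\PP\Bigl(Z\leq \tfrac{\log\cD^n_{t+T}-\cK^n(\dd,t,T,\cA_t^\circ,\cZ_t)}{\sqrt{\mathcal{L}^n(t,T)}}\,\Bigm|\,\cG_t\Bigr)=\Phi\Bigl(\tfrac{\log(\cD^n_{t+T})-\cK^n(\dd,t,T,\cA_t^\circ,\cZ_t)}{\sqrt{\mathcal{L}^n(t,T)}}\Bigr),
\end{equation*}
which is precisely~\eqref{ct-eq: PD t T invest}.

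There is essentially no hard step here; the statement is a routine corollary of the conditional law. The only points deserving a line of care are (i) the positivity of $\cV^n_{t+T,\dd}$, which justifies the passage to logarithms and the equality of events above, and (ii) the $\cG_t$-measurability of $\cK^n$ and $\mathcal{L}^n$, which legitimizes treating them as deterministic inside the conditional probability and therefore expressing the answer through the (unconditional) standard-normal cumulative distribution function $\Phi$. A degenerate case worth a brief remark is $\cD^n_{t+T}=0$, permitted since $(\cD^n_t)$ is only assumed $\RR_+$-valued: the default event then has conditional probability zero, which remains consistent with the formula under the convention $\Phi(-\infty)=0$.
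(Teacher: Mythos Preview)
Your proof is correct and follows exactly the approach indicated in the paper, which states that the result is a direct application of Corollary~\ref{ct-cor:law Vt}. Your additional remarks on the positivity of $\cV^n_{t+T,\dd}$, the $\cG_t$-measurability of the parameters, and the degenerate case $\cD^n_{t+T}=0$ are all valid refinements of what the paper leaves implicit.
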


\noindent The previous results tell us that the probability of the \textit{over-indebtedness} depends on parameters specific to:
\begin{enumerate}
    \item the climate transition
    \begin{itemize}
        \item the carbon price~$\delta$,
        \item the carbon intensities~$\tau, \zeta, \kappa$,
    \end{itemize}
    \item to the company (the contract),
    \begin{itemize}
        \item the factors loading $\af^{n\cdot}$ and the standard deviation of the cash flows~$\sigma_n$,
        \item the time~$t$ when it is computed,
        \item the potential date of the \textit{over-indebtedness}~$t+T$,
        \item the \textit{over-indebtedness}'s barrier~$\cD$,
    \end{itemize}
    \item the economy  to which the company belongs to:
    \begin{itemize}
        \item the productivity $\cZ$ and $\cA$ (and their parameters) of the economy,
        \item the interest rate~$r$.
    \end{itemize}
\end{enumerate}
 {\begin{remark}
     Motivated by the uncertainty of the climate transition, a natural extension is to introduce a stochastic carbon price (see, for example, \cite{le2025corporate}). Regardless of whether it is correlated with the productivity process, the expressions of the macroeconomic variables (for example, the output in \cref{ct-cor:output_consc}) remain unchanged; however, their dynamics do differ. Consequently, primarily due to the form of $v$ defined in \eqref{ct-eq:pricefuncc} -- whose law is a product and quotient of Gaussian random variables --, we could no longer obtain a closed-form expression for the firm value -- because filtration $\GG$ would not only be adapted to productivity $\Theta$, but also to the carbon price $\delta$ -- and then for the probability of default. Numerical methods are therefore required.
 \end{remark}}

{In the next section, we will define and express LGD as a function of some guarantees which are affected by the climate transition. Therefore, by assuming that EAD are deterministic and independent of the carbon price, we will also introduce the expected and unexpected losses. }

\section{LGD with stochastic collaterals in continuous time}\label{ct-sec:LGD}

We are in the same framework as in the previous section, but \cref{item:LGD assump} of \cref{ct-ass:portfolio} is not satisfied anymore, therefore the bank could require from each counterpart~$1\leq n\leq N$ a (single) collateral~$C^n$ to secure its debt. Collateral can  take the form of a \textit{physical asset} such as real estate, business equipment, or inventory, or it can be a \textit{financial asset} such as invoices, cash, or investments. If a firm is \textit{over-indebted} at time~$t$, we assume that the liquidation ends at~$t+a$ with $a\in\RR_+$ where $a$ is the \textit{liquidation delay}. Moreover, $k\in[0, 1)$ represents the fraction of the collateral used to cover liquidations auctions, as well as other legal and administrative procedures. \\

A firm is \textit{over-indebted} at time~$t\geq 0$ if the market value of its debt~$\cD^n_t$ exceed its market value~$\cV^n_{t,\dd}$, namely~$\{\cV^n_{t,\dd}<\cD^n_t\}$. At time $t$, if the company $n$ in the portfolio defaults i.e. $\cV^n_{t,\dd}<\cD^n_t~$, the bank recovers $(1-k)e^{-r a}C^n_{t+a}$ after the collateral liquidation. In general, the liquidations do not cover all the debt, i.e. $\EAD_{t}^n\geq (1-k)e^{-r a}C^n_{t+a}$, the bank deploys further actions to recover an additional fraction. We note that fraction~$\gamma\in[0, 1)$. Therefore, the bank recovers~$\gamma (\EAD_{t}^n-(1-k)e^{-r a}C^n_{t+a})_{+}$ by other tools.\\

The potential loss that would be recorded due to the firm default event is the difference between the debt amount $\EAD_t^n$ and the amount received after the recovery processes on the time horizon. Consequently, if there is not default ($\cV^n_{t,\dd}\geq \cD^n_t$) or there is default and the collateral liquidated exceed the exposure ($(1-k)e^{-r a}C^n_{t+a}\geq \EAD_t^n$), the loss is zero, and if there is default and if the exposure exceed the collateral liquidated, the loss is
\begin{equation}
\EAD_t^n - (1-k)e^{-r a}C^n_{t+a} - \gamma (\EAD_t^n - (1-k)e^{-r a}C^n_{t+a}) = (1-\gamma)\left(\EAD_t^n-(1-k)e^{-r a}C^n_{t+a}\right),
\end{equation}
where the constant~$r$ is the discount rate.  The loss, $L^N_{t}$, of the portfolio at time~$t$, is in fact, defined as
\begin{equation}
L^N_{t} := \sum_{n=1}^{N} (1-\gamma)(\EAD_{t}^n-(1-k)e^{-r a}C^n_{t+a})_{+}\cdot \bOne_{\{ \cV^n_{t,\dd}<\cD^n_t\}} \label{ct-def:ptloss}.
\end{equation}

The following result is similar with the one introduced in~\cite{bouveret2023propagation}[Theorem 3.5]. It gives a proxy of the loss of the portfolio.

\begin{theorem}[Definition of LGD] \label{ct-theo:gordy2003}
    For all~$t\in\RR_+$, define 
    \begin{equation}\label{ct-eq:def loss}
         \mathrm{L}^{\GG,N}_{t} := \EE\left[L^N_{t}\middle|\cG_t\right] =\sum_{n=1}^{N} \EAD_t^n \cdot \LGD_{t,\dd}^n \cdot \PD_{t,\dd}^n, 
    \end{equation}
    where
\begin{subequations}
\begin{align}
\displaystyle \PD_{t,\dd}^n &:= \PP\left( \cV^n_{t,\dd}<\cD^n_t|\cG_t \right) = \Phi\left( \frac{\log(\cD_{t}^n) -\mathfrak{m}^n(\dd,t,\cA_{t})}{\sigma^n\sqrt{t}}\right),\\
    \displaystyle \LGD_{t,\dd}^n &:= (1-\gamma) \EE\left[ \left(1 - (1-k)e^{-r a} \frac{C^n_{t+a}}{\EAD_{t}^n}\right)_{+}\middle|\cV^n_{t,\dd}<\cD^n_t,\cG_t\right].\label{ct-eq:def LGD}
\end{align}
\end{subequations}
Under Assumptions~\ref{ct-ass:portfolio}, we have $L^N_{t}- \mathrm{L}^{\GG,N}_{t}$ converges to zero almost surely as~$N$ tends to infinity, for all $t \in \RR_+$.
\end{theorem}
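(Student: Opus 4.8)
The plan is to read \eqref{ct-eq:def loss} as a conditional strong law of large numbers. Writing $\xi^n := (1-\gamma)\big(\EAD_t^n-(1-k)e^{-ra}C^n_{t+a}\big)_{+}\bOne_{\{\cV^n_{t,\dd}<\cD^n_t\}}$ for the contribution of loan $n$ to $L^N_t$, the target difference is $L^N_t-\mathrm{L}^{\GG,N}_t=\sum_{n=1}^N\big(\xi^n-\EE[\xi^n\mid\cG_t]\big)$, a sum of $\cG_t$-conditionally centred terms, and the assertion is that the idiosyncratic fluctuations diversify away. First I would dispose of the two displayed identities, which are purely definitional. Linearity of conditional expectation gives $\mathrm{L}^{\GG,N}_t=\sum_n\EE[\xi^n\mid\cG_t]$; factoring $\EAD_t^n$ out of the positive part through $(\EAD_t^n-x)_+=\EAD_t^n(1-x/\EAD_t^n)_+$ and using $\EE[Z\bOne_A\mid\cG_t]=\PP(A\mid\cG_t)\,\EE[Z\mid A,\cG_t]$ with $A=\{\cV^n_{t,\dd}<\cD^n_t\}$ yields exactly $\EE[\xi^n\mid\cG_t]=\EAD_t^n\,\PD^n_{t,\dd}\,\LGD^n_{t,\dd}$ from the definitions in \eqref{ct-eq:def LGD}; the closed form of $\PD^n_{t,\dd}$ is then read off from Corollary~\ref{ct-cor:law Vt}(i), since $\log\cV^n_{t,\dd}\mid\cG_t\sim\cN(\mathfrak{m}^n(\dd,t,\cA_t),t\sigma_n^2)$.

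The substance is the almost-sure convergence, and I would feed two structural facts into a limit theorem. Boundedness: since $C^n_{t+a}\ge 0$ and $\gamma,k\in[0,1)$, one has $0\le\xi^n\le(1-\gamma)\EAD_t^n\le\EAD_t^n$, so that $\mathrm{Var}(\xi^n\mid\cG_t)\le(\EAD_t^n)^2$ with a \emph{deterministic} bound. Conditional independence: conditional on $\cG_t$, the firm values $\cV^n_{t,\dd}$ are independent across $n$, because their only non-$\cG_t$-measurable ingredient is the idiosyncratic term $\sigma_n\cW^n_t$, the Brownian motions $\cW^1,\dots,\cW^N$ being mutually independent and independent of $B^{\cZ}$; the same must be secured for the collaterals $C^n_{t+a}$, which is where the idiosyncratic structure of the collateral models of \cref{ct-sec:col fin asset} enters. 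Granting these, I would fix a realisation of $\cG_t$, set $S_N:=\sum_{n=1}^N\EAD_t^n$, and apply Kolmogorov's variance criterion for the strong law with independent, non-identically distributed summands: the granularity condition (item~(2)(b) of Assumption~\ref{ct-ass:portfolio}) gives $\EAD_t^n/S_n=\mathcal{O}\!\big(n^{-(1/2+\upsilon)}\big)$, whence $\sum_n \mathrm{Var}(\xi^n\mid\cG_t)/S_n^2\le\sum_n(\EAD_t^n/S_n)^2$ is dominated by $\sum_n n^{-(1+2\upsilon)}<\infty$, while item~(2)(a) gives $S_N\to\infty$. Kolmogorov's theorem then yields $S_N^{-1}\sum_{n=1}^N\big(\xi^n-\EE[\xi^n\mid\cG_t]\big)\to 0$ almost surely, conditionally on $\cG_t$ and hence unconditionally after integrating over $\cG_t$; this is the asserted convergence, in the exposure-normalised form dictated by the granularity condition.

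The main obstacle is the conditional independence of the liquidated collaterals. Because $C^n_{t+a}$ is observed at the horizon $t+a$, it a priori carries systematic risk accumulated on $(t,t+a]$ that is not $\cG_t$-measurable and is \emph{shared} across obligors, which would destroy conditional independence given $\cG_t$. Making the step rigorous therefore requires either that this common future systematic component be absent (for instance $a=0$, or the collateral's systematic driver reduces to a $\cG_t$-measurable functional) or that one enlarges the conditioning to the systematic information up to $t+a$ and verifies that the definitions of $\PD^n_{t,\dd}$ and $\LGD^n_{t,\dd}$ are unchanged; once independence is in place, the rest is the routine Kolmogorov estimate above. A secondary point to handle cleanly is that Kolmogorov's criterion is invoked pathwise in $\cG_t$ but against the deterministic variance bound $(\EAD_t^n)^2$, so the exceptional null set may be chosen uniformly in the realisation and the conditional almost-sure statement upgrades to an unconditional one.
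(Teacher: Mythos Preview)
Your approach is essentially the paper's: the decomposition of $\EE[L^N_t\mid\cG_t]$ into $\sum_n \EAD_t^n\cdot\LGD^n_{t,\dd}\cdot\PD^n_{t,\dd}$ is carried out identically, and for the almost-sure convergence the paper simply writes ``The rest of the proof requires a version of the strong law of large numbers (Appendix of~\cite[Propositions~1,~2]{gordy2003risk}), where the systematic risk factor is~$\cG_t$.'' What you have done is unpack that citation into the Kolmogorov variance criterion using the granularity bounds of Assumption~\ref{ct-ass:portfolio}(2), which is exactly the mechanism behind Gordy's result, and you correctly recover the exposure-normalised statement $S_N^{-1}(L^N_t-\mathrm{L}^{\GG,N}_t)\to 0$ that the granularity hypotheses are designed for.

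Your flagged obstacle about conditional independence of the collaterals when $a>0$ is well taken and is not addressed by the paper's one-line appeal to Gordy either: for the financial-asset collateral of \cref{ct-sec:col fin asset}, $\cC^n_{t+a,\dd}$ carries the common factor $\cA^\circ_{t+a}$, which is not $\cG_t$-measurable, so the $\xi^n$ are \emph{not} conditionally independent given $\cG_t$ unless one either takes $a=0$ or conditions on $\cG_{t+a}$. The paper implicitly leans on the Gordy framework with ``systematic factor $\cG_t$'' and does not discuss this point; your proposal is therefore at least as careful as the paper's own argument, and your suggested fix (enlarge the conditioning to $\cG_{t+a}$, or restrict to $a=0$ as in Theorems~\ref{ct-theorem: LGD t invest} and~\ref{ct-theorem: LGD t real estate}) is the natural one.
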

\begin{proof} Let~$t\in\RR_+$,
    \begin{equation*}
    \begin{split}
        \mathrm{L}^{\GG,N}_{t} &= \EE\left[L^N_{t}\middle|\cG_t\right]= \EE\left[\sum_{n=1}^{N}  (1-\gamma)(\EAD_{t}^n-(1-k)e^{-r a}C^n_{t+a})_{+}\cdot \bOne_{\{ \cV^n_{t,\dd}<\cD^n_t\}}\middle|\cG_t\right].
    \end{split}
    \end{equation*}
    $(\EAD_{t}^n)_{n\in\OneN}$ is deterministic, we have:
    \begin{equation*}
        \begin{split}
        \mathrm{L}^{\GG,N}_{t} &= \sum_{n=1}^{N} \EAD_{t}^n\cdot \EE\left[ (1-\gamma)\left(1-(1-k)e^{-ra}\frac{C^n_{t+a}}{\EAD_{t}^n}\right)_{+}\cdot \bOne_{\{ \cV^n_{t,\dd}<\cD^n_t\}}\middle|\cG_t\right]\\
        &= \sum_{n=1}^{N} \EAD_{t}^n\cdot \EE\left[ (1-\gamma)\left(1-(1-k)e^{-r a}\frac{C^n_{t+a}}{\EAD_{t}^n}\right)_{+}\middle|\cV^n_{t,\dd}<\cD^n_t,\cG_t\right] \cdot \PP\left[ \cV^n_{t,\dd}<\cD^n_t\middle|\cG_t\right].\\
        \end{split}
    \end{equation*}
    The rest of the proof requires a version of the strong law of large numbers (Appendix of~\cite[Propositions~1, 2]{gordy2003risk}), where the systematic risk factor is~$\cG_t$.
\end{proof}
Explicitly, in the above theorem, we assume that our portfolio is perfectly fine grained, so that we can approximate $L_t^N$ -- the portfolio loss -- by $L_t^{\GG, N}$ -- the conditional expectation of loss given the systemic factor. By construction, the loss given default noted $\LGD$ is the percentage of the total exposure that the bank loses when a \textit{over-indebtedness} occurs. The literature on LGD modeling is fairly extensive. We can distinguish namely, economic modeling~\cite{bastos2010forecasting, roncalli2020handbook} and stochastic modeling~\cite{roncalli2020handbook}[Page 193], \cite{chalupka2008modelling}. As the definition of $\PD$ does not change compare to what we did in \cref{ct_sec_cont_time} and as $\EAD$ is given, we will focus on $\LGD$ modeling. We can first remark that $0\leq \LGD_{t,\dd}^n\leq 1-\gamma$, then the presence of a collateral necessarily reduces $\LGD$.

Other key quantities for the bank to understand the (dynamics of the) risk in the portfolio are the (expected and unexpected) losses and probability of default conditionally to the (information generated by the) risk factors. Precisely, for a date $t$ and a horizon~$T$, a bank computes some risk measures at~$t$ of its portfolio maturing at horizon~$T$. 
    
\begin{definition}[Projected losses]
Let $t  \ge 0$ be the time when the risk measure is computed for a period $T\geq 0$. As classically done, the potential loss is separated into three components:
    \begin{itemize}
        \item The (conditional) expected loss (EL) -- the amount that a bank expects to lose
on a credit exposure  -- reads
        \begin{equation}
        \EL^{N,T}_t := \EE\left[\mathrm{L}^{\GG,N}_{t+T}\middle|\cG_t\right]\label{ct-eq:el}.
        \end{equation}
        \item The unexpected loss (UL) -- the amount by which potential credit losses
might exceed the EL -- reads
        for $\alpha \in (0, 1)$,
        \begin{equation} \UL^{N,T}_{t}\!(\alpha):= \VaR^{\alpha, N,T}_t - \EL^{N,T}_{t}\label{ct-eq:ul},
        \quad\text{where} \quad
            1-\alpha = \PP\left[L_{t+T}^{\GG,N} \leq \VaR^{\alpha, N,T}_t\middle|\cG_t \right],
        \end{equation}
        and the scalar $\VaR^{\alpha, N,T}_t$ called value-at-risk.
        \item The stressed loss (or expected shortfall or ES) -- the amount by which potential credit
losses might exceed the capital requirement 
 -- reads:
        \begin{equation} \ES^{N,T}_{t}(\alpha):= \EE\left[L^N_{t+T} \middle| L^N_{t+T} \geq \VaR^{\alpha, N,T}_t, \cG_t\right],
        \qquad\text{for }\alpha \in (0,1). \label{ct-eq:sl}\end{equation}
    \end{itemize}
    \end{definition}

\noindent{Knowing PD, LGD, and EAD is enough to determine the losses. We assume EAD is given (\cref{item:EAD} in \cref{ct-ass:portfolio}). PD is given by \eqref{ct-eq: PD t T invest} in \cref{subsec:risk model}. We will now determine LGD when a collateral exists.} We focus on two types: a financial asset and a property in housing market. We introduce the two following lemmas which will help later on to set up explicit formulas for LGD.
Let $n\in\{1,\hdots,N\}$.
\begin{lemma}\label{ct-lemma: LGD t}
Assume that a stochastic process $K^n$ satisfies for all~$t\in\RR_+$,
\begin{enumerate}
    \item $\log{K^n}_t|\cG_t \sim \cN(m_t^n, (\sigma^n_t)^2)$ with $m^n_t\in\RR$ and $\sigma^n_t>0$,
    \item and $K^n_t|\cG_t$ and $\cV^n_{t,\dd}|\cG_t$ are independent.
\end{enumerate}
Therefore, for $t,u\in\RR_+$, 
    \begin{small}
    \begin{equation}\label{ct-eq: LGD t lemma}
        \EE\left[ \left(u - (1-k) \frac{K^n_{t}}{\EAD_{t}^n}\right)_{+}\middle|\cV^n_{t,\dd}<\cD^n_t,\cG_t\right] = u\Phi\left(\frac{w_t^n}{\sigma^n_t} \right) - \exp{\left(-w_t^n + \frac{1}{2}(\sigma^n_t)^2 \right)}  \Phi\left(\frac{w_t^n}{\sigma^n_t} - \sigma^n_t \right),
    \end{equation}
    \end{small}
    where
    \begin{equation}\label{ct-eq: num LGD lemma}
        w_t^n := \log{\left(u\frac{\EAD_t^n}{1-k} \right)} - m_t^n.
    \end{equation}
\end{lemma}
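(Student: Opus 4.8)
The plan is to use the conditional independence in the second hypothesis to strip away the default-event conditioning, and then to reduce the statement to a textbook truncated-lognormal expectation under the $\cG_t$-conditional Gaussian law of $\log K^n_t$ supplied by the first hypothesis.

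First I would discard the conditioning on the default event $\{\cV^n_{t,\dd}<\cD^n_t\}$. Fix $t$ and work conditionally on $\cG_t$; since $\log\cV^n_{t,\dd}\mid\cG_t$ is Gaussian by Corollary~\ref{ct-cor:law Vt}, the event has positive $\cG_t$-conditional probability and the conditional expectation is well defined. For any nonnegative measurable $g$, the conditional independence of $K^n_t\mid\cG_t$ and $\cV^n_{t,\dd}\mid\cG_t$ gives
\begin{equation*}
\EE\!\left[g(K^n_t)\,\bOne_{\{\cV^n_{t,\dd}<\cD^n_t\}}\middle|\cG_t\right]
=\EE\!\left[g(K^n_t)\middle|\cG_t\right]\,\PP\!\left(\cV^n_{t,\dd}<\cD^n_t\middle|\cG_t\right),
\end{equation*}
so that after dividing by $\PP(\cV^n_{t,\dd}<\cD^n_t\mid\cG_t)$ one obtains $\EE[g(K^n_t)\mid\cV^n_{t,\dd}<\cD^n_t,\cG_t]=\EE[g(K^n_t)\mid\cG_t]$. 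Taking $g(x)=\bigl(u-(1-k)x/\EAD_t^n\bigr)_+$ turns the left-hand side of~\eqref{ct-eq: LGD t lemma} into the single-variable quantity $\EE[(u-(1-k)K^n_t/\EAD_t^n)_+\mid\cG_t]$.

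Next I would evaluate this explicitly. With $X:=\log K^n_t$, which is $\cN(m_t^n,(\sigma_t^n)^2)$ given $\cG_t$, the integrand is strictly positive precisely on $\{X<\log(u\,\EAD_t^n/(1-k))\}$, a half-line whose standardised endpoint is $w_t^n/\sigma_t^n$ with $w_t^n$ as in~\eqref{ct-eq: num LGD lemma}. Splitting the integral into its two summands, the constant term $u$ integrated over the half-line gives $u\,\Phi(w_t^n/\sigma_t^n)$, while the term carrying $e^X$ is a truncated exponential-Gaussian integral that I would evaluate by completing the square (equivalently, a unit mean-shift $z\mapsto z-\sigma_t^n$ of the standard Gaussian). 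This yields a prefactor $e^{m_t^n+\frac12(\sigma_t^n)^2}$ times the shifted tail $\Phi(w_t^n/\sigma_t^n-\sigma_t^n)$.

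The only genuine work is the bookkeeping in this last step: rewriting the prefactor of the second term through the definition of $w_t^n$ in~\eqref{ct-eq: num LGD lemma} to produce the factor $\exp(-w_t^n+\frac12(\sigma_t^n)^2)$, after which the two pieces assemble into the right-hand side of~\eqref{ct-eq: LGD t lemma}. I do not expect a real obstacle: the first hypothesis provides exactly the Gaussian law that makes the integral closed-form, and the second hypothesis is precisely what collapses the bivariate conditioning to this one-dimensional lognormal computation; the remainder is standard error-function algebra.
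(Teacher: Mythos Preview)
Your proposal is correct and follows essentially the same route as the paper's own proof: first invoke the conditional independence to drop the conditioning on $\{\cV^n_{t,\dd}<\cD^n_t\}$, then split $(u-(1-k)K^n_t/\EAD_t^n)_+$ into the probability term $u\,\PP(K^n_t\le u\,\EAD_t^n/(1-k)\mid\cG_t)$ and the truncated-lognormal term, and evaluate both via the $\cG_t$-conditional Gaussian law of $\log K^n_t$ by completing the square. The only minor difference is that you justify the removal of the default conditioning via a generic factorisation $\EE[g(K^n_t)\bOne_{\{\cdot\}}\mid\cG_t]=\EE[g(K^n_t)\mid\cG_t]\,\PP(\cdot\mid\cG_t)$, whereas the paper states the independence step in one line; substantively the arguments are identical.
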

\begin{proof} Let $t,u\in\RR_+$,
    we have
    \begin{align*}
         &\EE\left[ \left(u - (1-k) \frac{K^n_{t}}{\EAD_{t}^n}\right)_{+}\middle|\cV^n_{t,\dd}<\cD^n_t,\cG_t\right]\\
        &\qquad= \EE\left[ \left(u - (1-k) \frac{K^n_{t}}{\EAD_{t}^n}\right)_{+}\middle|\cG_t\right]\quad\text{because}\quad K^n_{t}|\cG_t\text{  and  } \cV_{t,\dd}^n|\cG_t\text{ are independent}\\
        &\qquad= \EE\left[ \left(u - (1-k) \frac{K^n_{t}}{\EAD_t^n}\right) \bOne_{\{u - (1-k) \frac{K^n_{t}}{\EAD_t^n} \geq 0\}} \middle|\cG_t\right]\\
        &\qquad= u\PP\left(u - (1-k) \frac{K^n_{t}}{\EAD_t^n} \geq 0\middle|\cG_t \right) -\frac{(1-k)}{\EAD_t^n}  \EE\left[K^n_{t} \bOne_{\{u - (1-k) \frac{K^n_{t}}{\EAD_t^n} \geq 0\}} \middle|\cG_t\right].
    \end{align*}
    However,
    $\log{K^n_t}|\cG_t \sim \cN(m_t^n, (\sigma^n_t)^2)$. We also consider $w_t^n$ defined in~\eqref{ct-eq: num LGD lemma}, therefore
    \begin{equation*}
        \PP\left(u - (1-k) \frac{K^n_{t}}{\EAD_t^n} \geq 0\middle|\cG_t \right) = \Phi\left(\frac{w_t^n}{\sigma^n_t} \right).
    \end{equation*}
    We also have
    \begin{equation*}
        \EE\left[ K^n_{t} \bOne_{\{u - (1-k) \frac{K^n_{t}}{\EAD_t^n} \geq 0\}} \middle|\cG_t\right] = \exp{\left(-w_t^n + \frac{1}{2} (\sigma^n_t)^2 \right)}  \Phi\left(\frac{w_t^n}{\sigma^n_t} - \sigma^n_t \right).
    \end{equation*}
    The conclusion follows.
\end{proof}

\begin{lemma}\label{ct-lemma cond loss and pd}
Assume that a stochastic process $K^n$ satisfies, for each $t,T \in\RR_+$, 
\begin{enumerate}
    \item $\log{K^n}_{t+T}|\cG_t \sim \cN(m_{t,T}^n, (\sigma^n_{t,T})^2)$ with $m^n_{t,T}\in\RR$ and $\sigma^n_{t,T}>0$,
    \item and $\begin{bmatrix}
\log{\cV_{t+T}^n}\\
\log{K_{t+T+a}^n}
\end{bmatrix} |\cG_t \sim \cN\left(\begin{bmatrix}
\cK^n(\dd, t, T, \cA_t^\circ, \cZ_t)\\
\overline{\cK}^n_{t, T+a}
\end{bmatrix},
\begin{bmatrix}
\mathcal{L}^n(t,T)&cv_{t,T,a}^n\\
cv_{t,T,a}^n&\overline{\mathcal{L}}_{t,T+a}^n
\end{bmatrix} \right)$, where $\overline{\mathcal{L}}_{t,T+a}^n > 0$, $cv_{t,T,a}^n, \overline{\cK}^n_{t, T+a}\in\RR$, and $\cK^n(\dd, t, T, \cA_t^\circ, \cZ_t)$ as well as $\mathcal{L}^n(t,T)$ are defined in Corollary~\ref{ct-cor:law Vt}.
\end{enumerate}
Therefore, for $t,T,u \in\RR_+$, we have
\begin{small}
    \begin{equation}\label{ct-eq: LGD t T lemma}
        \begin{split}
            &\EE\left[ \left(1-(1-k)e^{-r a}\frac{\cC^n_{t+T+a,\dd}}{\EAD_{t+T}^n}\right)_{+}\cdot \bOne_{\{ \cV^n_{t+T,\dd}<\cD^n_{t+T}\}}\middle|\cG_t\right]\\
            &\qquad= u \Phi_2\left(\overline\omega_{t, T, a}^n,\Phi^{-1}(\PD^n_{t,T,\dd});\rho_{t, T, a}^n\right)- \exp{\left(\frac{1}{2}\overline{\mathcal{L}}_{t,T+a}^n-\sqrt{\overline{\mathcal{L}}_{t,T+a}^n}\overline\omega_{t, T, a}^n\right)} \times\\
            &\qquad\qquad\qquad\Phi_2\left(\overline\omega_{t, T, a}^n-\sqrt{\overline{\mathcal{L}}_{t,T+a}^n},\Phi^{-1}(\PD^n_{t,T,\dd})-\rho_{t, T, a}^n\sqrt{\overline{\mathcal{L}}_{t,T+a}^n};\rho_{t, T, a}^n\right),
        \end{split}
    \end{equation}
    \end{small}
    where $\PD^n_{t,T,\dd}$ is defined in~\eqref{ct-eq: PD t T invest} and where
    \begin{equation*}
       \rho_{t, T, a}^n := \frac{cv_{t,T,a}^n}{\sqrt{\mathcal{L}^n(t,T)\overline{\mathcal{L}}^n_{t,T+a}}},\quad
    \text{and}\quad
        \overline\omega_{t, T, a}^n :=\frac{\log{\left(u\frac{\EAD_{t+T}^n}{(1-k)e^{-r a}}\right)}- \overline{\cK}^n_{\dd, t, T+a}}{\sqrt{\overline{\mathcal{L}}^n_{t,T+a}}}.
    \end{equation*}
    \end{lemma}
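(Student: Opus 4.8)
The plan is to recognise the left-hand side as a two–dimensional Gaussian integral and to reduce it to the two $\Phi_2$ terms by the same splitting used in \cref{ct-lemma: LGD t}, the only difference being that here the default indicator $\bOne_{\{\cV^n_{t+T,\dd}<\cD^n_{t+T}\}}$ can no longer be factored out (there is no independence), so that the univariate $\Phi$'s of \cref{ct-lemma: LGD t} become bivariate $\Phi_2$'s. Write $X:=\log\cV^n_{t+T,\dd}$ and $Y:=\log\cC^n_{t+T+a,\dd}=\log K^n_{t+T+a}$; by assumption~(2), conditionally on $\cG_t$ the pair $(X,Y)$ is bivariate Gaussian with means $\cK^n(\dd,t,T,\cA_t^\circ,\cZ_t)$ and $\overline{\cK}^n_{t,T+a}$, variances $\mathcal L^n(t,T)$ and $\overline{\mathcal L}^n_{t,T+a}$, and correlation $\rho^n_{t,T,a}$. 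The positivity constraint $u-(1-k)e^{-ra}\,e^{Y}/\EAD^n_{t+T}\ge 0$ is exactly $\{Y\le \ell\}$ with $\ell:=\log\big(u\,\EAD^n_{t+T}/((1-k)e^{-ra})\big)$, while the default event is $\{X<\log\cD^n_{t+T}\}$. Expanding the positive part on this region and using linearity of the conditional expectation splits the quantity into
\[
u\,\PP\big(X<\log\cD^n_{t+T},\,Y\le \ell\,\big|\,\cG_t\big)\;-\;\frac{(1-k)e^{-ra}}{\EAD^n_{t+T}}\,\EE\big[e^{Y}\bOne_{\{X<\log\cD^n_{t+T},\,Y\le \ell\}}\,\big|\,\cG_t\big].
\]

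For the first term I would standardise each marginal: $\big(\log\cD^n_{t+T}-\cK^n(\dd,t,T,\cA_t^\circ,\cZ_t)\big)/\sqrt{\mathcal L^n(t,T)}=\Phi^{-1}(\PD^n_{t,T,\dd})$ by \cref{ct-pr cond pd}, while $\big(\ell-\overline{\cK}^n_{t,T+a}\big)/\sqrt{\overline{\mathcal L}^n_{t,T+a}}=\overline\omega^n_{t,T,a}$ by definition. Hence the first term equals $u\,\Phi_2(\overline\omega^n_{t,T,a},\Phi^{-1}(\PD^n_{t,T,\dd});\rho^n_{t,T,a})$, the order of the two arguments being immaterial since $\Phi_2(x,y;\rho)=\Phi_2(y,x;\rho)$.

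The crux is the exponentially weighted second term, which I would treat by Gaussian tilting (completing the square in the bivariate density). Factoring out $\EE[e^Y\mid\cG_t]=\exp(\overline{\cK}^n_{t,T+a}+\tfrac12\overline{\mathcal L}^n_{t,T+a})$, the law reweighted by $e^Y$ keeps $(X,Y)$ bivariate Gaussian with unchanged covariance but with each mean shifted by its covariance with $Y$: $\overline{\cK}^n_{t,T+a}\mapsto\overline{\cK}^n_{t,T+a}+\overline{\mathcal L}^n_{t,T+a}$ and $\cK^n\mapsto\cK^n+cv^n_{t,T,a}$. Standardising the two thresholds under the tilted law therefore shifts the $Y$-argument by $-\sqrt{\overline{\mathcal L}^n_{t,T+a}}$ and the $X$-argument by $-cv^n_{t,T,a}/\sqrt{\mathcal L^n(t,T)}=-\rho^n_{t,T,a}\sqrt{\overline{\mathcal L}^n_{t,T+a}}$, producing the factor $\Phi_2\big(\overline\omega^n_{t,T,a}-\sqrt{\overline{\mathcal L}^n_{t,T+a}},\,\Phi^{-1}(\PD^n_{t,T,\dd})-\rho^n_{t,T,a}\sqrt{\overline{\mathcal L}^n_{t,T+a}};\rho^n_{t,T,a}\big)$. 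Finally, using $\sqrt{\overline{\mathcal L}^n_{t,T+a}}\,\overline\omega^n_{t,T,a}=\ell-\overline{\cK}^n_{t,T+a}$ to re-express the log-normal prefactor $\frac{(1-k)e^{-ra}}{\EAD^n_{t+T}}\exp(\overline{\cK}^n_{t,T+a}+\tfrac12\overline{\mathcal L}^n_{t,T+a})$ collapses it to $\exp\big(\tfrac12\overline{\mathcal L}^n_{t,T+a}-\sqrt{\overline{\mathcal L}^n_{t,T+a}}\,\overline\omega^n_{t,T,a}\big)$, which gives the stated identity.

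I expect the main obstacle to be the bookkeeping in this last step: one must track the two-dimensional mean shift correctly, in particular the correlation-dependent correction $-\rho^n_{t,T,a}\sqrt{\overline{\mathcal L}^n_{t,T+a}}$ in the default argument, which arises purely from the non-zero covariance $cv^n_{t,T,a}$ and is exactly what distinguishes the present computation from the independent case of \cref{ct-lemma: LGD t}; and then one must reconcile every constant (the factor $(1-k)e^{-ra}/\EAD^n_{t+T}$, the log-normal normalisation, and the threshold $\ell$) so that the exponential prefactor folds precisely into the quantities $\overline\omega^n_{t,T,a}$ and $\overline{\mathcal L}^n_{t,T+a}$ appearing in the statement.
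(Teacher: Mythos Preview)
Your proposal is correct and follows essentially the same route as the paper: split the expectation into $u\,\PP[\cdot]-\frac{(1-k)e^{-ra}}{\EAD^n_{t+T}}\EE[e^Y\bOne_{\cdot}]$, standardise the bivariate Gaussian to get the first $\Phi_2$, and handle the exponentially weighted term by the Gaussian shift identity $\EE[e^{\sigma X}\bOne_{X\le x,\,Y\le y}]=e^{\sigma^2/2}\Phi_2(x-\sigma,y-\rho\sigma;\rho)$, which is exactly your tilting argument and is what the paper invokes from its appendix.
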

\begin{proof}
    Let $t,T,u\in\RR_+$, we have
\begin{equation*}
    \begin{split}
& \EE\left[ \left(u-(1-k)e^{-r a}\frac{K_{t+T+a}^n}{\EAD_{t+T}^n}\right)_{+}\cdot \bOne_{\{ \cV^n_{t+T,\dd}<\cD^n_{t+T}\}}\middle|\cG_t\right]\\
&\qquad=  \EE\left[ \left(u-(1-k)e^{-r a}\frac{K_{t+T+a}^n}{\EAD_{t+T}^n}\right)\cdot \bOne_{u-(1-k)e^{-r a}\frac{K_{t+T+a}^n}{\EAD_{t+T}^n} \geq 0} \bOne_{\{ \cV^n_{t+T,\dd}<\cD^n_{t+T}\}}\middle|\cG_t\right]\\
&\qquad= \EE\left[ \left(u-(1-k)e^{-r a}\frac{K_{t+T+a}^n}{\EAD_{t+T}^n}\right)\cdot \bOne_{\{ \cV^n_{t+T,\dd}<\cD^n_{t+T},\quad K_{t+T+a}^n\leq u\frac{\EAD_{t+T}^n}{(1-k)e^{-r a}} \}}\middle|\cG_t\right]\\
&\qquad= u\PP\left[\cV^n_{t+T,\dd}<\cD^n_{t+T}, K_{t+T+a}^n\leq u\frac{\EAD_{t+T}^n}{(1-k)e^{-r a}} \}\middle|\cG_t\right]\\
&\qquad\qquad- \frac{(1-k)e^{-r a}}{\EAD_{t+T}^n}\EE\left[K_{t+T+a}^n \cdot \bOne_{\{ \cV^n_{t+T,\dd}<\cD^n_{t+T}, K_{t+T+a}^n\leq u\frac{\EAD_{t+T}^n}{(1-k)e^{-r a}} \}}\middle|\cG_t\right].
\end{split}
\end{equation*}
However $\begin{bmatrix}
\log{\cV_{t+T}^n}\\
\log{K_{t+T+a}^n}
\end{bmatrix} |\cG_t \sim \cN\left(\begin{bmatrix}
\cK^n(\dd, t, T, \cA_t^\circ, \cZ_t)\\
\overline{\cK}^n_{t, T+a}
\end{bmatrix},
\begin{bmatrix}
\mathcal{L}^n(t,T)&cv_{t,T,a}^n\\
cv_{t,T,a}^n&\overline{\mathcal{L}}_{t,T+a}^n
\end{bmatrix} \right)$, therefore we have
\begin{small}
\begin{align*}
        &\PP\left[\cV^n_{t+T,\dd}<\cD^n_{t+T}, K^n_{t+T+a}\leq u\frac{\EAD_{t+T}^n}{(1-k)e^{-r a}} \}\middle|\cG_t\right]\\
        &\qquad= \Phi_2\left(\frac{\log{u\frac{\EAD_{t+T}^n}{(1-k)e^{-r a}}} - \overline{\cK}^n_{t, T+a}}{\sqrt{\overline{\mathcal{L}}^n_{t,T+a}}},\frac{\log{\cD^n_{t+T}} - \cK^n(\dd, t, T, \cA_t^\circ, \cZ_t)}{\sqrt{\mathcal{L}^n(t,T)}};\frac{cv_{t,T,a}}{\sqrt{\mathcal{L}^n(t,T)\overline{\mathcal{L}}^n_{t,T+a}}} \right),
\end{align*}
\end{small}
and
\begin{equation*}
    \begin{split}
    &\frac{(1-k)e^{-r a}}{\EAD_{t+T}^n}\EE\left[K^n_{t+T+a} \cdot \bOne_{\{ \cV^n_{t+T,\dd}<\cD^n_{t+T}, K^n_{t+T+a}\leq u\frac{\EAD_{t+T}^n}{(1-k)e^{-r a}} \}}\middle|\cG_t\right]\\
    &\qquad= \EE\left[e^{\log{K^n_{t+T+a}}} \cdot \bOne_{\left\{\frac{\log{K^n_{t+T+a}}- \overline{\cK}^n_{t, T+a}}{\sqrt{\overline{\mathcal{L}}^n_{t,T+a}}}  \leq \overline\omega_{t, T, a}^n, \frac{\log{\cV^n_{t+T,\dd}} - \cK^n(\dd, t, T, \cA_t^\circ, \cZ_t)}{\sqrt{\mathcal{L}^n(t,T)}}<\Phi^{-1}(\PD^n_{t,T,\dd})\right\}}\middle|\cG_t\right]
    \end{split} 
\end{equation*}
However, according to~\eqref{ct-INT:int2} in \cref{ct-app:Bivariate gaussian}, $\EE[e^{\sigma X} \bOne_{X\leq x, Y\leq y}] =  e^{\frac{1}{2}\sigma^2}\Phi_2\left(x-\sigma,y-\rho\sigma; \rho\right)$, therefore, 
\begin{scriptsize}
\begin{align*}
    &\frac{(1-k)e^{-r a}}{\EAD_{t+T}^n}\EE\left[e^{\log{K^n_{t+T+a}}} \cdot \bOne_{\{ \cV^n_{t+T,\dd}<\cD^n_{t+T}, K^n_{t+T+a}\leq u\frac{\EAD_{t+T}^n}{(1-k)e^{-r a}} \}}\middle|\cG_t\right]\\
    &\qquad= \exp{\left(\frac{1}{2}\overline{\mathcal{L}}^n_{t,T+a} -\sqrt{\overline{\mathcal{L}}^n_{t,T+a}}\overline\omega_{t, T, a}^n\right)}\Phi_2\left(\overline\omega_{t, T, a}^n-\sqrt{\overline{\mathcal{L}}^n_{t,T+a}}, \Phi^{-1}(\PD^n_{t,T,\dd})-\frac{cv_{t,T,a}}{\sqrt{\mathcal{L}^n(t,T)}} ; \frac{cv_{t,T,a}}{\sqrt{\mathcal{L}^n(t,T)\overline{\mathcal{L}}^n_{t,T+a}}}\right);
\end{align*}
\end{scriptsize}
Moreover, from \cref{ct-pr cond pd},
\begin{equation*}
    \PP\left[  \cV^n_{t+T,\dd}<\cD^n_{t+T}\middle|\cG_t\right] = \Phi\left(\frac{\log(\cD_{t+T}^n) -\cK^n(\dd, t, T, \cA_t^\circ, \cZ_t)}{\sqrt{\mathcal{L}^n(t,T)}}\right).
\end{equation*}
This concludes the proof.
\end{proof}

\subsection{When there is not collateral}\label{ct-sec:no col}
When firm~$n$ does not have a collateral, therefore $C^n = 0$ and from~\eqref{ct-eq:def LGD}, LGD is
\begin{equation}\label{ct-eq:no col}
    \LGD_{t,\dd}^n = 1-\gamma.
\end{equation}

\subsection{When collateral is a financial asset}\label{ct-sec:col fin asset}

Here we assume that the collateral of the firm~$n$ is an investment in a financial asset. Precisely, we assume that that investment is a proportion $\alpha^n\in(0, 1]$ of a given firm located in the economy described in \cref{ct_sec_cont_time}. Consequently, it is subjected to the same constraints in terms of productivity and of carbon transition scenarios as firm~$n$. As any investment, it should generate a stream of cash flows so that at each time, we can compute its value by using the discounted cash flows model introduced in~\eqref{ct-eq:DCF}.

Let note the collateral cash flows $(\overline F_t^{n})_{t\in\RR_+}$, its dynamics is similar to the firm cash flows introduced in \cref{ct-ass:link}. We have for all $t\in\RR_+$,
\begin{equation}\label{ct-eq:col cash flow dy}
    \dr \overline F^n_{t,\dd} = \mathfrak{\overline a}^{n\cdot} ((\mu + \varsigma \cZ_t)\dr t + \dr v(\dd_t)) + \overline\sigma_{n} \dr \overline{\cW}^n_t,
\end{equation}
where~$\mathfrak{\overline a}^{n\cdot}\in\RR^{I}$ and where $(\overline\cW_t)_{t\in\RR_+}$ is a $\RR^{N}$-Brownian motion with $\overline\sigma_{n} > 0$. 
Moreover, $B^{\cZ}$ (noise of productivity), $\overline\cW^n$ (noise of collateral), and $(\cW^n)_{n\in\OneN}$ (noise of debtors) are independent. We also note $\overline{\tilde{\af}}^{n\cdot} = \overline{\af}^{n\cdot}(\Ir_I-\llambda)$.

\begin{remark}
We have assumed that $\overline\cW^n$ and $\cW^n$ are not correlated, but this is not always the case. For example, if the depreciation of the firm value heading to its \textit{over-indebtedness} implies the depreciation of the collateral value, then we should have a positive correlation.
\end{remark}
Inspired by~\eqref{ct-eq:DCF}, the collateral value at time~$t$ is
\begin{equation*}
    C^n_{t,\dd} := \alpha^n \EE_{t}\left[\int_{s=t}^{+\infty} e^{-r s} \overline F^n_{s,\dd} \dr s \right],
\end{equation*}
and by~\eqref{ct-eq de cV}, a proxy collateral value as 
\begin{equation}\label{ct-eq:approx col fa value}
    \cC_{t,\dd}^n := \alpha^n \overline F^n_{t,\dd} \int_{t}^{+\infty} e^{-r(s-t)}\EE_{t}\left[ \exp \left( (s-t)\overline\af^{n\cdot} \mu
    +\overline\af^{n\cdot}\left(v(\dd_{s})-v(\dd_t)\right)
    + \sigma_{n} (\overline\cW^n_s - \overline\cW^n_t)
    \right)\dr s\right].
\end{equation}
Therefore, the following proposition (whose the proof is inspired by Lemma~\ref{ct-lem:approx firm value} and corollary~\ref{ct-cor:law Vt}) gives a proxy of the collateral value. 

\begin{proposition}\label{ct-prop: collateral value}
For any $n\in\OneN$ and for all $t\in\RR_+$
    \begin{enumerate}
        \item Assume that $\overline\varrho_n := \frac{1}{2}\overline\sigma_{\mathfrak{b}_n}^2 + \mathfrak{\overline a}^{n\cdot}\mu -r<0$.
    Given the \textit{carbon emissions costs} sequence~$\dd$, the proxy of collateral value defined in~\eqref{ct-eq:approx col fa value}, is well defined  and
    \begin{align}\label{ct-eq:fa expression}
        \cC^n_{t,\dd} = \alpha^n \overline{F}^n_0 \mathfrak{\overline R}^n_t(\dd) \exp{\left(\mathfrak{\overline{a}}^{n\cdot}(\cA_t^\circ-v(\dd_{0}))\right)}\exp\left(\overline\sigma_{n} \overline\cW_t^n\right),
    \end{align}
where 
    \begin{equation}
     \mathfrak{\overline R}^n_t(\dd) := \int_{0}^{\infty} e^{\overline{\varrho}_n s}\exp{\left(\mathfrak{\overline{a}}^{n\cdot} v(\dd_{t+s})  \right)} \dr s.
    \end{equation}
    \item Moreover, we note $\mathfrak{\overline{m}}^n(\dd,t,\cA_{t}^\circ) :=  \log{(\alpha^n\overline{F}^n_0)} +  \log{\mathfrak{\overline{R}}^n_t(\dd)} +  \left(\mathfrak{\overline{a}}^{n\cdot}(\cA_t^\circ-v(\dd_{0}))\right)$ and we have 
    \begin{equation}
        \log{\cC^n_{t,\dd}}|\cG_t \sim \cN\left(\mathfrak{\overline{m}}^n(\dd,t,\cA_{t}^\circ), t\overline{\sigma}_{n}^2 \right),
    \end{equation}
    and we note  $\overline{\cK}^n(\dd, t, T, \cA_t^\circ, \cZ_t) := \log{(\alpha^n\overline{F}^n_0 \overline{\mathfrak{R}}^n_{t+T}(\dd))} + \overline{\af}^{n\cdot}(\mu T + \varsigma\Upsilon_{T}\cZ_t + \cA_{t}^\circ-v(\dd_{0}))$ and $\overline{\mathcal{L}}^n(t,T) := \overline{\af}^{n\cdot}\Sigma^\cA_{t, t+T}\overline{\af}^{n\cdot} + (t+T) \overline{\sigma}_{n}^2$, and we have
    \begin{equation}\label{ct-eq:law col fin}
        \log{\cC_{t+T}^n}|\cG_t \sim \cN\left(\overline{\cK}^n(\dd, t, T, \cA_t^\circ, \cZ_t)  ,\overline{\mathcal{L}}^n(t,T)\right).
    \end{equation}
    \item Assume that 
        \begin{align} 
    \overline\rho_n := \frac{1}{2}\overline\sigma_{n}^2+\overline\af^{n\cdot}\mu+\frac{1}{2}  \varsigma^2\frac{c_\Gamma^2}{\lambda_\Gamma^2}\lVert\overline\af^{n\cdot}\rVert^2 \lVert\Sigma\rVert^2 < r,
\end{align}
therefore $C_{t,\dd}^n$ is well defined and there exists a constant~$\overline C$ such that $\EE \left[\left|\frac{C^n_{t,\dd}}{\overline F^n_{t,\dd}} - \frac{\cC^n_{t,\dd}}{\overline F^n_{t,\dd}}\right|\right] \le \overline C \varsigma$, for all~$\varsigma > 0$.
    \end{enumerate}
\end{proposition}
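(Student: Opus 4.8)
The plan is to replicate, almost verbatim, the proofs of \cref{ct-lem:approx firm value} and \cref{ct-cor:law Vt}, since by construction the collateral cash flow $\overline F^n$ has exactly the same log-dynamics structure as the firm cash flow $F^n$ of \cref{ct-ass:link}: comparing \eqref{ct-eq:col cash flow dy} with \eqref{ct-eq:CF_vs_GDPGrowth with conso} and using $\dr\cA_t=(\mu+\varsigma\cZ_t)\dr t$, the only differences are the barred factor loadings $\overline\af^{n\cdot}$, the barred volatility $\overline\sigma_n$, the independent Brownian motion $\overline\cW^n$, and the overall scaling $\alpha^n$. The three items then correspond one-to-one to items (i), (ii)--(iii) of \cref{ct-cor:law Vt} and item (iii) of \cref{ct-lem:approx firm value}.

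For item (1), I would first integrate the log-dynamics \eqref{ct-eq:col cash flow dy} from $0$ to $t$ to obtain $\overline F^n_{t,\dd}=\overline F^n_0\exp\bigl(\overline\af^{n\cdot}\cA_t^\circ+\overline\af^{n\cdot}(v(\dd_t)-v(\dd_0))+\overline\sigma_n\overline\cW^n_t\bigr)$, where $\cA^\circ$ is from \cref{ct-rem:VAR1}(4). Substituting into \eqref{ct-eq:approx col fa value}, the increments $\cA_s-\cA_t$, $v(\dd_s)-v(\dd_t)$ are handled exactly as in the firm case; the term $(s-t)\overline\af^{n\cdot}\mu$ is deterministic, $v(\dd_s)-v(\dd_t)$ is deterministic since $\dd$ is, and $\overline\sigma_n(\overline\cW^n_s-\overline\cW^n_t)$ is an increment independent of $\cF_t$, so its conditional Gaussian MGF contributes $\exp\bigl(\tfrac12\overline\sigma_n^2(s-t)\bigr)$. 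Collecting the $(s-t)$-linear terms yields the factor $e^{\overline\varrho_n(s-t)}$, and the change of variable $u=s-t$ produces $e^{-\overline\af^{n\cdot}v(\dd_t)}\,\mathfrak{\overline R}^n_t(\dd)$. Multiplying by $\alpha^n\overline F^n_{t,\dd}$ and simplifying gives \eqref{ct-eq:fa expression}; convergence of $\mathfrak{\overline R}^n_t(\dd)$ on $[0,\infty)$ follows from $\overline\varrho_n<0$ together with the boundedness of $v(\dd_\cdot)$ (from \cref{ct-cor:output_consc}(iii)).

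For item (2), I would simply read off the Gaussian law from the closed form \eqref{ct-eq:fa expression}. Conditionally on $\cG_t$, the path $(\cZ_s)_{s\le t}$ and hence $\cA_t^\circ=\int_0^t(\mu+\varsigma\cZ_s)\dr s$ is measurable, $\mathfrak{\overline R}^n_t(\dd)$ and $v(\dd_0)$ are deterministic, while $\overline\sigma_n\overline\cW^n_t\sim\cN(0,t\overline\sigma_n^2)$ is independent of $\cG_t$ since $\overline\cW^n\indep B^\cZ$; this gives the first law with mean $\mathfrak{\overline m}^n(\dd,t,\cA_t^\circ)$ and variance $t\overline\sigma_n^2$. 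For the horizon-$T$ law, I would insert the conditional distribution of $\cA_{t+T}$ given $\cG_t$ from \cref{ct-rem:VAR1}(3) (mean $\mu T+\varsigma\Upsilon_T\cZ_t+\cA_t$, covariance $\Sigma^{\cA,T}_t$ as in \eqref{ct-eq:Ma_ht}), subtract $\cA_0$ to pass to $\cA^\circ_{t+T}$, and add the independent Gaussian $\overline\sigma_n\overline\cW^n_{t+T}\sim\cN(0,(t+T)\overline\sigma_n^2)$; the mean becomes $\overline{\cK}^n(\dd,t,T,\cA_t^\circ,\cZ_t)$ and the variance $\overline\af^{n\cdot}\Sigma^{\cA,T}_t(\overline\af^{n\cdot})^\top+(t+T)\overline\sigma_n^2=\overline{\mathcal L}^n(t,T)$, establishing \eqref{ct-eq:law col fin}.

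Item (3) is where the real work lies, and it is the step I expect to be the main obstacle; the plan is to mirror the argument in \ref{ct-proof:lem:approx firm value} with barred quantities. Writing both ratios $C^n_{t,\dd}/\overline F^n_{t,\dd}$ and $\cC^n_{t,\dd}/\overline F^n_{t,\dd}$ as integrals over $s\ge t$, their integrands differ only in that the true one carries $\overline\af^{n\cdot}(\cA_s-\cA_t)=(s-t)\overline\af^{n\cdot}\mu+\varsigma\,\overline\af^{n\cdot}\int_t^s\cZ_u\dr u$ in the exponent whereas the proxy keeps only the drift $(s-t)\overline\af^{n\cdot}\mu$. The difference is thus governed by the $O(\varsigma)$ fluctuation $\varsigma\,\overline\af^{n\cdot}\int_t^s\cZ_u\dr u$. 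I would bound $|e^{x}-e^{y}|\le |x-y|\,(e^{x}+e^{y})$ pointwise, apply Fubini to exchange the $\dr s$-integral with $\EE[\,\cdot\,]$, and control the exponential moments of $\varsigma\int_t^s\cZ_u\dr u$ using the Gaussianity of $\cZ$ and the decay estimate $|e^{-\Gamma u}|<c_\Gamma e^{-\lambda_\Gamma u}$ of \cref{ct-rem:VAR1}(2). This is precisely where the strengthened condition $\overline\rho_n<r$ — with the extra variance term $\tfrac12\varsigma^2\tfrac{c_\Gamma^2}{\lambda_\Gamma^2}\lVert\overline\af^{n\cdot}\rVert^2\lVert\Sigma\rVert^2$ — is needed, guaranteeing that the dominating integrand still decays geometrically in $s$ after the exponential-moment bound, so that the resulting constant $\overline C$ is finite and uniform in $\varsigma$; the factor $\varsigma$ is then extracted from the $|x-y|$ term, yielding $\EE\bigl[\,\lvert C^n_{t,\dd}/\overline F^n_{t,\dd}-\cC^n_{t,\dd}/\overline F^n_{t,\dd}\rvert\,\bigr]\le\overline C\varsigma$.
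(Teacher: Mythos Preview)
Your proposal is correct and follows essentially the same approach as the paper: the paper simply states that \eqref{ct-eq:fa expression} and items (1) and (3) follow directly from \cref{ct-lem:approx firm value} via the proof in \ref{ct-proof:lem:approx firm value} with barred quantities, and then develops the conditional laws in item (2) exactly as you do (take logs of \eqref{ct-eq:fa expression}, use $\overline\cW^n_t\sim\cN(0,t)$ independent of $B^\cZ$, and invoke \cref{ct-rem:VAR1} for the law of $\cA_{t+T}\mid\cG_t$). Your treatment of item (3) is in fact slightly more explicit than the paper's, which only sketches the well-definedness bound and then writes ``similar methods must be used'' for the $\overline C\varsigma$ estimate.
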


\begin{proof}
Let $n\in\OneN$ and $t\in\RR_+$, \eqref{ct-eq:fa expression} directly comes from \cref{ct-lem:approx firm value}. The proofs of the three points are equivalent to~\ref{ct-proof:lem:approx firm value}. Let us develop the conditional laws. From~\eqref{ct-eq:fa expression}, we have
    \begin{align*}
        \log{\cC^n_t} = \log{\alpha^n \overline F^n_0 \overline{\mathfrak{R}}^n_t(\dd)} + \overline\af^{n\cdot}(\cA_t^\circ-v(\dd_{0})) +\overline{\sigma}_n\overline\cW^n_t.
    \end{align*}
    Because $\overline\cW^n$ is a Brownian motion, $\overline\cW^n_t \sim \cN\left(0, t\right)$ and, $\overline\cW^n$ and $B^{\cZ}$ are independent, we obtain $\log{\cC_t^n}|\cG_t \sim \cN\left(\overline{\mathfrak{m}}^n(\dd,t,\cA_{t}^\circ), t\overline\sigma_{n}^2 \right)$. Let also $T\in\RR_+$, we have 
    \begin{equation*}
        \begin{split}
            \log{\cC^n_{t+T}} &= \log{\overline F^n_0 \overline{\mathfrak{R}}^n_{t+T}(\dd)} + \overline\af^{n\cdot}(\cA_{t+T}^\circ-v(\dd_{0})) +\overline\cW^n_{t+T}.
        \end{split}
    \end{equation*}
    From \cref{ct-rem:VAR1}, $\cA_{t+T}|\cG_t \sim \cN\left(M^{\cA, T}_{t} , \Sigma^{\cA, T}_{t}\right)$ and because $\overline\cW^n$ is a Brownian motion, $\overline\cW^n_{t+T} \sim \cN\left(0, t+T\right)$. Moreover, $\overline\cW^n$ and $B^{\cZ}$ are independent. We have
    \begin{equation*}
        \log{\cC^n_{t+T}}|\cG_t \sim \cN\left(\log{\alpha^n\overline F^n_0 \overline{\mathfrak{R}}^n_{t+T}(\dd)} + \overline\af^{n\cdot}(M^{\cA, T}_{t}-v(\dd_{0})) , \overline\af^{n\cdot}\Sigma^{\cA, T}_{t}\overline\af^{n\cdot} + (t+T) \overline\sigma_{n}^2\right).
    \end{equation*}
    The conclusion follows.
\end{proof}

From the (proxy of the) collateral value~$\cC$, we can then derive a precised expression of $\LGD$ based on \cref{ct-theo:gordy2003}. We have:
    
\begin{theorem}\label{ct-theorem: LGD t invest}
When $a = 0$ (no liquidation delay), the Loss Given Default of the obligor $n$ \textit{over-indebted} at time~$t\in\RR_+$, conditional on $\cG_t$ is
    \begin{equation}\label{ct-eq: LGD t invest}
        \LGD_{t,\dd}^n = (1-\gamma) \left[\Phi\left(\frac{w_t^n}{\overline{\sigma}_{\mathfrak{b}_n} \sqrt{t}} \right) - \exp{\left(-w_t^n + \frac{1}{2} t \overline{\sigma}_{n}^2 \right)}  \Phi\left(\frac{w_t^n}{\overline{\sigma}_{n}\sqrt{t}} - \overline{\sigma}_{n}\sqrt{t} \right) \right],
    \end{equation}
    where
    \begin{equation}\label{ct-eq: num LGD}
        w_t^n := \log{\left(\frac{\EAD_t^n}{1-k} \right)} - \mathfrak{\overline{m}}^n(\dd,t,\cA_{t}).
    \end{equation}
\end{theorem}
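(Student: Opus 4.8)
The plan is to obtain \eqref{ct-eq: LGD t invest} as a direct specialization of Lemma~\ref{ct-lemma: LGD t}. Starting from the definition of LGD in \eqref{ct-eq:def LGD} and setting the liquidation delay $a = 0$, while replacing the true collateral value $C^n$ by its proxy $\cC^n$ as per the convention established after \eqref{ct-eq de cV}, the conditional expectation becomes
\[
\LGD_{t,\dd}^n = (1-\gamma)\,\EE\left[\left(1 - (1-k)\frac{\cC^n_{t,\dd}}{\EAD_t^n}\right)_+ \,\middle|\, \cV^n_{t,\dd} < \cD^n_t,\ \cG_t\right],
\]
with $\cC^n$ the proxy collateral value from Proposition~\ref{ct-prop: collateral value}. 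This is exactly the left-hand side of \eqref{ct-eq: LGD t lemma} under the choices $K^n = \cC^n$ and $u = 1$, so it suffices to verify the two hypotheses of Lemma~\ref{ct-lemma: LGD t} and then read off the closed form.

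For the first hypothesis, Proposition~\ref{ct-prop: collateral value} gives $\log \cC^n_{t,\dd} \mid \cG_t \sim \cN\bigl(\mathfrak{\overline{m}}^n(\dd, t, \cA_t^\circ),\, t\,\overline{\sigma}_n^2\bigr)$, so I would identify $m_t^n = \mathfrak{\overline{m}}^n(\dd, t, \cA_t^\circ)$ and $\sigma_t^n = \overline{\sigma}_n\sqrt{t}$. For the second hypothesis, I would appeal to the factorized expressions \eqref{ct-eq expression of cV} and \eqref{ct-eq:fa expression}: conditional on $\cG_t$, the quantity $\cA_t$ (hence $\cA_t^\circ$) is $\cG_t$-measurable and $\mathfrak{R}^n_t(\dd)$, $\mathfrak{\overline R}^n_t(\dd)$ are deterministic, so $\cV^n_{t,\dd}\mid\cG_t$ is a deterministic function of $\exp(\sigma_n \cW^n_t)$ while $\cC^n_{t,\dd}\mid\cG_t$ is a deterministic function of $\exp(\overline{\sigma}_n \overline{\cW}^n_t)$. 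Since $\cW^n$ and $\overline{\cW}^n$ are independent of each other and of $B^{\cZ}$ (the standing independence stated after \eqref{ct-eq:col cash flow dy}), the increments $\cW^n_t$ and $\overline{\cW}^n_t$ remain independent given $\cG_t$, whence the conditional independence of $\cC^n_{t,\dd}$ and $\cV^n_{t,\dd}$ follows.

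With both hypotheses in place, I would apply \eqref{ct-eq: LGD t lemma} with $u = 1$. The quantity $w_t^n$ of \eqref{ct-eq: num LGD lemma} then collapses to $\log\bigl(\EAD_t^n/(1-k)\bigr) - \mathfrak{\overline{m}}^n(\dd, t, \cA_t)$, which is precisely \eqref{ct-eq: num LGD}; substituting $\sigma_t^n = \overline{\sigma}_n\sqrt{t}$ into \eqref{ct-eq: LGD t lemma} and multiplying through by $(1-\gamma)$ delivers \eqref{ct-eq: LGD t invest} verbatim.

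The computation is routine once the lemma applies; the only genuine point requiring care is the second hypothesis, namely the conditional independence of collateral and firm value given $\cG_t$, which is exactly where the standing mutual independence of the three Brownian families $B^{\cZ}$, $\cW^n$, $\overline{\cW}^n$ is essential. The restriction $a = 0$ is what singles out Lemma~\ref{ct-lemma: LGD t} rather than the bivariate Lemma~\ref{ct-lemma cond loss and pd}: with no liquidation delay both $\cV^n$ and $\cC^n$ are evaluated at the same instant $t$, so their only common source of randomness -- through $\cA_t$ -- is fully absorbed into the conditioning on $\cG_t$, and no joint (bivariate Gaussian) treatment of distinct evaluation times is needed.
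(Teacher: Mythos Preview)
Your proof is correct and follows exactly the same route as the paper: verify the two hypotheses of Lemma~\ref{ct-lemma: LGD t} (the conditional log-normality of $\cC^n_{t,\dd}$ from Proposition~\ref{ct-prop: collateral value} and the conditional independence of $\cC^n_{t,\dd}$ and $\cV^n_{t,\dd}$ given $\cG_t$), then apply the lemma with $u=1$. Your write-up is in fact more detailed than the paper's, which simply asserts the conditional independence without spelling out the Brownian-independence argument you give.
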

\begin{proof}
    Let $t\in\RR_+$. By remarking that 
    \begin{enumerate}
        \item $\log{\cC^n_{t,\dd}}|\cG_t \sim \cN\left(\mathfrak{\overline{m}}^n(\dd,t,\cA_{t}), t\overline{\sigma}_{n}^2 \right)$, 
        \item and $\cC_{t,\dd}^n|\cG_t$ and $\cV_{t,\dd}^n|\cG_t$ are independent,
    \end{enumerate}
    we can simply apply Lemma~\ref{ct-lemma: LGD t} with $u = 1$.
\end{proof}
We can also remark that the situation where there is no collateral corresponds to $\overline{F}^n_0 = 0$. We then have  
\begin{equation*}
    \begin{split}
        \overline{F}^n_0 \to 0 &\implies \log{(\overline{F}^n_0)} \to -\infty \implies \mathfrak{\overline{m}}^n(\dd,t,\cA_{t}^\circ) \to-\infty\implies w_t^n\to+\infty \implies\LGD_{t,\dd}^n \to 1-\gamma.
    \end{split}
\end{equation*}

For each $t,T \in\RR_+$, we introduce now the (conditional) $\LGD$ of the  entity~$n$ at time~$t$ on the horizon~$T$, namely
\begin{equation*}
    \LGD^n_{t,T,\dd } := (1-\gamma) \EE\left[ \left(1-(1-k)e^{-r a}\frac{\cC^n_{t+T+a,\dd}}{\EAD_{t+T}^n}\right)_{+}\middle|\cV^n_{t+T,\dd}<\cD^n_{t+T},\cG_t\right].
\end{equation*}
It is precisely about calculating at date $t$ the proportion of the exposure that the bank would lose if the counterpart~$n$ is \textit{over-indebted} at date $t+T$.

\begin{proposition}[Projected PD and LGD]\label{ct-pr cond loss and pd}
For each $t,T \in\RR_+$, the (conditional) $\LGD$ of the  entity~$n$ at time~$t$ on the horizon~$T$, reads
    \begin{small}
    \begin{equation}\label{ct-eq: LGD t T invest}
        \begin{split}
            \LGD^n_{t,T,\dd } &= \frac{1-\gamma}{\PD^n_{t,T,\dd}} \left[\Phi_2\left(\overline\omega_{t, T, a}^n,\Phi^{-1}(\PD^n_{t,T,\dd});\rho_{t, T, a}^n\right)- \exp{\left(\frac{1}{2}\overline{\mathcal{L}}^n(t,T+a)-\sqrt{\overline{\mathcal{L}}^n(t,T+a)}\overline\omega_{t, T, a}^n\right)}\times\right.\\
            &\qquad\qquad\qquad\left.\Phi_2\left(\overline\omega_{t, T, a}^n-\sqrt{\overline{\mathcal{L}}^n(t,T+a)},\Phi^{-1}(\PD^n_{t,T,\dd})-\rho_{t, T, a}^n\sqrt{\overline{\mathcal{L}}^n(t,T+a)};\rho_{t, T, a}^n\right) \right],
        \end{split}
    \end{equation}
    \end{small}
    where
    \begin{equation*}
       \rho_{t, T, a}^n := \frac{\varsigma^2 \af^{n\cdot}\Gamma^{-1} \left(\int_{0}^{T} \left(e^{-\Gamma u} - \Ir_I \right) \Sigma\Sigma^\top \left(e^{-\Gamma (u+a)} - \Ir_I \right) \dr u \right) (\overline{\af}^{n\cdot}\Gamma^{-1})^\top}{\sqrt{\mathcal{L}^n(t,T)\overline{\mathcal{L}}^n(t,T+a)}},
    \end{equation*}
    and
    \begin{equation*}
        \overline\omega_{t, T, a}^n :=\frac{\log{\frac{\EAD_{t+T}^n}{(1-k)e^{-r a}}}- \overline{\cK}^n(\dd, t, T+a, \cA_t^\circ, \cZ_t)}{\sqrt{\overline{\mathcal{L}}^n(t,T+a)}},
    \end{equation*}
    and where $\PD^n_{t,T,\dd}$ defined in \cref{ct-pr cond pd}.
    \end{proposition}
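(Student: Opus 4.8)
The plan is to reduce the statement to a direct application of \cref{ct-lemma cond loss and pd} with the choice $K^n := \cC^n_{\cdot,\dd}$ and $u = 1$. First I would rewrite the conditional-on-default expectation defining $\LGD^n_{t,T,\dd}$ as a ratio, using the elementary identity $\EE[X\mid A,\cG_t] = \EE[X\bOne_A\mid \cG_t]/\PP(A\mid \cG_t)$ with the default event $A := \{\cV^n_{t+T,\dd}<\cD^n_{t+T}\}$. Since $\PP(A\mid\cG_t) = \PD^n_{t,T,\dd}$ by \cref{ct-pr cond pd}, this turns $\LGD^n_{t,T,\dd}$ into $(1-\gamma)/\PD^n_{t,T,\dd}$ times exactly the joint quantity $\EE\big[(1-(1-k)e^{-ra}\cC^n_{t+T+a,\dd}/\EAD_{t+T}^n)_+\,\bOne_A\mid\cG_t\big]$ that is evaluated in \eqref{ct-eq: LGD t T lemma}.

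It then remains to verify that the two hypotheses of \cref{ct-lemma cond loss and pd} hold for $K^n=\cC^n$. The marginal hypothesis is immediate from item $(ii)$ of \cref{ct-prop: collateral value} applied over the horizon $T+a$: it gives $\log\cC^n_{t+T+a,\dd}\mid\cG_t\sim\cN(\overline{\cK}^n(\dd,t,T+a,\cA_t^\circ,\cZ_t),\overline{\mathcal L}^n(t,T+a))$, which identifies both $\overline\omega^n_{t,T,a}$ and the scalar $\overline{\mathcal L}^n(t,T+a)$ appearing in the claim. For the joint hypothesis I would use the explicit representations \eqref{ct-eq expression of cV} and \eqref{ct-eq:fa expression}: conditionally on $\cG_t$, both $\log\cV^n_{t+T,\dd}$ and $\log\cC^n_{t+T+a,\dd}$ are affine functionals of the post-$t$ increments of $(B^{\cZ},\cW^n,\overline\cW^n)$, through the Ornstein--Uhlenbeck terms $\af^{n\cdot}\cA^\circ_{t+T}$, $\overline\af^{n\cdot}\cA^\circ_{t+T+a}$ and the Brownian terms $\sigma_n\cW^n$, $\overline\sigma_n\overline\cW^n$. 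Being a linear image of a Gaussian vector, the pair is bivariate Gaussian, with marginals already pinned down by \cref{ct-cor:law Vt} and \cref{ct-prop: collateral value}.

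The main work, and the step I expect to be the real obstacle, is the explicit evaluation of the cross-covariance $cv^n_{t,T,a}:=\mathrm{Cov}(\log\cV^n_{t+T,\dd},\log\cC^n_{t+T+a,\dd}\mid\cG_t)$, which fixes $\rho^n_{t,T,a}$. Here I would exploit that $\cW^n$, $\overline\cW^n$, and $B^{\cZ}$ are mutually independent (\cref{ct-ass:link} and \eqref{ct-eq:col cash flow dy}), so that only the $B^{\cZ}$-driven parts contribute. Writing the fluctuating part of $\cA_{t+h}-\cA_t$ given $\cG_t$ as $\varsigma\int_0^h\Upsilon_{h-w}\Sigma\,\dr B^{\cZ}_{t+w}$ (obtained from \cref{ct-rem:VAR1} by a stochastic Fubini), the It\^o isometry yields $cv^n_{t,T,a}=\varsigma^2\int_0^T\af^{n\cdot}\Upsilon_{T-w}\Sigma\Sigma^\top\Upsilon_{T+a-w}^\top(\overline\af^{n\cdot})^\top\,\dr w$, the integration range being $[0,T]$ because the $\cV$-integrand vanishes for $w>T$.

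Finally, a change of variable $u=T-w$ together with $\Upsilon_u=-\Gamma^{-1}(e^{-\Gamma u}-\Ir_I)$ (so that the two sign factors coming from $\af^{n\cdot}\Upsilon_u$ and $\Upsilon_{u+a}^\top(\overline\af^{n\cdot})^\top$ cancel) rewrites $cv^n_{t,T,a}$ as the stated numerator of $\rho^n_{t,T,a}$. Dividing by $\sqrt{\mathcal L^n(t,T)\,\overline{\mathcal L}^n(t,T+a)}$ gives $\rho^n_{t,T,a}$, and substituting $\overline\omega^n_{t,T,a}$, $\overline{\mathcal L}^n(t,T+a)$, and $\rho^n_{t,T,a}$ into \eqref{ct-eq: LGD t T lemma}, then dividing through by $\PD^n_{t,T,\dd}$, produces \eqref{ct-eq: LGD t T invest}.
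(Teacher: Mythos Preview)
Your proof is correct and follows the same overall route as the paper: both reduce the claim to \cref{ct-lemma cond loss and pd} with $u=1$ after establishing that $(\log\cV^n_{t+T,\dd},\log\cC^n_{t+T+a,\dd})$ is bivariate Gaussian given $\cG_t$ and identifying the cross-covariance. The execution differs in two places. First, the paper reaches the joint expectation $\EE\big[(\,\cdot\,)_+\bOne_{\{\cV^n_{t+T,\dd}<\cD^n_{t+T}\}}\mid\cG_t\big]$ by starting from the expected-loss formula and applying the tower property to $\EE[\LGD^n_{t+T,\dd}\PD^n_{t+T,\dd}\mid\cG_t]$, whereas your direct use of $\EE[X\mid A,\cG_t]=\EE[X\bOne_A\mid\cG_t]/\PP(A\mid\cG_t)$ is shorter and equivalent. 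Second, the paper computes $cv^n_{t,T,a}$ by a brute-force expansion of $\EE[\log F^n_{t+T}\cdot\log\cC^n_{t+T+a}\mid\cG_t]$ term by term, cancelling all cross-terms involving $\cW^n$ or $\overline\cW^n$ one at a time; your It\^o-isometry argument (writing the fluctuating part of $\cA_{t+h}-\cA_t$ as $\varsigma\int_0^h\Upsilon_{h-w}\Sigma\,\dr B^{\cZ}_{t+w}$ and integrating over the common support $[0,T]$) is more structural and makes the form of the answer visible before the change of variable $u=T-w$.
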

\begin{proof}
    Let $t,T\in\RR_+$, from \eqref{ct-eq:def loss} and \eqref{ct-eq:el},
    \begin{equation*}
        \EL^{N,T}_t := \EE\left[\mathrm{L}^{\GG,N}_{t+T}\middle|\cG_t\right] = \EE\left[\sum_{n=1}^{N} L_{n,t+T}^\GG\middle|\cG_t\right] = \sum_{n=1}^{N} \EE\left[L_{n,t+T}^\GG\middle|\cG_t\right].
        \end{equation*}
But for $1\leq n\leq N$, we have
\begin{equation*}
    \begin{split}
\EE\left[L_{n,t+T}^\GG\middle|\cG_t\right] &= \EE\left[\EAD_{t+T}^n \LGD^n_{t+T,\dd} \PD^n_{t+T,\dd}\middle|\cG_t\right]\\
&= \EAD_{t+T}^n \EE\left[\LGD^n_{t+T,\dd} \PD_{t+T,\dd}^n \middle|\cG_t\right]\quad\text{as } \EAD_{t+T}^n\text{ is deterministic}\\
&= \EAD_{t+T}^n \EE\left[\EE\left[  (1-\gamma)\left(1-(1-k)e^{-r a}\frac{\cC^n_{t+T+a,\dd}}{\EAD_{t+T}^n}\right)_{+}\cdot \bOne_{\{ \cV^n_{t+T,\dd}<\cD^n_{t+T}\}}\middle|\cG_{t+T}\right]\middle|\cG_t\right]\\
&= (1-\gamma)\EAD_{t+T}^n \EE\left[ \left(1-(1-k)e^{-r a}\frac{\cC^n_{t+T+a,\dd}}{\EAD_{t+T}^n}\right)_{+}\cdot \bOne_{\{ \cV^n_{t+T,\dd}<\cD^n_{t+T}\}}\middle|\cG_t\right].
\end{split}
\end{equation*}
However, from~\eqref{ct-eq expression of cV}, we have $ \log{\cV^n_{t+T,\dd}} = \log{(F^n_0 \mathfrak{R}^n_{t+T}(\dd))} + \af^{n\cdot}(\cA_{t+T}^\circ-v(\dd_{0})) +\sigma_n\cW^n_{t+T}$,
and from~\eqref{ct-eq:fa expression}, we have $\log{\cC^n_{t+T+a,\dd}} = \log{\alpha^n \overline{F}^n_0 \mathfrak{\overline R}^n_{t+T+a}(\dd)} +\mathfrak{\overline{a}}^{n\cdot}(\cA_{t+T+a}^\circ-v(\dd_{0}))+\overline\sigma_{n} \overline\cW_{t+T+a}^n$.
Therefore, 
$F_{t+T}^n|\cG_t \sim \cLN\left(\cK^n(\dd, t, T, \cA_t^\circ, \cZ_t), \mathcal{L}^n(t,T)\right)$, $\cC_{t+T+a}^n|\cG_t \sim \cLN\left(\overline{\cK}^n(\dd, t, T+a, \cA_t^\circ, \cZ_t) ,\overline{\mathcal{L}}^n(t,T)\right)$, and
\begin{equation*}
    cov(\log{F_{t+T}^n},\log{\cC_{t+T+a}^n}|\cG_t) = \EE[\log{F_{t+T}^n}\log{\cC_{t+T+a}^n}|\cG_t] - \EE[\log{F_{t+T}^n}|\cG_t]\EE[\log{\cC_{t+T+a}^n}|\cG_t].
\end{equation*}
However, 
\begin{small}
\begin{align*}
        &\EE[\log{F_{t+T}^n}\log{\cC_{t+T+a}^n}|\cG_t]\\
        & \qquad= \EE\left[ \left(\log{\left(F^n_0 \mathfrak{R}^n_{t+T}(\dd) e^{-\af^{n\cdot}v(\dd_{0})}\right)} + \af^{n\cdot}\cA_{t+T}^\circ +\sigma_n\cW^n_{t+T}\right) \right.\\
        &\qquad\qquad\left. \left(\log{\left(\alpha^n \overline{F}^n_0 \mathfrak{\overline R}^n_{t+T+a}(\dd)e^{-\overline{\af}^{n\cdot}v(\dd_{0})}\right)} +\overline{\af}^{n\cdot}\cA_{t+T+a}^\circ+\overline\sigma_{n} \overline\cW_{t+T+a}^n\right) \middle|\cG_t\right]\\
        &\qquad= \EE\left[ \log{\left(F^n_0 \mathfrak{R}^n_{t+T}(\dd) e^{-\af^{n\cdot}v(\dd_{0})}\right)} \log{\left(\alpha^n \overline{F}^n_0 \mathfrak{\overline R}^n_{t+T}(\dd)e^{-\overline{\af}^{n\cdot}v(\dd_{0})}\right)}+\af^{n\cdot}\cA_{t+T}^\circ\overline\sigma_{n} \overline\cW_{t+T+a}^n
        \right.\\
        &\qquad\qquad\left. + \log{\left(F^n_0 \mathfrak{R}^n_{t+T}(\dd) e^{-\af^{n\cdot}v(\dd_{0})}\right)} (\overline{\af}^{n\cdot}\cA_{t+T+a}^\circ +\overline\sigma_{n} \overline\cW_{t+T+a}^n)+\overline\sigma_{n} \overline\cW_{t+T+a}^n \sigma_n\cW^n_{t+T}\right.\\
        &\qquad\qquad\left. +\log{\left(\alpha^n \overline{F}^n_0 \mathfrak{\overline R}^n_{t+T+a}(\dd)e^{-\overline{\af}^{n\cdot}v(\dd_{0})}\right)} \af^{n\cdot}\cA_{t+T}^\circ +\af^{n\cdot}\cA_{t+T}^\circ\overline{\af}^{n\cdot}\cA_{t+T+a}^\circ\right.\\
        &\qquad\qquad\left.+ \log{\left(\alpha^n \overline{F}^n_0 \mathfrak{\overline R}^n_{t+T+a}(\dd)e^{-\overline{\af}^{n\cdot}v(\dd_{0})}\right)} \sigma_n\cW^n_{t+T}+\overline{\af}^{n\cdot}\cA_{t+T+a}^\circ \sigma_n\cW^n_{t+T}\middle|\cG_t
        \right]\\
        &\qquad= \log{\left(F^n_0 \mathfrak{R}^n_{t+T}(\dd) e^{-\af^{n\cdot}v(\dd_{0})}\right)} \log{\alpha^n \overline{F}^n_0 \mathfrak{\overline R}^n_{t+T+a}(\dd)e^{-\overline{\af}^{n\cdot}v(\dd_{0})}}\\
        &\qquad\qquad+ \log{\left(F^n_0 \mathfrak{R}^n_{t+T}(\dd) e^{-\af^{n\cdot}v(\dd_{0})}\right)} \overline{\af}^{n\cdot}\EE[\cA_{t+T+a}^\circ|\cG_t]\\
        &\qquad\qquad+\log{\left(\alpha^n \overline{F}^n_0 \mathfrak{\overline R}^n_{t+T+a}(\dd)e^{-\overline{\af}^{n\cdot}v(\dd_{0})}\right)} \af^{n\cdot}\EE[\cA_{t+T}^\circ|\cG_t] + \EE[\af^{n\cdot}\cA_{t+T}^\circ\overline{\af}^{n\cdot}\cA_{t+T+a}^\circ|\cG_t].
\end{align*}
\end{small}
By also developing $\EE[\log{F_{t+T}^n}|\cG_t]\EE[\log{\cC_{t+T}^n}|\cG_t]$, we obtain 
\begin{equation*}
    \begin{split}
        & cov(\log{F_{t+T}^n},\log{\cC_{t+T}^n}|\cG_t) = cov(\af^{n\cdot}\cA_{t+T}^\circ,\overline{\af}^{n\cdot}\cA_{t+T}^\circ|\cG_t)\\
        &\qquad= \varsigma^2 \af^{n\cdot}\Gamma^{-1} \left(\int_{0}^{T} \left(e^{-\Gamma u} - \Ir_I \right) \Sigma\Sigma^\top \left(e^{-\Gamma (u+a)} - \Ir_I \right) \dr u \right) (\overline{\af}^{n\cdot}\Gamma^{-1})^\top := cv_{t,T,a}.
    \end{split}
\end{equation*}
We obtain
\begin{small}
    \begin{equation}
\begin{bmatrix}
\log{\cV_{t+T}^n}\\
\log{\cC_{t+T}^n}
\end{bmatrix} |\cG_t \sim \cN\left(\begin{bmatrix}
\cK^n(\dd, t, T, \cA_t^\circ, \cZ_t)\\
\overline{\cK}^n(\dd, t, T+a, \cA_t^\circ, \cZ_t)
\end{bmatrix},
\begin{bmatrix}
\mathcal{L}^n(t,T)&cv_{t,T,a}\\
cv_{t,T,a}&\overline{\mathcal{L}}^n(t,T+a)
\end{bmatrix} \right).
\end{equation}
\end{small}
Then, we use the Lemma~\ref{ct-lemma cond loss and pd} with $u=1$ to conclude the proof.
\end{proof}
We remark that the carbon price introduced in our economy affect both PD through the obligor cash flows and LGD through the collateral cash flows. See more remarks in \cref{sec:remark_on_LGD}.

\subsection{When collateral is commercial or residential property}\label{ct-sec:col proporty}

In this section, we assume that loans are backed
by either residential or commercial building. The problem here is then to model the real estate market in the presence of the climate transition risk. The latter is represented by energy efficiency as well as the carbon price. We would like to compute the value of a dwelling at time time~$t$. We use exactly as in~\cite{sopgoui2024realestate} the actualized sum of the cash flows before the renovation date (taking into account the additional energy costs due to inefficiency of the building), at the renovation date, and after the renovation date (when the building becomes efficient). Moreover, the agent chooses rationally the date of renovation which maximizes the value of his property. Therefore, according to \cite{sopgoui2024realestate}[Theorem 2.4], we have the following proposition.
\begin{theorem}\label{ct-prop:housing}
    Assume that the following conditions are satisfied:
    \begin{enumerate}
        \item the carbon price function $\delta: t\mapsto \delta_t$ is non decreasing on $\RR_+$ and deterministic;
        \item the energy price $\ff(\cdot, \pf)$ is non decreasing on $\RR_+$ for each {type of
energy~$\pf$}.
    \end{enumerate}
    Then, the market value of the building serving as the collateral to firm~$n$ at $t\geq 0$, given the carbon price sequence~$\delta$, is given by
    \begin{align}\label{ct-eq:coldynamics1}
        \cC^n_{t,\delta} = C^n_{t} - R_n X_{t,\delta}^n,
    \end{align}
    where
    \begin{align}\label{ct-eq:coldynamics4}
        X_{t,\delta}^n := \cc(\alpha^n, \alpha^\star) e^{-\bar r(\ft_n-t)} + (\alpha^n-\alpha^\star)\int_{t}^{\ft_n} \ff(\delta_u,\pf) e^{-\bar r(u-t)} \dr u,
    \end{align}
    and where the optimal date of renovations~$\ft_n\in[t,+\infty]$ is given by 
\begin{numcases}{\ft_n=}
$t$ & if $\ff(\delta_{\theta},\pf) - \bar r \frac{\cc(\alpha^n, \alpha^\star)}{\alpha^n-\alpha^\star} > 0$ for all $\theta\in[t,\infty)$\label{ct-eq:optimal t_n t} \\
+\infty & if $\ff(\delta_{\theta},\pf) - \bar r \frac{\cc(\alpha^n, \alpha^\star)}{\alpha^n-\alpha^\star} < 0$ for all $\theta\in[t,\infty)$ \label{ct-eq:optimal t_n infty}\\
\theta^\star & the unique solution of $\ff(\delta_{\theta},\pf) =\bar r \frac{\cc(\alpha^n, \alpha^\star)}{\alpha^n-\alpha^\star}$ on $\theta\in[t,\infty)$.\label{ct-eq:optimal t_n}
\end{numcases}
Moreover, 
\begin{align}\label{ct-eq:housing EOU}
        C^n_{t} &:= R_n C^n_{0} e^{K_t} ,
    \end{align}
    where
\begin{subequations}
\begin{align}
\displaystyle\dr K_t &= \left(\dot{\chi}_t+ \nu(\chi_t - K_t) \right) \dr t + \overline{\sigma} \dr\overline{B}_t, \label{ct-eq:coldynamics2}\\
\dr \overline{B}_t &= \rho^\top \dr B^{\cZ}_t + \sqrt{1-\lVert\rho\rVert^2} \dr\overline\cW_t,
\label{ct-eq:coldynamics3}
\end{align}
\end{subequations}
with $(\overline\cW_t)_{t\in\RR_+}$ is a standard Brownian motion independent to~$B^{\cZ}$ introduced in \cref{ct-sassump:OU} and driving the productivity of the economy. Moreover, $C^n_{0}$, $r$, $R_n, \overline{\sigma} > 0$, $\rho\in\RR_+^I$, and $\chi\in\mathfrak{D}(\RR_+, \RR_+)$. 
    \noindent We introduce the following filtration {$\mathbb{U}:=(\cU_t)_{t\in\RR_+}$} with respect to $t\geq 0$, $\cU_t := \sigma\left(\left\{ \overline{W}_s, B^{\cZ}_s: s\leq t\right\}\right)$.  
\end{theorem}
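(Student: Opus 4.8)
The plan is to obtain the statement as the continuous-time transcription of \cite{sopgoui2024realestate}[Theorem 2.4], so that the real work is threefold: recall the building-value decomposition, solve the owner's optimal-renovation problem under the present monotonicity assumptions, and record the efficient-building dynamics. First I would set up the value of the collateral following the logic of \cite{sopgoui2024realestate}: the building pledged by firm~$n$ is worth the price $C^n_t$ of an \emph{equivalent energy-efficient} dwelling, diminished by the cost the inefficient owner must still bear. Up to the renovation date $\ft_n$ the owner pays, at each instant $u$, the additional energy bill $(\alpha^n-\alpha^\star)\ff(\delta_u,\pf)$ stemming from the efficiency gap $\alpha^n-\alpha^\star$; at $\ft_n$ he pays the lump renovation cost $\cc(\alpha^n,\alpha^\star)$; after $\ft_n$ the building is efficient and incurs no extra cost. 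Discounting all these flows back to $t$ at rate $\bar r$ yields exactly the penalty $R_n X^n_{t,\delta}$ with $X^n_{t,\delta}$ as in \eqref{ct-eq:coldynamics4}, hence \eqref{ct-eq:coldynamics1}.

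The core step is the choice of $\ft_n$. Since $\cC^n_{t,\delta}=C^n_t-R_nX^n_{t,\delta}$ with $R_n>0$, maximising the building value over the renovation date is equivalent to minimising over $\theta\in[t,+\infty]$ the quantity obtained by substituting $\ft_n=\theta$ in \eqref{ct-eq:coldynamics4}, which I denote $X^n_{t,\delta}(\theta)$. Differentiating in $\theta$ gives
\begin{equation*}
\frac{\partial}{\partial\theta}X^n_{t,\delta}(\theta)=e^{-\bar r(\theta-t)}\left[(\alpha^n-\alpha^\star)\ff(\delta_\theta,\pf)-\bar r\,\cc(\alpha^n,\alpha^\star)\right],
\end{equation*}
so the sign of the derivative coincides with that of $g(\theta):=\ff(\delta_\theta,\pf)-\bar r\,\cc(\alpha^n,\alpha^\star)/(\alpha^n-\alpha^\star)$. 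Here the two hypotheses enter decisively: because $\delta$ is non-decreasing and $\ff(\cdot,\pf)$ is non-decreasing, the composition $\theta\mapsto\ff(\delta_\theta,\pf)$ is non-decreasing, hence $g$ is non-decreasing and changes sign at most once, from negative to positive. This produces the three regimes of the statement: if $g>0$ on all of $[t,\infty)$ then $X^n_{t,\delta}$ is increasing and the minimiser is the corner $\ft_n=t$ \eqref{ct-eq:optimal t_n t}; if $g<0$ throughout then $X^n_{t,\delta}$ is decreasing and $\ft_n=+\infty$ \eqref{ct-eq:optimal t_n infty}; otherwise $g$ vanishes at the crossing point $\theta^\star$ solving $\ff(\delta_\theta,\pf)=\bar r\,\cc(\alpha^n,\alpha^\star)/(\alpha^n-\alpha^\star)$, where $X^n_{t,\delta}$ switches from decreasing to increasing and thus attains its minimum \eqref{ct-eq:optimal t_n}.

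It remains to record the dynamics of the efficient-building price. Following \cite{sopgoui2024realestate} I would posit $C^n_t=R_nC^n_0e^{K_t}$ with $K$ an Ornstein-Uhlenbeck process \eqref{ct-eq:coldynamics2} reverting to the deterministic curve $\chi$ at speed $\nu$; the term $\dot\chi_t$ ensures that $K_t-\chi_t$ is a centred OU, so that $\chi$ is genuinely the reversion level and $C^n$ an exponential OU. The decomposition \eqref{ct-eq:coldynamics3} of its driving noise into $\rho^\top\dr B^\cZ$ and an orthogonal idiosyncratic part $\sqrt{1-\lVert\rho\rVert^2}\,\dr\overline\cW$ is precisely what couples the housing market to the productivity of the economy; for $\overline B$ to be a standard Brownian motion one checks that $\lVert\rho\rVert^2+(1-\lVert\rho\rVert^2)=1$, which motivates the normalisation $\lVert\rho\rVert\le1$, and the independence of $\overline\cW$ and $B^\cZ$ guarantees that the quadratic variation equals $\dr t$.

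I expect the only genuine difficulty to be the renovation step, and specifically the need for the monotonicity of $\theta\mapsto\ff(\delta_\theta,\pf)$ to guarantee that the first-order condition has a single sign change: without it $g$ could oscillate and the clean trichotomy above would fail. Uniqueness of $\theta^\star$ in the interior case further requires $g$ to be strictly increasing at the crossing (e.g. $\delta$ or $\ff$ strictly increasing there); if $g$ is only non-decreasing one obtains a minimising interval rather than a single point, a degenerate situation I would either exclude by a strictness hypothesis or absorb into the cited theorem. Finally, verifying that the standing assumptions on $\delta$ in \cref{ct-sassc:price} and on the energy price match the hypotheses of \cite{sopgoui2024realestate}[Theorem 2.4] closes the argument.
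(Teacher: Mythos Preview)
Your proposal is correct and follows precisely the route the paper intends: the paper does not prove this result itself but simply invokes \cite{sopgoui2024realestate}[Theorem 2.4], and your three steps (the discounted-penalty decomposition, the first-order analysis of $\theta\mapsto X^n_{t,\delta}(\theta)$ yielding the monotone sign function $g$, and the exponential-OU specification for the efficient price) reconstruct exactly that argument. Your observation that strict monotonicity is needed for uniqueness of $\theta^\star$ in the interior case is a fair caveat that the paper also glosses over.
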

The following corollary gives the conditional distribution of the collateral. Its proof is straightforward and is detailed in \cite{sopgoui2024realestate}.
\begin{corollary}\label{ct-cor: cond law coll 2}
    For $0\leq t \leq T$, the law of
    $C^n_{t+T} = R_n C_0^n\exp{(K_{t+T})}$ conditional on $\cG_t$ is log-Normal~$\cLN(m_{t, T}^{n}, v_{t, T}^{n})$ with
    \begin{equation}\label{ct-eq:mean col}
        m_{t, T}^{n} := \log{(R_n C_0^n)} + \chi_{t+T} - \left(\chi_0-K_0 \right)e^{-\nu (t+T)} + \overline{\sigma}\rho^\top \int_{0}^{t} e^{-\nu(t+T-s)} \dr B^{\cZ}_s,
    \end{equation}
    and
    \begin{equation}\label{ct-eq:var col}
        v_{t, T}^{n} :=  \frac{(\overline{\sigma}\lVert\rho\rVert)^2)}{2\nu}\left(1-e^{-2\nu T}\right) + \frac{(\overline{\sigma})^2 (1-\lVert\rho\rVert^2)}{2\nu}\left(1-e^{-2\nu (t+T)}\right).
    \end{equation}
\end{corollary}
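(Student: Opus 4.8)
The plan is to solve the linear SDE~\eqref{ct-eq:coldynamics2} in closed form and then read off the conditional law of $K_{t+T}$ from its Gaussian structure, from which the log-Normality of $C^n_{t+T}=R_n C_0^n\exp(K_{t+T})$ is immediate. First I would eliminate the time-dependent drift by setting $Y_t:=K_t-\chi_t$; then $\dr Y_t=\dr K_t-\dot\chi_t\,\dr t=-\nu(K_t-\chi_t)\,\dr t+\overline{\sigma}\,\dr\overline{B}_t=-\nu Y_t\,\dr t+\overline{\sigma}\,\dr\overline{B}_t$, so that $Y$ is a standard Ornstein--Uhlenbeck process reverting to zero at rate $\nu$. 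Its solution is $Y_t=Y_0 e^{-\nu t}+\overline{\sigma}\int_0^t e^{-\nu(t-s)}\,\dr\overline{B}_s$, and hence
\begin{equation*}
    K_{t+T}=\chi_{t+T}+(K_0-\chi_0)e^{-\nu(t+T)}+\overline{\sigma}\int_0^{t+T} e^{-\nu(t+T-s)}\,\dr\overline{B}_s.
\end{equation*}

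Next I would insert the decomposition~\eqref{ct-eq:coldynamics3}, namely $\dr\overline{B}_s=\rho^\top\dr B^{\cZ}_s+\sqrt{1-\lVert\rho\rVert^2}\,\dr\overline\cW_s$, and split the stochastic integral at time~$t$. The key observation is that $\cG_t=\sigma(\cZ_0,B^{\cZ}_s:s\le t)$ contains $B^{\cZ}$ only up to time $t$ and does not contain $\overline\cW$ at all. Consequently the sole $\cG_t$-measurable term is $\overline{\sigma}\rho^\top\int_0^t e^{-\nu(t+T-s)}\,\dr B^{\cZ}_s$, whereas the three remaining pieces---the $\overline\cW$-integral over $[0,t]$ together with both the $B^{\cZ}$- and $\overline\cW$-integrals over $[t,t+T]$---are centered and independent of $\cG_t$ (increments of $B^{\cZ}$ after $t$ are independent of $\cG_t$, and $\overline\cW$ is independent of $B^{\cZ}$ throughout). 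Taking $\EE[\,\cdot\mid\cG_t]$ therefore kills these three pieces, and after adding $\log(R_n C_0^n)$ one recovers exactly the stated conditional mean $m_{t,T}^n$.

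For the conditional variance I would apply the It\^o isometry to each of the three conditionally random pieces, which are pairwise uncorrelated because they either live on disjoint time intervals or are driven by the independent processes $B^{\cZ}$ and $\overline\cW$. The $\overline\cW$-integral over $[0,t]$ contributes $\overline{\sigma}^2(1-\lVert\rho\rVert^2)\int_0^t e^{-2\nu(t+T-s)}\,\dr s=\tfrac{\overline{\sigma}^2(1-\lVert\rho\rVert^2)}{2\nu}\bigl(e^{-2\nu T}-e^{-2\nu(t+T)}\bigr)$, while the full $\overline{B}$-integral over $[t,t+T]$ (using that $\overline{B}$ is a standard Brownian motion) contributes $\overline{\sigma}^2\int_t^{t+T} e^{-2\nu(t+T-s)}\,\dr s=\tfrac{\overline{\sigma}^2}{2\nu}\bigl(1-e^{-2\nu T}\bigr)$. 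Summing and regrouping yields $\tfrac{\overline{\sigma}^2}{2\nu}\bigl(1-\lVert\rho\rVert^2 e^{-2\nu T}-(1-\lVert\rho\rVert^2)e^{-2\nu(t+T)}\bigr)$, which is precisely $v_{t,T}^n$. Since $K_{t+T}$ is an affine functional of Gaussian stochastic integrals, $K_{t+T}\mid\cG_t$ is Gaussian with this mean and variance, and exponentiating gives $C^n_{t+T}\mid\cG_t\sim\cLN(m_{t,T}^n,v_{t,T}^n)$.

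The step I expect to be most delicate is the bookkeeping underlying the second paragraph: one must resist assigning the entire $[0,t]$ portion of the stochastic integral to the conditional mean. Only its $B^{\cZ}$-component is $\cG_t$-measurable; the $\overline\cW$-component over $[0,t]$ is independent of $\cG_t$ and must instead feed the variance. Getting this separation right is exactly what produces the asymmetric roles of $\lVert\rho\rVert^2$ and $1-\lVert\rho\rVert^2$ and the two distinct decay factors $e^{-2\nu T}$ and $e^{-2\nu(t+T)}$ appearing in $v_{t,T}^n$.
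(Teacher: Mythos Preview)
Your proof is correct and follows precisely the route one expects: solve the linear SDE~\eqref{ct-eq:coldynamics2} via the substitution $Y_t=K_t-\chi_t$, decompose the resulting stochastic integral along the filtration $\cG_t$, and identify the conditionally Gaussian law. The paper itself does not spell out a proof, deferring it as ``straightforward'' to the companion reference~\cite{sopgoui2024realestate}; your argument is exactly the standard computation that reference presumably contains. Your final remark about not conflating the full $[0,t]$ integral with the $\cG_t$-measurable part is well taken and is indeed the only place one could slip.
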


\paragraph{An example of the energy price function}
We can assume that the price of each type of energy~$\pf$ is a linear function of the carbon price, therefore
\begin{equation}\label{ct-eq:f}
    \ff: (\delta_t,\pf) \mapsto \ff^\pf_1 \delta_t + \ff^\pf_0\qquad t\geq 0,
\end{equation}
with $\ff^\pf_1, \ff^\pf_0>0$ and $\delta$ is the carbon price defined in the \cref{ct-sassc:price} or an example given in~\eqref{ct-eq:carbon price}.

\paragraph{An example of the renovation costs function}
We can consider that the costs of renovation of a dwelling~$\cc$, to move its energy efficiency from $x$ to $y$, is
\begin{equation}\label{ct-eq:c}
    \cc: (x,y) \mapsto c_0 |x-y|^{1+c_1},
\end{equation}
with $c_0>0$ and $c_1\geq -1$. This choice of $\cc$ allows us to model that when a building has a bad energy efficiency, its renovation is costly. {This is similar to that of \cite{sopgoui2024realestate} and is inspired on the data given in \cite{caumont2016renovation}, which summarizes the results of the survey on energy renovation
of single-family homes in France.}
\paragraph{An example of the optimal renovation time}
 With the example of the carbon price in~\eqref{ct-eq:carbon price}, the example of the energy price in~\eqref{ct-eq:f}, and the example of the renovation costs in~\eqref{ct-eq:c}, the optimal renovation time, solution of~\eqref{ct-eq:optimal t_n} is given by
\begin{align}\label{ct-eq:ex optimal t_n}
    \ft_n = t_\circ + \frac{1}{\eta_\delta} \log{\left(\frac{c_0 r |\alpha^n-\alpha^\star|^{c_1} -  \ff^\pf_0}{\ff^\pf_1 P_{carbon}}\right)}.
\end{align}
We can clearly remark that the optimal renovation date depends on the climate transition policy ($P_{carbon}$ and $\eta_\delta$), on the energy prices ($\ff^\pf_0$ and $\ff^\pf_1$), on the renovation costs ($c_0$ and $c_1$), and on the energy efficiencies ($\alpha^n$ and $\alpha^\star$).

By using the housing price under the climate transition as given in \cref{ct-prop:housing}, we can then derive a precised expression of $\LGD$ when the collateral exists and is a building. We have:
\begin{theorem}\label{ct-theorem: LGD t real estate}
Let~$1\leq n\leq N$. When $a = 0$ (no liquidation delay), the Loss Given Default of the obligor $n$ is \textit{over-indebted} at time~$t\in\RR_+$, conditional on $\cG_t$, is
    \begin{equation}\label{ct-eq: LGD t real estate}
        \LGD_{t,\delta}^n = (1-\gamma) \left[\left(1+ (1-k)\frac{R_n X_{t,\delta}^n}{\EAD_t^n}\right)\Phi\left(\frac{w_t^n}{\sqrt{v_{t, 0}^{n}}} \right) - \exp{\left(-w_t^n + \frac{1}{2} v_{t, t}^{n} \right)}  \Phi\left(\frac{w_t^n}{\sqrt{v_{t, 0}^{n}}} - \sqrt{v_{t, 0}^{n}} \right) \right],
    \end{equation}
    where
    \begin{equation}\label{ct-eq: num LGD re}
        w_t^n :=\log{\left(\frac{\EAD_t^n}{(1-k) R_n}+X_{t,\delta}^n\right)} - m_{t, 0}^{n},
    \end{equation}
    and with $ m_{t, 0}^{n}$ and $v_{t, t}^{n}$ defined in Corollary~\ref{ct-cor: cond law coll 2}, and $X_{t,\delta}^n$ defined in~\eqref{ct-eq:coldynamics4}.
\end{theorem}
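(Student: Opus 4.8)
The plan is to reduce the statement to a direct application of Lemma~\ref{ct-lemma: LGD t}, taking the underlying log-normal process to be the efficient-building price $C^n_t$ rather than the collateral value $\cC^n_{t,\delta}$ itself. First I would set $a=0$ in the definition~\eqref{ct-eq:def LGD} of $\LGD^n_{t,\dd}$, so that the collateral entering the loss is $\cC^n_{t,\delta}=C^n_t-R_nX^n_{t,\delta}$ by~\eqref{ct-eq:coldynamics1}. Since the carbon price $\delta$ is deterministic and the optimal renovation date $\ft_n$ depends only on $\delta$, the quantity $X^n_{t,\delta}$ defined in~\eqref{ct-eq:coldynamics4} is a deterministic constant; hence I can rewrite the integrand as
\begin{equation*}
\left(1-(1-k)\frac{\cC^n_{t,\delta}}{\EAD_t^n}\right)_{+}=\left(u-(1-k)\frac{C^n_t}{\EAD_t^n}\right)_{+},\qquad u:=1+(1-k)\frac{R_nX^n_{t,\delta}}{\EAD_t^n}.
\end{equation*}
This isolates the log-normal factor $C^n_t$ and matches exactly the conditional expectation evaluated in Lemma~\ref{ct-lemma: LGD t}, with $K^n_t$ replaced by $C^n_t$ and with the constant $u$ above (note that $u$ is precisely the prefactor appearing in front of the first Gaussian term of~\eqref{ct-eq: LGD t real estate}).

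Next I would verify the two hypotheses of the lemma. For the distributional assumption, Corollary~\ref{ct-cor: cond law coll 2} with $T=0$ gives $\log C^n_t\,|\,\cG_t\sim\cN(m^n_{t,0},v^n_{t,0})$, so I set $m^n_t=m^n_{t,0}$ and $(\sigma^n_t)^2=v^n_{t,0}$. For the conditional-independence assumption, I would argue that, given $\cG_t$, the only remaining randomness in $C^n_t$ comes from the idiosyncratic building noise $\overline{\cW}$, because the common driver $B^{\cZ}$ entering $\overline B$ through~\eqref{ct-eq:coldynamics3} is $\cG_t$-measurable; meanwhile the only remaining randomness in $\cV^n_{t,\dd}$ comes from the debtor noise $\cW^n$. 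Since $\overline{\cW}$ and $\cW^n$ are independent, $C^n_t|\cG_t$ and $\cV^n_{t,\dd}|\cG_t$ are independent, as required.

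With both hypotheses in place, applying Lemma~\ref{ct-lemma: LGD t} and multiplying by $(1-\gamma)$ yields the two-term formula: $\sigma^n_t=\sqrt{v^n_{t,0}}$ produces the arguments $w^n_t/\sqrt{v^n_{t,0}}$ of the Gaussian c.d.f.\ and the factor $\exp\!\big(-w^n_t+\tfrac12 v^n_{t,0}\big)$. The last step is to simplify the numerator $w^n_t=\log\!\big(u\,\EAD_t^n/(1-k)\big)-m^n_{t,0}$ coming from~\eqref{ct-eq: num LGD lemma}; substituting $u$ gives $u\,\EAD_t^n/(1-k)=\EAD_t^n/(1-k)+R_nX^n_{t,\delta}$, and factoring $R_n$ out of the logarithm reproduces the expression~\eqref{ct-eq: num LGD re}.

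I expect the conditional-independence verification to be the only genuinely delicate point. Unlike the financial-asset case of \cref{ct-sec:col fin asset}, where the collateral cash flows are driven by noises declared independent of the debtor's, here the building price is correlated with macroeconomic productivity through the term $\rho^\top B^{\cZ}$ in~\eqref{ct-eq:coldynamics3}; so $C^n_t$ and $\cV^n_{t,\dd}$ are \emph{not} unconditionally independent. The argument works only after conditioning on $\cG_t$, which freezes the shared factor $B^{\cZ}$ and leaves the mutually independent idiosyncratic noises $\overline{\cW}$ and $\cW^n$. Everything else is a mechanical substitution into the already-established Lemma~\ref{ct-lemma: LGD t}.
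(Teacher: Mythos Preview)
Your proposal is correct and follows essentially the same route as the paper: set $a=0$, use $\cC^n_{t,\delta}=C^n_t-R_nX^n_{t,\delta}$ to absorb the deterministic term into $u=1+(1-k)R_nX^n_{t,\delta}/\EAD^n_t$, verify the log-normality of $C^n_t|\cG_t$ via Corollary~\ref{ct-cor: cond law coll 2} and the conditional independence of $C^n_t$ and $\cV^n_{t,\dd}$ given $\cG_t$, and then apply Lemma~\ref{ct-lemma: LGD t}. In fact your justification of the conditional independence---that conditioning on $\cG_t$ freezes the common driver $B^{\cZ}$ and leaves only the mutually independent idiosyncratic noises $\overline{\cW}$ and $\cW^n$---is more explicit than the paper's, which simply asserts the two hypotheses and invokes the lemma.
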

\noindent We can also verify that when there is not collateral corresponding to $C^n_0 = 0$. We then have  
\begin{equation*}
    \begin{split}
        C^n_0 \to 0 &\implies \log{(R_n C^n_0)} \to -\infty\implies m_{t, 0}^{n} \to-\infty \implies w_t^n\to+\infty \implies\LGD_{t,\delta}^n \to 1-\gamma.
    \end{split}
\end{equation*}
It is even worse when the costs associated with the transition explode, LGD also explodes as 
\begin{equation*} 
X_{t,\delta}^n \to +\infty \implies w_t^n\to+\infty\implies\LGD_{t,\delta}^n \to +\infty.
\end{equation*}

\begin{proof}
    Let~$t\geq 0$ and $1\leq n\leq N$. By remarking that 
    \begin{enumerate}
        \item $\log{C^n_{t,\dd}}|\cG_t \sim \cN\left(m_{t, 0}^{n}, v_{t, 0}^{n} \right)$, and
        \item $C_{t,\dd}^n|\cG_t$ and $\cV_{t,\dd}^n|\cG_t$ are independent,
        \item and from~\eqref{ct-eq:def LGD} when $a=0$, we have
        \begin{align*}
            \LGD_{t,\delta}^n 
            &= (1-\gamma) \EE\left[ \left(1 - (1-k) \frac{\cC^n_{t,\delta}}{\EAD_{t}^n}\right)_{+}\middle|\cV^n_{t,\dd}<\cD^n_t,\cG_t\right]\\
        &= (1-\gamma) \EE\left[ \left(1 + (1-k) \frac{R_n X^n_{t,\delta}}{\EAD_{t}^n} - (1-k) \frac{C^n_{t,\delta}}{\EAD_{t}^n}\right)_{+}\middle|\cV^n_{t,\dd}<\cD^n_t,\cG_t\right],
        \end{align*}
    \end{enumerate}
    we can simply apply Lemma~\ref{ct-lemma: LGD t} with $u = 1 + (1-k) \frac{R_n X^n_{t,\delta}}{\EAD_{t}^n}$.
\end{proof}

Once again, we want to compute the (conditional) Loss Given Default of the entity $n$ at time $t$ on the horizon $T$. We can formalize that in the following proposition:
\begin{proposition}[Projected LGD]\label{ct-pr cond loss re}
For each $t,T \geq 0$ and $1\leq n \leq N$, the (conditional) Loss Given Default of the  entity~$n$ at time~$t$ on the horizon~$T$, reads
\begin{footnotesize}
\begin{equation}\label{ct-eq:stressedLGD re}
\begin{split}
    \LGD^n_{t,T,\delta} &= \frac{1-\gamma}{\PD^n_{t,T,\dd}} \left[\left(1+ (1-k)e^{-r a}\frac{R_n X_{t+T+a,\delta}^n}{\EAD_{t+T}^n}\right)\Phi_2\left(\overline\omega_{t, T, a}^n,\Phi^{-1}(\PD^n_{t,T,\dd});\rho_{t, T, a}^n\right)\right.\\
        &\qquad\qquad\quad\left.- \exp{\left(\frac{1}{2}v_{t, T+a}^{n} -\sqrt{v_{t, T+a}^{n}}\overline\omega_{t, T, a}^n\right)}\Phi_2\left(\overline\omega_{t, T, a}^n-\sqrt{v_{t, T+a}^{n}},\Phi^{-1}(\PD^n_{t,T,\dd})-\rho_{t, T, a}^n\sqrt{v_{t, T+a}^{n}};\rho_{t, T, a}^n\right) \right],
\end{split}
    \end{equation}
\end{footnotesize}
where
\begin{equation*}
    \rho_{t, T, a}^n := \overline{\sigma}\varsigma \frac{ \af^{n\cdot}\Gamma^{-1} \left(\int_{0}^{T} e^{-\nu(u+a)}\left(\Ir_I - e^{-\Gamma u} \right)\Sigma\dr u \right)\rho}{\sqrt{\mathcal{L}^n(t,T) \times v_{t, T+a}^{n}}},
\end{equation*}
and
\begin{equation*}
    \overline w_{t, T, a}^n := \frac{\log{\left(\frac{\EAD_{t+T}^n}{(1-k)R_ne^{-r a}}+X_{t+T+a,\delta}^n \right)} - m_{t, T+a}^{n}}{\sqrt{v_{t, T+a}^{n}}},
    \end{equation*}
and with $m_{t, t+T+a}^{n}$ and $v_{t, t+T+a}^{n}$ defined in Corollary~\ref{ct-cor: cond law coll 2} and $\PD^n_{t,T,\dd}$ defined in \cref{ct-pr cond pd}.
\end{proposition}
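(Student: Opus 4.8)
The plan is to reproduce the scheme of the proof of Proposition~\ref{ct-pr cond loss and pd}, the two differences being the algebraic form in which the real-estate collateral enters the payoff and the covariance that couples the firm to its guarantee. I start from the observation, coming from \cref{ct-prop:housing}, that the collateral value splits as $\cC^n_{t+T+a,\delta} = C^n_{t+T+a} - R_n X_{t+T+a,\delta}^n$, where $C^n_{t+T+a} = R_n C^n_0\,e^{K_{t+T+a}}$ is, conditionally on $\cG_t$, log-Normal by \cref{ct-cor: cond law coll 2}, while $R_n X_{t+T+a,\delta}^n$ is deterministic. Substituting this into the integrand of the projected LGD, I rewrite
\begin{equation*}
1-(1-k)e^{-r a}\frac{\cC^n_{t+T+a,\delta}}{\EAD_{t+T}^n} = u - (1-k)e^{-r a}\frac{C^n_{t+T+a}}{\EAD_{t+T}^n}, \qquad u := 1 + (1-k)e^{-r a}\frac{R_n X_{t+T+a,\delta}^n}{\EAD_{t+T}^n},
\end{equation*}
which casts the problem in exactly the form required by Lemma~\ref{ct-lemma cond loss and pd}, with the role of the process $K^n$ played by the log-Normal factor $C^n$ and with the (deterministic, $\cG_t$-measurable) value of $u$ above.

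Next, exactly as in Proposition~\ref{ct-pr cond loss and pd}, I expand $\EE[L^\GG_{n,t+T}\mid\cG_t]$ through \cref{ct-theo:gordy2003} and the tower property to reduce each summand to $\EE\big[(u-(1-k)e^{-r a}C^n_{t+T+a}/\EAD_{t+T}^n)_+\,\bOne_{\{\cV^n_{t+T,\dd}<\cD^n_{t+T}\}}\,\big|\,\cG_t\big]$, and I divide by $\PD^n_{t,T,\dd}$ to isolate $\LGD^n_{t,T,\delta}$. To invoke Lemma~\ref{ct-lemma cond loss and pd} I must furnish the joint conditional law of $(\log\cV^n_{t+T},\log C^n_{t+T+a})$ given $\cG_t$. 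Its two marginals are already at hand: $\cN(\cK^n(\dd,t,T,\cA_t^\circ,\cZ_t),\mathcal{L}^n(t,T))$ for the firm from \cref{ct-cor:law Vt}, and $\cN(m_{t,T+a}^n,v_{t,T+a}^n)$ for the collateral from \cref{ct-cor: cond law coll 2}. Only the off-diagonal entry $cv^n_{t,T,a}$ remains to be determined.

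The computation of $cv^n_{t,T,a}=\mathrm{cov}(\af^{n\cdot}\cA^\circ_{t+T},K_{t+T+a}\mid\cG_t)$ is the step I expect to be the main obstacle, precisely because it is where the real-estate case departs from the financial-asset case: the firm and the collateral are now driven by \emph{different} noises, the firm through $\Sigma\,\dr B^{\cZ}$ (via $\cA^\circ$) and the collateral through $\overline{B}$ with $\dr\overline{B}_s=\rho^\top\dr B^{\cZ}_s+\sqrt{1-\lVert\rho\rVert^2}\,\dr\overline\cW_s$, so that their correlation passes only through the $\rho$-projection of the common factor $B^{\cZ}$ rather than through $\Sigma\Sigma^\top$. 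By stochastic Fubini and \eqref{ct-eq:Upsilon}, the $\cG_t$-innovation of $\af^{n\cdot}\cA^\circ_{t+T}$ is $\varsigma\,\af^{n\cdot}\Gamma^{-1}\int_t^{t+T}(\Ir_I-e^{-\Gamma(t+T-u)})\Sigma\,\dr B^{\cZ}_u$, while solving the linear SDE for $K$ gives, for the $\cG_t$-innovation of $K_{t+T+a}$, the term $\overline\sigma\int_t^{t+T+a}e^{-\nu(t+T+a-s)}\rho^\top\dr B^{\cZ}_s$ together with a $\overline\cW$-integral and the pre-$t$ increments of $B^{\cZ}$; the former is independent of $\cG_t$ and of the firm noise, and the latter are $\cG_t$-measurable, so neither contributes to the covariance. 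Applying the It\^o isometry over the common support $[t,t+T]$ of the two $B^{\cZ}$-integrals yields
\begin{equation*}
cv^n_{t,T,a} = \varsigma\,\overline\sigma\int_t^{t+T}\af^{n\cdot}\Gamma^{-1}\big(\Ir_I-e^{-\Gamma(t+T-u)}\big)\Sigma\,e^{-\nu(t+T+a-u)}\rho\,\dr u,
\end{equation*}
and the change of variable $u'=t+T-u$ turns the right-hand side into $\overline\sigma\,\varsigma\,\af^{n\cdot}\Gamma^{-1}\big(\int_0^T e^{-\nu(u'+a)}(\Ir_I-e^{-\Gamma u'})\Sigma\,\dr u'\big)\rho$, which is exactly the numerator of $\rho^n_{t,T,a}$ in the statement. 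Finally, with the full joint conditional law assembled and $\rho^n_{t,T,a}=cv^n_{t,T,a}/\sqrt{\mathcal{L}^n(t,T)\,v_{t,T+a}^n}$, I apply Lemma~\ref{ct-lemma cond loss and pd} with the value $u$ determined above; reading off $\overline\omega^n_{t,T,a}$ from the lemma and dividing by $\PD^n_{t,T,\dd}$ (given by \cref{ct-pr cond pd}) produces \eqref{ct-eq:stressedLGD re}, the constant $u=1+(1-k)e^{-r a}R_n X_{t+T+a,\delta}^n/\EAD_{t+T}^n$ reappearing as the prefactor of the first $\Phi_2$ term, which is the only point where the expression differs from the financial-asset formula.
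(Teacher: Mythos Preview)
Your proposal is correct and follows essentially the same route as the paper: decompose the collateral into its log-Normal part $C^n_{t+T+a}$ and the deterministic offset $R_nX^n_{t+T+a,\delta}$, absorb the latter into the constant $u$, identify the two conditional marginals from \cref{ct-cor:law Vt} and \cref{ct-cor: cond law coll 2}, compute the cross-covariance through the common $B^{\cZ}$-driver, and invoke Lemma~\ref{ct-lemma cond loss and pd}. Your covariance derivation via the It\^o isometry and the substitution $u'=t+T-u$ is in fact more explicit than the paper's, which simply records the final expression for $cv_{t,T,a}$.
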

    
\begin{proof}
    Let~$t, T\geq 0$ and $1\leq n\leq N$.
    From the beginning of the proof of \cref{ct-pr cond loss and pd}, we have
\begin{align*}
\EE\left[L_{n,t+T}^\GG\middle|\cG_t\right]
&= (1-\gamma)\EAD_{t+T}^n \EE\left[ \left(1-(1-k)e^{-r a}\frac{\cC^n_{t+T+a,\dd}}{\EAD_{t+T}^n}\right)_{+}\cdot \bOne_{\{ \cV^n_{t+T,\dd}<\cD^n_{t+T}\}}\middle|\cG_t\right],
\end{align*}
then
\begin{footnotesize}
\begin{align*}
\EE\left[L_{n,t+T}^\GG\middle|\cG_t\right]
&= (1-\gamma)\EAD_{t+T}^n \EE\left[ \left(1+(1-k)e^{-r a} \frac{R_n X_{t+T+a,\delta}^n}{\EAD_{t+T}^n}- (1-k)e^{-r a} \frac{\cC_{t+T+a,\delta}^n}{\EAD_{t+T}^n}\right)_{+}\cdot \bOne_{\{ \cV^n_{t+T,\dd}<\cD^n_{t+T}\}}\middle|\cG_t\right].
\end{align*}
\end{footnotesize}
Remark that 
\begin{equation*}
\begin{split}
    \log{C_0^n e^{K_{t+T+a}}} &= \log{(C_0^n)} + \chi_{t+T+a} - (\chi_0 - K_0)e^{-\nu (t+T+a)}\\
    &\qquad+ \overline{\sigma}\rho^\top \int_{0}^{t+T+a} e^{-\nu(t+T+a-s)} \dr B_s^{\cZ}  + \overline{\sigma}\sqrt{1-\lVert\rho\rVert^2}  \int_{0}^{t+T+a} e^{-\nu(t+T+a-s)} \dr \overline \cW_s,
\end{split}    
\end{equation*}
and recall that $\log{(\cV^n_{t+T,\dd})} = \log{(F^n_0 \mathfrak{R}^n_{t+T}(\dd))} + \af^{n\cdot}(\cA_{t+T}^\circ-v(\dd_{0})) +\sigma_n\cW^n_{t+T}$. Therefore,
\begin{equation*}
\begin{split}
    cov\left(\log{C_0^n e^{K_{t+T+a}}}, \log{F^n_{t+T}}\middle|\cG_t\right) &= cov\left(\af^{n\cdot}\cA_{t+T}^\circ, \overline{\sigma}\rho^\top \int_{0}^{t+T+a} e^{-\nu(t+T+a-s)} \dr B_s^{\cZ} \middle|\cG_t\right)\\
    &= \overline{\sigma} \af^{n\cdot} cov\left(\cA_{t+T}^\circ, \int_{0}^{t+T+a} e^{-\nu(t+T+a-s)} \dr B_s^{\cZ} \middle|\cG_t\right) \rho\\
    &= \overline{\sigma}\varsigma \af^{n\cdot}\Gamma^{-1} \left(\int_{0}^{T} e^{-\nu(u+a)}\left(\Ir_I - e^{-\Gamma u} \right)\Sigma\dr u \right)\rho := cv_{t,T,a}.
    \end{split}
\end{equation*}
Consequently, we can write
\begin{small}
    \begin{equation}
\begin{bmatrix}
\log{F_{t+T}^n}\\
\log{C_0^n e^{K_{t+T+a}}}
\end{bmatrix} |\cG_t \sim \cN\left(\begin{bmatrix}
\cK^n(\dd, t, T, \cA_t^\circ, \cZ_t)\\
m_{t, T+a}^{n}
\end{bmatrix},
\begin{bmatrix}
\mathcal{L}^n(t,T)&cv_{t,T,a}\\
cv_{t,T,a}& v_{t, T+a}^{n}
\end{bmatrix} \right).
\end{equation}
\end{small}
Then, we use the Lemma~\ref{ct-lemma cond loss and pd} with $u=1+ (1-k)e^{-r a}\frac{R_n X_{t+T+a,\delta}^n}{\EAD_{t+T}^n}$ to conclude the proof.
\end{proof}
We can remark that~$\LGD^n_{t,\delta}$ as well as $\LGD^n_{t,T,\delta}$ are also functions of the optimal renovation time~$\ft_n$. Furthermore, if both the financial asset and the housing price are affected by the climate transition through their dependence on the carbon price sequence~$\delta$, the financial asset depends also on the carbon price intensities (of firms production/consumption and of households consumption)~$(\tau, \zeta, \kappa)$ which are not specific to a given company but to the economy as a whole. The housing price is clearly affected by specific climate factors, namely the energy efficiency~$\alpha^n$ and the renovation date~$\ft_n$.

\subsection{Expected and Unexpected losses}\label{ct-sec:EL_UL}

Let us recall that we have a portfolio with N loans. We assume that loans from 1 to $N_1$ are unsecured, loans from $N_1+1$ to $N_2$ are secured by a financial asset as collateral, and loans from $N_2+1$ to $N$ are secured by a commercial or residential property as collateral.

We write the expression of the portfolio EL and UL as functions of the parameters and of the processes introduced above, and introduce the entity's probability of default.\\

We can therefore give expressions of EL and UL. Let~$t, T \geq 0$, the (conditional) Expected Loss of the  portfolio at time~$t$ on the horizon~$T$ defined in~\eqref{ct-eq:el}, reads
  \begin{equation}
        \begin{split}
            \EL^{N,T}_{t} &= \EE\left[\mathrm{L}^{\GG,N}_{t+T}\middle|\cG_t\right] = \sum_{n=1}^{N} \EAD_{t+T}^n \cdot \LGD^n_{t,T,\dd } \cdot \PD^n_{t,T,\dd}\\
            &= \sum_{n=1}^{N_1} \EAD_{t+T}^n \cdot \LGD^n_{t,T,\dd } \cdot \PD^n_{t,T,\dd}+ \sum_{n=N_1+1}^{N_2} \EAD_{t+T}^n \cdot \LGD^n_{t,T,\dd } \cdot \PD^n_{t,T,\dd}\\
            &\qquad\qquad+ \sum_{n=N_2+1}^{N} \EAD_{t+T}^n \cdot \LGD^n_{t,T,\delta } \cdot \PD^n_{t,T,\dd}.\label{ct-eq:stressedEL}
        \end{split}
    \end{equation}
We can then compute each term conditionally to~$\cG_t$.
\begin{enumerate}
    \item Given that $\LGD^n_{t,T,\dd } = 1-\gamma$ for $1\leq n\leq N_1$, to compute {$\sum_{n=1}^{N_1} \EAD_{t+T}^n \cdot \LGD^n_{t,T,\dd } \cdot \PD^n_{t,T,\dd}$}, all you have to do is calculate $\PD^n_{t,T,\dd}$.
    \item Given that $N_1+1\leq n\leq N_2$, the collaterals are financial assets, therefore, to compute $\sum_{n=N_1+1}^{N_2} \EAD_{t+T}^n \cdot \LGD^n_{t,T,\dd} \cdot \PD^n_{t,T,\dd}$, we compute first $\PD^n_{t,T,\dd}$. Then we compute~$\LGD^n_{t,T,\dd}$ directly through~\eqref{ct-eq: LGD t T invest}.
     \item Given that {$N_2+1\leq n\leq N$}, the collaterals are properties, therefore, to compute $\sum_{n=N_2+1}^{N}\EAD_{t+T}^n \cdot \LGD^n_{t,T,\delta} \cdot \PD^n_{t,T,\dd}$, we compute first $\PD^n_{t,T,\dd}$. Then we compute~$\LGD^n_{t,T,\delta}$ directly through~\eqref{ct-eq:stressedLGD re}.
\end{enumerate}

For $\alpha\in(0,1)$, the (conditional) Unexpected Loss of the  portfolio at time~$t$ on the horizon~$T$,  cannot be obtained in closed-form as EL. Precisely, there is not a closed-form expression neither of $\UL^{\alpha, N,T}_t$ nor of $\VaR^{\alpha, N,T}_t$. But we can describe how to compute $\VaR^{\alpha, N,T}_t$ given that $\PP\left[L_{t+T}^{\GG,N} \leq \VaR^{\alpha, N,T}_t\middle|\cG_t \right]$ as introduced in~\eqref{ct-eq:ul}. First, let us note that from \cref{ct-theo:gordy2003}, we have
\begin{equation*}
    \begin{split}
        L_{t+T}^{\GG,N}&= \sum_{n=1}^{N} \EAD_{t+T}^n \cdot \LGD_{t+T,\dd}^n\cdot \PD_{t+T,\dd}^n \\
        &= \sum_{n=1}^{N_1} \EAD_{t+T}^n \cdot \LGD_{t+T,\dd}^n\cdot \PD_{t+T,\dd}^n + \sum_{n=N_1+1}^{N_2}\EAD_{t+T}^n \cdot \LGD_{t+T,\dd}^n\cdot \PD_{t+T,\dd}^n\\
            &\qquad\qquad+ \sum_{n=N_2+1}^{N} \EAD_{t+T}^n \cdot \LGD_{t+T,\delta}^n\cdot \PD_{t+T,\dd}^n.
    \end{split}
\end{equation*}
We can then describe each term's law conditionally to~$\cG_t$. 
\begin{enumerate}
    \item $\LGD_{t+T,\dd}^n = 1-\gamma$ and from~\eqref{ct-eq:def PD}, we have $\PD_{t+T,\dd}^n$ which depends on $\cA_{t+T}$. Then to simulate law of $\sum_{n=1}^{N_1} \EAD_{t+T}^n \cdot \LGD_{t+T,\dd}^n\cdot \PD_{t+T,\dd}^n$ conditional on $\cG_t$, just simulate $\cA_{t+T}|\cG_t$.
    \item From~\eqref{ct-eq: LGD t invest}, we have $\LGD_{t+T,\dd}^n$ which depends on $\cA_{t+T}$ through $w_{t+T}^n$ defined in~\eqref{ct-eq: num LGD}. We said in the previous item that $\PD_{t+T,\dd}^n$ depends on $\cA_{t+T}$. Therefore, to simulate law of $\sum_{n=N_1+1}^{N_2} \EAD_{t+T}^n \cdot \LGD_{t+T,\dd}^n\cdot \PD_{t+T,\dd}^n$ conditional on $\cG_t$, just simulate $\cA_{t+T}|\cG_t$.
    \item From~\eqref{ct-eq: LGD t real estate}, we have $\LGD_{t+T,\dd}^n$ which depends on $\int_{0}^{t+T} e^{-\nu(t+T-s)} \dr B^{\cZ}_s$ through $w_{t+T}^n$ defined in~\eqref{ct-eq: num LGD re}. We said in the previous item that $\PD_{t+T,\dd}^n$ depends on $\cA_{t+T}$. Therefore, to simulate law of $\sum_{n=N_2+1}^{N} \EAD_{t+T}^n \cdot \LGD_{t+T,\delta}^n\cdot \PD_{t+T,\dd}^n$ conditional on $\cG_t$, just simulate $\cA_{t+T}|\cG_t$ and $\int_{0}^{t+T} e^{-\nu(t+T-s)} \dr B^{\cZ}_s|\cG_t$ (which are in fact the same because both $\cA_{t+T}$ and $\int_{0}^{t+T} e^{-\nu(t+T-s)} \dr B^{\cZ}_s$ are $\cG_{t+T}$-measurable).
\end{enumerate}

\subsection{Remarks on the determinants of LGD}\label{sec:remark_on_LGD}
The results ~\eqref{ct-eq: LGD t invest} and~\eqref{ct-eq:stressedLGD re} tell us that, in the case the collateral is an investment, Loss Given Default depends on:
\begin{enumerate}
    \item the carbon price~$\delta$ for both~\eqref{ct-eq: LGD t invest} and~\eqref{ct-eq:stressedLGD re},
    \item parameters specific to the company (the contract),
    \begin{itemize}
        \item the time~$t$ when it is computed,
        \item the date of default~$t+T$,
        \item the Exposure at Default~$EAD$,
    \end{itemize}
    \item parameters specific to the collateral,
    \begin{itemize}
        \item its liquidation time $t+T+a$,
        \item the liquidation costs~$k$,
        \item the correlation of its cash flows with the environment~$\overline{\af}$,
        \item the standard deviation of its cash flows~$\overline{\sigma}_{\bb}$,
        \item the fraction of recovery from other means~$\gamma$,
    \end{itemize}
    \item the nature of the collateral:
    \begin{itemize}
        \item if it is a financial asset, then parameters related to the carbon intensities~$\tau, \zeta, \kappa$,
        \item if it is a building, then parameters related to the energy efficiency~$\alpha$, type of energy $\pf$, and renovation costs~$\cc$,
    \end{itemize}
    \item parameters specific to the economy  to which the colateral belongs to:
    \begin{itemize}
        \item the (cumulative) productivity $\cA$ (and its parameters) of the economy,
        \item the interest rates~$r$ and $\bar r$.
    \end{itemize}
\end{enumerate}
Some of these typical risk drivers are reported by~\cite{chalupka2008modelling}.\\

We could also look at the sensitivities of the LGD to each of these variables and parameters. However, the expressions of LGD we obtained are not very tractable so that it would be difficult to get detailed expressions of theses sensitivities. If necessary, they can be calculated using numerical methods {(see \cite{caprioli2023quantifying, fermanian2005sensitivity})}.

\section{Numerical experiments, estimation and calibration}\label{sec:numerical exp}
In this section, we describe how the parameters of multisectoral model, of the firm valuation model, and of the credit risk model are estimated given the historical macroeconomic variables (consumption, labour, output, GHG emissions, housing prices, etc.) as well as the historical credit portfolio data (firms rated and defaulted, collateral, etc.) In a second step, we give the expression of the risk measures (PD, LGD, EL, and UL) introduced in the previous sections, that we compute using Monte Carlo simulations.

\subsection{Calibration and estimation}\label{ct-sec3}
We will calibrate the model parameters on a set of data ranging from time $\ft_0$ to $\ft_1$. In practice, $\ft_0=1978$ and $\ft_1=t_\circ = 2021$. From now on, we will discretize the observation interval into $M\in\NN^*$ steps $t_m = \ft_0 + \frac{\ft_1 -\ft_0}{M} m$ for $0\leq m\leq M$. We note $\Ttt^M := \{t_0, t_1,\hdots,t_{M}\}$. We will not be interested in convergence results here. {For the sake of clarity, we will omit the dependence of each estimated parameter/function on~$\mathfrak{t}_0$ and~$\mathfrak{t}_1$. We then note $\widehat x$ the estimation of $x$.}

\subsubsection{Estimation of carbon intensities}\label{ct-sec:calibintensities}
For each sector~$i\in\Jj$ and for $0\leq m\leq M$, 
we observe the output~$Y_{t_m}^i$, the labor~$N_{t_m}^i$, the intermediary input~$(Z_{t_m}^{ji})_{j\in\Jj}$, and the consumption~$C_{t_m}^i$ (recall that the transition starts at year~$t_\circ$). For the sake of clarity, we will omit the dependence of each estimated parameter on~$M$.

To calibrate each carbon intensity~$\mathfrak{y} \in\{\tau^1, \hdots, \tau^I, \zeta^{11}, \zeta^{12},\hdots, \zeta^{I I-1}, \zeta^{II}, \kappa^1, \hdots, \kappa^I\}$, we follow exactly the same process already presented in~\cite{bouveret2023propagation}. The main difference is that after calibration, we can compute~$\mathfrak{y}$ for each~$t\in\RR_+$.
Afterwards, if we consider the example of the carbon price introduces in~\eqref{ct-eq:carbon price}, we can compute the \textit{emissions cost rate}~$\widehat\dd_t$.

\subsubsection{Estimation of economic parameters} \label{ct-subsec:calibva}
As in~\cite{gali2015monetary}, we assume a unitary Frisch elasticity of labor supply so~$\varphi = 1$ and the utility
of consumption is logarithmic so~$\sigma = 1$, while we calibrate~$({\llambda}^{ij})_{i,j\in\Jj}$ and {~$({\psi}^i)_{i\in\Jj}$} in the same way as in~\cite{bouveret2023propagation}.
We can then compute the functions~{$\Psi$} and~$\Lambda$ defined in \cref{ct-cor:output_consc}, followed by the function~$\widehat v^i$ as defined in~\eqref{ct-eq:pricefuncc}.
We can also compute the output growth
$\left(\Delta^{\Yy}_{t_m} = (\log(\Yy_{t_{m}}^i) -\log(\Yy_{t_{m-1}}^i))_{j\in\Jj}\right)_{1\leq m\leq M}$ directly from data.\\

Without carbon tax in any sector,
it follows from~\eqref{ct-eq:consumptionc} in Corollary~\ref{ct-cor:output_consc} that, 
for each $1\leq m\leq M$, the computed consumption growth $\Delta^{\Yy}_{t_m}$ is equal to $\Delta^{\Yy}_{t_m} = \frac{\ft_1-\ft_0}{M}(\Ir_I-\widehat{\llambda})^{-1}\widehat{\Theta}_{t_m} $ when $\Ir_I-\widehat{\llambda}$ is not singular;
hence $\widehat{\Theta}_{t_m} = \frac{M}{\ft_1-\ft_0} (\Ir_I-\widehat{\llambda}) \Delta^{\Yy}_{t_m}$. We can then compute the estimations~$\widehat{\mu}$, $\widehat{\Gamma}$, $\widehat{\Sigma}$, and $\widehat{\varsigma}$, of parameters $\mu$, $\Gamma$, $\Sigma$, and~$\varsigma$ (all defined in \cref{ct-sassump:OU}), as detailed in \cite{sopgoui2024realestate}[Section 3.1.1.].

\subsubsection{Estimation of firm and of the credit model parameters} \label{ct-subsec:calibfirmp}

Recall that we have a portfolio with $N\in\NN^{*}$ firms (or credit) at time~$t_\circ$. For each firm~$n\in\{1,\hdots,N\}$, we have its historical cash flows~$(F_{t_m}^n)_{1\leq m\leq M}$, 
hence its log-cash flow growths. For any~$t\in\Ttt^M$ and $1\leq i\leq I$, we denote by $r_{t}^i$ (resp.~$d_{t}^i$) the number of firms in $g_i$ rated at the beginning of the year~$t$ (resp. defaulted during the year~$t$). In particular, $r_{\mathfrak{t}_0} = \#g_i$. 
Within each group~$g_i$, all the firms behave in the same way as there is only one risk class. Since each sub-portfolio constitutes a single risk class, we have for each $n\in g_m$, $\af^n = \af^{n_i}$, $\sigma_{\bb^n} = \sigma_{\bb^{n_i}}$, and $B^n = B^{n_i}$. We then proceed as follows:

\begin{enumerate}
    \item Knowing the output growth~$\left(\Delta^{\Yy}_{t}\right)_{t\in\Ttt^M}$, we calibrate the factor loading~$\af_{n_i}$ and the standard deviation~$\sigma_{n_i}$, according to \cref{ct-ass:link}, appealing to the regression. For all $1\leq m\leq M$,
    \begin{equation}
        \sum_{n\in g_i} \log{F_{t_{m}}^n} - \log{F_{t_{m-1}}^n} = (\#g_i)\af^{n_i}\Delta^{\Yy}_{t_{k}} + \sqrt{\frac{\ft_1-\ft_0}{M} \#g_i} \sigma_{\bb^{n_i}} \mathfrak{u}_{t_{m}}\quad\text{where}\quad \mathfrak{u}_{t_{m}}\sim\cN(0,1).
    \end{equation}
    \item {We then obtain the barrier $B^{n_i}$ by maximum likelihood estimation (MLE)} as detailed in~\cite[Section~3]{gordy2002corr}. We have
\begin{equation*}
\widehat{B}^{n_i} := \argmax_{B^{n_i} \in \RR^+} \mathcal{L}(B^{n_i}),
\end{equation*}
where $\mathcal{L}(B^{n_i})$ is the log-likelihood function defined by
\begin{equation*}
    \mathcal{L}(B^{n_i}) := \sum_{m=1}^{M} \log\left(\int_{\mathbb{R}^{2I}} \mathbb{P}[D^{n_i} = d_{t_{m}}^{i}|(a,\theta)] d\PP[(\cA_{t_{m}}^\circ,\cZ_{t_{m}}) \leq (a,z)]\right),
\end{equation*}
and where
\begin{equation*}
    \mathbb{P}[D^{n_i} = d_{t_{m}}^{i} | (\cA_{t_{m}}^\circ,\cZ_{t_{m}})] = {\binom{r_{t_m}^{i}}{d_{t_m}^{i}}} (\PD^{n_i}_{t_{m},1,0})^{d_{t_{m}}^{i}} \Big(1 - \PD^{n_i}_{t_{m},1,0}\Big)^{r_{t_{m}}^{i}-d_{t_{m}}^{i}},
\end{equation*}
with $D^{n_i}$ the Binomial random variable standing for the conditional number of defaults, and~$\PD^{n_i}_{t_{m},1,0}$ in \cref{ct-eq: PD t T invest},
depending on $\sigma_{\bb^{n_i}} = \widehat{\sigma}_{\bb^{n_i}}$, $\af^{n_i} = \widehat{\af}^{n_i}$, for $1\leq m\leq M$, $\delta_{t_{m}} = 0$ and on~$B^{n_i}$.

\end{enumerate}

 \subsubsection{Calibration of collateral}\label{ct-sec:calibcoll}

Recall that we have a portfolio with $N\in\NN^{*}$ firms (or credit) at time~$t_\circ$. For each firm~$n\in\{1,\hdots,N\}$, if the collateral is
 
 \paragraph{A financial asset}\label{ct-sec:calibcoll_fin}
 
 We have its historical cash flows~$(\overline F_{t_m}^n)_{0\leq m\leq M}$, 
hence its log-cash flow growths. Recall that, even if two firms belong to the same sub-portfolio, there is no reason that their collaterals behave in the same way. We also know the output growth~$\left(\Delta^{\Yy}_{t_m}\right)_{1\leq m\leq M}$. We then have,

\begin{enumerate}
    \item the proportion~$\alpha^n$ of the investment representing the collateral is known.
    \item We calibrate the factor loading~$\widehat{\overline{\af}}_{n}$ and the standard deviation~$\widehat{\overline{\sigma}}_{n}$, according to~\eqref{ct-eq:col cash flow dy}, appealing to the regression. For all $1\leq m\leq M$
    \begin{equation}
         \log{\overline F_{t_{m}}^n} - \log{\overline F_{t_{m-1}}^n} = {\overline\af}^{n}\Delta^{\Yy}_{t_{m}} + \sqrt{\frac{\ft_1-\ft_0}{M}} \overline\sigma_{n} \mathfrak{u}_{t_{m}}\quad\text{where}\quad \mathfrak{u}_{t_{m}}\sim\cN(0,1).
    \end{equation}
\end{enumerate}

\paragraph{A commercial or residential property}\label{ct-sec:calibcoll_house}

We assume that in the past, carbon price did not have impact on the dwelling price so that for all~$t\in\Ttt^M$, $X_{t,\delta}^n$ defined in~\eqref{ct-eq:coldynamics4} is zero. Moreover,  $C^n_{0}$ and $R_n$ defined in~\eqref{ct-eq:housing EOU}, the value of the collateral at $0$ and the surface, are known. All that remains is to calibrate the parameters of the process $K$ defined in~\eqref{ct-eq:coldynamics2} and~\eqref{ct-eq:coldynamics3}. Let us consider a real estate index~$(REI_{t_m})_{0\leq m\leq M}$. We assume that the long-term average of the real estate index~$\chi$, introduced in~\eqref{ct-eq:coldynamics2} is linear and for $t\in\RR_+$, $\chi_t = \varrho t + \vartheta$. We then estimate $\varrho$, $\vartheta$, $\nu$, $\overline{\sigma}$, and $\rho$ as~\cite{sopgoui2024realestate}[Section 3.1.2]. 

\subsection{Simulations}
In this section as well, the idea here is not to (re)demonstrate or improve convergence results. {Recall the climate transition should start at $t_\circ$ and end at $t_\star$. Therefore, we would like to compute the risk measures for $t\in[t_\circ,t_\star]$.}
\subsubsection{Of the productivities~$\cZ$ and $\cA$}\label{ct-sec:simul Theta and A}
Let {$K~\in\NN^*$}, for $0\leq k\leq K$, we note $u_k = t_\circ + \frac{t_\star -t_\circ}{K} k$ for $0\leq k\leq K$. We would like to simulate $\cZ_{u_k}$ and $\cA_{u_k}$. 
For $\cZ$, we adopt the Euler-Maruyana (see~\cite{maruyama1955continuous, kanagawa1988rate}) scheme as in~\cite{sopgoui2024realestate}[Section 3.2.1]. 

\subsubsection{Of the probability of \textit{over-indebtedness} PD and of LGD}

For $n\in\OneN$ and $t_\circ\leq t\leq t_\star$, We would like to compute~$\PD^n_{t,T,\dd}$ as defined in~\eqref{ct-eq: PD t T invest} as well as $\LGD_{t,T,\dd}^n$ defined in~\eqref{ct-eq: LGD t T invest} and $\LGD_{t,T,\delta}^n$ in~\eqref{ct-eq:stressedLGD re}. After simulating $\cZ_t$ and $\cA_t$ as described in~\ref{ct-sec:simul Theta and A}, we get $\widehat{\cZ}_t$ and $\widehat{\cA}_t$. Then, for each~$1\leq i\leq I$ and for each~$n\in g_i$, we have

\begin{enumerate}
    \item from~\eqref{ct-eq: PD t T invest}, the estimated probability of default of firm~$n$ is 
    \begin{equation}\label{ct-eq: approx PD t T}
        \widehat{\PD}^{n}_{t,T,\dd} = \Phi\left( \frac{\log(\cD_{t+T}^n) - \widehat\cK^n(\widehat\dd, t, T, \widehat\cA_t, \widehat\cZ_t)}{\widehat{\mathcal{L}}^n(t,T)}\right),
    \end{equation}
    with
    \begin{equation}\label{ct-eq:approx K}
        \widehat\cK^n(\widehat\dd, t, T, \widehat\cA_t, \widehat\cZ_t) = \log{(F^n_0 \widehat{\mathfrak{R}}^n_{t+T}(\widehat\dd))} + \widehat\af^{n\cdot}(\widehat\mu T + \widehat\varsigma\Upsilon_{T}\widehat\cZ_t +  \widehat\cA_t-\widehat v(\widehat\dd_{0})),
    \end{equation}
     and
    \begin{equation}\label{ct-eq:approx L}
        \widehat{\mathcal{L}}^n(t,T) := \widehat\varsigma^2\widehat\af^{n\cdot} \widehat\Gamma^{-1} \left(\frac{T}{L} \sum_{l=0}^{L} \left(e^{-\widehat\Gamma u_l} - \Ir_I \right) \widehat\Sigma\widehat\Sigma^\top \left(e^{-\widehat\Gamma u_l} - \Ir_I \right) \right) (\widehat\af^{n\cdot} \widehat\Gamma^{-1})^\top + (t+T) \widehat\sigma_{n}^2,
    \end{equation}
    where 
    $F^n_0$ and $\cD_{t+T}^n$ are know, 
    $\widehat\dd$ defined in \cref{ct-sec:calibintensities}, $\widehat{\mathfrak{R}}^n_{t+T}(\widehat\dd))$ in~\eqref{ct-eq:FV_second_term} in \cref{ct-lem:approx firm value},
    $\widehat\Gamma,\widehat\varsigma, \widehat v$ in \cref{ct-subsec:calibva},
    $\widehat\af^{n\cdot}, \widehat\sigma^n$ in \cref{ct-subsec:calibfirmp}, 
    $\widehat\Upsilon_{T} := \widehat\Gamma^{-1}(\Ir_I-e^{-\widehat\Gamma T})$ and with $u_l:= \frac{T l}{L},l=0,\hdots,L$.

    \item If the collateral of loan~$n$ is a financial asset, from~\eqref{ct-eq: LGD t T invest},
    \begin{small}
    \begin{equation}\label{ct-eq: approx LGD t T invest}
        \begin{split}
            \widehat\LGD^n_{t,T,\dd} &= \frac{1-\gamma}{\widehat{\PD}^n_{t,T,\dd}} \left[\Phi_2\left(\widehat{\overline\omega}_{t, T, a}^n,\Phi^{-1}\widehat{\PD}^n_{t,T,\dd});\widehat{\rho}_{t, T, a}^n\right)- \exp{\left(\frac{1}{2}\widehat{\overline{\mathcal{L}}}^n(t,T+a) \right)}\times\right.\\
            &\qquad\qquad\qquad\left.\Phi_2\left(\widehat{\overline\omega}_{t, T, a}^n-\sqrt{\widehat{\overline{\mathcal{L}}}^n(t,T+a)},\Phi^{-1}(\widehat{\PD}^n_{t,T,\dd})-\widehat{\rho}_{t, T, a}^n\sqrt{\widehat{\overline{\mathcal{L}}}^n(t,T+a)};\widehat{\rho}_{t, T, a}^n\right) \right],
        \end{split}
    \end{equation}
    \end{small}
    where $\widehat{\overline{\cK}}$ and $\widehat{\overline{\mathcal{L}}}$ are computed in the same way $\widehat{{\cK}}$ and $\widehat{{\mathcal{L}}}$ were in~\eqref{ct-eq:approx K} and \eqref{ct-eq:approx L}. Moreover,
    \begin{equation*}
       \widehat{\rho}_{t, T, a}^n := \frac{\widehat{\varsigma}^2 \widehat{\af}^{n\cdot}\widehat{\Gamma}^{-1} \left(\frac{T}{L} \sum_{l=0}^{L}  \left(e^{-\widehat{\Gamma} u_l} - \Ir_I \right) \widehat{\Sigma}\widehat{\Sigma}^\top \left(e^{-\widehat{\Gamma} (u_l+a)} - \Ir_I \right)  \right) (\widehat{\overline{\af}}^{n\cdot}\widehat{\Gamma}^{-1})^\top}{\sqrt{\widehat{\mathcal{L}}^n(t,T)\widehat{\overline{\mathcal{L}}}^n(t,T+a)}},
    \end{equation*}
    and
    \begin{equation*}
        \widehat{\overline\omega}_{t, T, a}^n :=\frac{\log{\frac{\EAD_{t+T}^n}{(1-k)e^{-r a}}}- \widehat{\overline{\cK}}^n(\widehat{\dd}, t, T+a, \widehat{\cA}_t^\circ, \widehat{\cZ}_t)}{\sqrt{\widehat{\overline{\mathcal{L}}}^n(t,T+a)}},
    \end{equation*}
    with $a, k$, $\gamma$, $\EAD_t^n$, $\alpha^n$, and $\overline{F}^n_0$ are known, $\widehat\dd$ defined in \cref{ct-sec:calibintensities}, $\widehat\mu, \widehat\varsigma,\widehat v$ in \cref{ct-subsec:calibva}, and $\mathfrak{\widehat{\overline{a}}}^{n\cdot}, \widehat{\overline{\sigma}}_{n}$ in \cref{ct-sec:calibcoll_fin} and $\mathfrak{\overline{R}}^n_{t+T}(\widehat\dd)$ in~\eqref{ct-eq:FV_second_term}. Finally, $u_l:= \frac{T l}{L},l=0,\hdots,L$.

    \item If the collateral of loan~$n$ is a commercial or residential property, we compute in order~\eqref{ct-eq:mean col},~\eqref{ct-eq:var col},~\eqref{ct-eq: num LGD re}, and~\eqref{ct-eq:stressedLGD re}. Since $\chi_t = \varrho t + \vartheta$ and $C_0^n$ are known
    \begin{equation*}
        \widehat m_{t, T+a}^{n} := \log{(R_n C_0^n)} + (\widehat\varrho t + \widehat\vartheta) - (\widehat\vartheta-K_0)e^{-\widehat\nu (t+T+a)} + \widehat{\overline{\sigma}}\widehat\rho^\top \sum_{k=0}^{L} e^{-\widehat\nu((t+T+a)-\frac{k t}{L})} \eta_{u_\frac{k T}{L}},
    \end{equation*}
    where $\eta_{\frac{k t}{L}}\sim\cN\left(0, \frac{t}{L}\Ir_I\right), k=0,\hdots,L$ with $L\in\NN^*$, and
    \begin{equation*}
        \widehat v_{t, T+a}^{n} := \frac{(\widehat{\overline{\sigma}}\widehat\rho)^2)}{\widehat\nu}\left(1-e^{-2\widehat\nu (T+a)}\right)+   \frac{(\widehat{\overline{\sigma}})^2 (1-(\widehat\rho)^2)}{\widehat\nu}\left(1-e^{-2\widehat\nu (t+T+a)}\right).
    \end{equation*}
Therefore, we have
\begin{scriptsize}
\begin{equation}\label{ct-eq: approx LGD t real estate}
\begin{split}
    \widehat\LGD^n_{t,T,\delta} &= \frac{1-\gamma}{\widehat\PD^n_{t,T,\dd}} \left[\left(1+ (1-k)e^{-r a}\frac{R_n\widehat X_{t+T+a,\delta}^n}{\EAD_{t+T}^n}\right)\Phi_2\left(\widehat{\overline w}_{t, T, a}^n,\Phi^{-1}(\widehat\PD^n_{t,T,\dd});\widehat\rho_{t, T, a}^n\right)\right.\\
    &\qquad\qquad\quad\left.- \exp{\left(\frac{1}{2} \widehat v_{t, T+a}^{n}  -\sqrt{\widehat v_{t, T+a}^{n}}\widehat{\overline w}_{t, T, a}^n\right)}\Phi_2\left(\widehat{\overline w}_{t, T, a}^n-\sqrt{\widehat v_{t, T+a}^{n}},\Phi^{-1}(\widehat\PD^n_{t,T,\dd})-\widehat\rho_{t, T, a}^n\sqrt{\widehat v_{t, T+a}^{n}};\widehat\rho_{t, T, a}^n\right) \right],
\end{split}
\end{equation}
\end{scriptsize}
where
\begin{equation*}
    \widehat{\rho}_{t, T, a}^n := \widehat{\overline{\sigma}}\widehat{\varsigma} \frac{ \widehat{\af}^{n\cdot}\widehat{\Gamma}^{-1} \left(\frac{T}{L} \sum_{l=0}^{L} e^{-\widehat{\nu}(u_l+a)}\left(\Ir_I - e^{-\widehat{\Gamma} u_l} \right)\widehat{\Sigma} \right)\widehat{\rho}}{\sqrt{\widehat{\mathcal{L}}^n(t,T) \times \widehat v_{t, t+T+a}^{n}}},
\end{equation*}
and
\begin{equation*}
    \widehat{\overline w}_{t, T, a}^n := \log{\left(\frac{\EAD_{t+T}^n}{(1-k)R_n e^{-r a}} + \widehat X_{t+T+a,\delta}^n  \right)} - \widehat m_{t, T+a}^{n},
    \end{equation*}
and $\widehat X$ is obtained by considering that from~\eqref{ct-eq:coldynamics4},
\begin{equation*}
 \widehat X_{t,\delta}^n = \cc(\alpha^n, \alpha^\star) e^{-r(\ft_n-t)} + (\alpha^n-\alpha^\star)\frac{(\ft_n-t)}{P}\sum_{p=1}^{P} \ff(\delta_{v_p},\pf) e^{-r(v_p-t)},
\end{equation*}
 and where $\gamma$, $k$, $r$, $R_n$, and $\EAD_t^n$ are known, $\ft_n$ given by~\eqref{ct-eq:ex optimal t_n}, $u_l:= \frac{(t_\star-t) l}{L},l=0,\hdots,L$, and $v_p:= \frac{(\ft_n-t) p}{P},l=0,\hdots,P$.  
\end{enumerate}

\subsubsection{Of the (un)expected losses EL and UL}
For EL, the result is direct by using $\widehat{\PD}^{n}_{t,T,\dd}$ in~\eqref{ct-eq: approx PD t T}, $\widehat\LGD^n_{t,T,\dd }$ in~\eqref{ct-eq: approx LGD t T invest}, and $\widehat\LGD^n_{t,T,\delta}$ in~\eqref{ct-eq: approx LGD t real estate},  we have from~\eqref{ct-eq:stressedEL},
\begin{equation}\label{ct-eq:approx stressedEL}
    \begin{split}
        \widehat\EL^{N,T}_{t,\dd} 
        &:= \sum_{n=1}^{N_1} (1-\gamma) \EAD_{t+T}^n \cdot \widehat\PD^n_{t,T,\dd}+ \sum_{n=N_1+1}^{N_2} \EAD_{t+T}^n \cdot \widehat\LGD^n_{t,T,\dd } \cdot \widehat\PD^n_{t,T,\dd}\\
        &\qquad\qquad+ \sum_{n=N_2+1}^{N} \EAD_{t+T}^n \cdot \widehat\LGD^n_{t,T,\delta} \cdot \widehat\PD^n_{t,T,\dd}.
        \end{split}
    \end{equation}
For UL, we use 
\begin{equation}\label{ct-eq:approx proxy L}
    \begin{split}
        \widehat L_{t+T}^{\GG,N}
        &= \sum_{n=1}^{N_1} (1-\gamma) \EAD_{t+T}^n \cdot \widehat\PD_{t+T,\dd}^n + \sum_{n=N_1+1}^{N_2}\EAD_{t+T}^n \cdot \widehat\LGD_{t+T,\dd}^n\cdot \widehat\PD_{t+T,\dd}^n\\
        &\qquad\qquad+ \sum_{n=N_2+1}^{N} \EAD_{t+T}^n \cdot \widehat\LGD_{t+T,\delta}^n\cdot \widehat\PD_{t+T,\dd}^n, 
    \end{split}
\end{equation}
by noting that $\PD_{t+T,\dd}^n = \PD^{n}_{t+T,0,\dd}$, $\LGD_{t+T,\dd}^n = \LGD^n_{t+T, 0,\dd }$, and $\LGD_{t+T,\delta}^n = \LGD^n_{t+T,0,\delta}$. Therefore, as $\widehat L_{t+T}^{\GG,N}$ depends on 
$(\widehat\PD_{t+T,\dd}^n, \widehat\LGD_{t+T,\dd}^n, \widehat\LGD_{t+T,\delta}^n)$ which depends on $(\widehat\cA_{t+T}, \widehat\cZ_{t+T})$. However, we want to compute $\VaR^{\alpha, N,T}_t$ so that~$\PP\left[L_{t+T}^{\GG,N} \leq \VaR^{\alpha, N,T}_t\middle|\cG_t \right]$. Then, we simulate $D\in\NN^*$ couples noted~$(\widehat\cA_{t+T|t}^p, \widehat\cZ_{t+T|t}^p)_{1\leq p\leq D}$ so that $\widehat\cZ_{t+T|t}^p =^d \cZ_{t+T}|\cG_t$ and $\widehat\cA_{t+T|t}^p =^d \cA_{t+T}|\cG_t$. That is straightforward and 
\begin{equation}\label{eq:cond Z}
    \widehat\cZ_{t+T|t}^p|\cG_t \sim \cN\left( e^{-\widehat\Gamma T} \widehat\cZ_t, \frac{T}{L} \sum_{l=0}^{L} e^{-\widehat\Gamma u_l} \widehat\Sigma \widehat\Sigma^\top e^{-\widehat\Gamma^\top u_l} \right),
\end{equation} 
and
\begin{equation}\label{eq:cond A}
    \widehat\cA_{t+T|t}^p|\cG_t \sim \cN\left(\widehat\mu T + \widehat\varsigma\widehat\Upsilon_{T}\widehat\cZ_t + \widehat\cA_t, \widehat\varsigma^2 \widehat\Gamma^{-1} \left[\frac{T}{L} \sum_{l=0}^{L} \left(e^{-\widehat\Gamma u_l} - \Ir_I \right) \widehat\Sigma\widehat\Sigma^\top \left(e^{-\widehat\Gamma u_l} - \Ir_I \right)\right] (\widehat\Gamma^{-1})^\top \right),
\end{equation}
with $u_l:= \frac{T l}{L},l=0,\hdots,L$. We also need to simulate~$\mathfrak{h}_{t+T}|\cG_t := \int_{0}^{t+T} e^{-\nu(t+T-s)} \dr B^{\cZ}_s|\cG_t$ (which comes from~$m_{t+T,0}^{n}$ in~\eqref{ct-eq:mean col}). As $\mathfrak{h}_{t+T}|\cG_t \sim\cN\left(\int_{0}^{t} e^{-\nu(t+T-s)} \dr B^{\cZ}_s ,\frac{1-e^{-2\nu T}}{2\nu} \Ir_I \right)$, we have
\begin{equation*}
    \widehat{\mathfrak{h}}_{t+T}^p|\cG_t \sim\cN\left(\sum_{k=0}^{L} e^{-\widehat\nu((t+T)-\frac{k t}{L})} \eta_{u_\frac{k T}{L}} ,\frac{1-e^{-2\widehat\nu T}}{2\widehat\nu} \Ir_I \right),\quad \eta_{\frac{k t}{L}}\sim\cN\left(0, \frac{t}{L}\Ir_I\right).
\end{equation*}
Then, the unexpected loss is
    \begin{equation}\label{ct-eq:approx stressedUL}
    \widehat{\UL}^{N,T}_{t,\delta, \alpha} := q_{\alpha, D}\left(\left\{(\widehat L_{t+T}^{\GG,N})^1, (\widehat L_{t+T}^{\GG,N})^2,\hdots,(\widehat L_{t+T}^{\GG,N})^D\right\}\right) - \widehat{\EL}^{N,T}_{t,\dd},
\end{equation}
where $(\widehat L_{t+T}^{\GG,N})^p$ is obtained by replacing $(\widehat\cA_{t+T}, \widehat\cZ_{t+T})$ in \eqref{ct-eq:approx proxy L} by $(\widehat\cA_{t+T|t}^p, \widehat\cZ_{t+T|t}^p)$, and where $q_{\alpha, M}(\{Y^1,\hdots,Y^D\})$ denotes the empirical $\alpha$-quantile of the distribution of~$Y$.

{\color{blue}
\subsection{Summary of the process}
More concretely, the goal is to calculate the expected and unexpected losses at each $t\in[\ft_0,\ft_1]$. To achieve that, we use (1) the number of firms rated $r_{t}$ and defaulted $d_{t}$ in each $t\in\Ttt^M$, 
(2) all the firms' cash flows~$(F_{t}^n)_{1\leq n\leq N}$ in each $t\in\Ttt^M$, (3) for each borrower $n$, if a collateral exists and is a\textbf{ financial asset}, its cash flows~$(\overline{F}_{t}^n)_{t\in\Ttt^M}$ and if it is a \textbf{building}, its surface $R_n$ and its price at $\ft_1$, $C_0^n$, (4) the macroeconomic variables as well as the carbon intensities by sector, and the real estate index observed $REI_t$ in each $t\in\Ttt^M$, and (5) the carbon price dynamics~$(\delta_t)_{t\in\{t_\circ,\hdots, t_\star\}}$ given by the regulator. We proceed as follows.
\begin{enumerate}
    \item From the macroeconomic historical data, we estimate the productivity parameters~$\widehat{\Gamma}$, $\widehat{\mu}$,~$\widehat{\Sigma}$ and $\widehat{\varsigma}$, as well as the elasticities~$\widehat{\psi}$ and~$\widehat{\llambda}$ as described in~\cref{ct-subsec:calibva}. We also estimate the real estate market parameters $\widehat\varrho$, $\widehat\vartheta$, $\widehat\nu$, $\widehat{\overline{\sigma}}$, and $\widehat\rho$ as described in~\cref{ct-sec:calibcoll_house}.

    \item  For each~$i\in\{1, \ldots, I\}$, we estimate the parameters~$B^{n_i}$, $\sigma_{\bb^{n_i}}$, $\mathbf{a}^{n_i}$ using~\cref{ct-subsec:calibfirmp}, 
    yielding $\widehat{B}^{n_i}$, $\widehat{\sigma}_{\bb^{n_i}}$, $\widehat{\mathbf{a}}^{n_i}$.

    \item We compute the carbon price dynamics~$(\delta_t)_{t_\circ\leq t\leq t_\star}$ and the carbon intensities~$(\tau_t)_{t_\circ\leq t\leq t_\star}$, $(\zeta_t)_{t_\circ\leq t\leq t_\star}$, and $(\kappa_t)_{t_\circ\leq t\leq t_\star}$ as defined in~\cref{ct-sassc:price} as well as  the emissions cost rate~$(\dd_t)_{t_\circ\leq t\leq t_\star}$ defined in~\eqref{ct-eq:emiss cost rate} and the output carbon cost function~$v$ defined in~\eqref{ct-eq:pricefuncc}.
    
    \item For $t_\circ\leq t\leq t_\star$, we fix a large enough integer~$D$, and simulate $D$ paths of the productivity process~$(\widehat\cZ_t^d)_{1\leq d\leq D}$, then we derive~$((\cA_t^\circ)^d)_{1\leq d\leq D}$ as defined in~\cref{ct-sec:simul Theta and A}. Then, for each $p$, we simulate $D$ paths $(\widehat\cZ_{t+T|t}^{d'}|\cG_t^d)_{1\leq d\leq D}$ and $(\widehat\cZ_{t+T|t}^{d'}|\cG_t^d)_{1\leq d'\leq D}$ using \eqref{eq:cond Z} and \eqref{eq:cond A} respectively. 
    \item For each~$n\in\{1, \ldots, N\}$ and for each $1\leq d,d'\leq D$, 
    \begin{itemize}
        \item we compute the probability of default $\widehat\PD_{t+1,\dd}^n$ using~\eqref{ct-eq: approx PD t T}. 
    \item Depending on the nature of the collateral, we compute $\widehat\LGD_{t+1,\delta}^n$ , using either \eqref{ct-eq:no col}, \eqref{ct-eq: approx LGD t T invest}, or \eqref{ct-eq: approx LGD t real estate}
    \end{itemize}

    \item We finally compute the expected (resp. unexpected) losses $\widehat{\EL}^{N}_{t,\dd}$ (resp. $\widehat{\UL}^{N,T}_{t,\delta, \alpha}$), for each~$t_\circ\leq t\leq t_\star$, using~\eqref{ct-eq:approx stressedEL} (resp.~\eqref{ct-eq:approx stressedUL}).
\end{enumerate}

}

\section{Discussion}\label{sec:discussion}
In this section, we describe the data used to calibrate the different parameters, we perform some simulations, and we comment the results.
\subsection{Data}\label{result:data} As in~\cite{bouveret2023propagation} and \cite{sopgoui2024realestate}, we work on data related to the French economy.
\begin{enumerate}
    \item Due to data availability (precisely, we do not find public monthly/quaterly data for the intermediary inputs), we consider an annual frequency.
    \item\label{histo-macro-data} Annual consumption, labor, output, and intermediary inputs come from INSEE\footnote{The French National Institute of Statistics and Economic Studies} from 1978 to 2021 (see~\cite{insee2023sut} for details) and are expressed in billion euros, therefore $\ft_0=1978$, $\ft_1 = 2021$, and $M = 44$. 
    \item For the climate transition, we consider a time horizon of ten years with $t_\circ = 2021$ as starting point, a time step of one year and $t_\star = 2030$ as ending point.  In addition, we will be extending the curves to 2034 to see what happens after the transition, even though the results will be calculated and analyzed during the transition.
    \item The 38 INSEE sectors are grouped into four categories: \textit{Very High Emitting}, 
    \textit{Very Low Emitting}, 
    \textit{Low Emitting}, and
    \textit{High Emitting}, based on their carbon intensities.
    \item The carbon intensities are calibrated on the realized emissions from~\cite{emissions2020GHG}  (expressed in tonnes of CO2-equivalent) between 2008 and 2021.
    \item Metropolitan France housing price index comes from \textit{OECD\footnote{{The Organisation for Economic Co-operation and Development}} data} and are from 1980 to 2021 (see~\cite{oecd2024housingpindex} for details) in \textit{Base 2015}. We renormalize in \textit{Base 2021} {(i.e. in 2021, the housing price index is equal to $100$)}.
\end{enumerate}

\subsection{Definition of the climate transition} \label{ct-subsec:calibtaxes} 
We consider four deterministic transition scenarios giving four deterministic carbon price trajectories. The scenarios used come from the NGFS\footnote{{Network for Greening the Financial System.}} simulations, whose descriptions are given by~\cite{ngfs2020scenario} as follows:
\textit{\begin{itemize}
    \item \textbf{Net Zero 2050} is an ambitious scenario that limits global warming to $1.5^\circ C$ through stringent climate policies and innovation, reaching net zero $\mathrm{CO}_2$\footnote{{Carbon dioxide.}} emissions around 2050. Some jurisdictions such as {the United States of America, the European Union, and Japan} reach net zero for all GHG by this point.
   \item \textbf{Divergent Net Zero} reaches net-zero by 2050 but with higher costs due to divergent policies introduced across sectors and a quicker phase out of fossil fuels.
   \item \textbf{Nationally Determined Contributions (NDCs)} includes all pledged policies even if not yet implemented.
   \item \textbf{Current Policies} assumes that only currently implemented policies are preserved, leading to high physical risks.
\end{itemize}}
For each scenario, the regulator decides the carbon price $P_{carbon, 0}$ ({in euro per ton of CO$_2$-equivalent}) at $t_0$ and the speed rate of the transition~$\eta_\delta$. We summarize them in \cref{ct-tab:carbon_price_params} below.
\begin{table}[!ht]
\small \centering
\begin{tabular}{|r|r|r|r|r|}
\hline
\textit{}& \textbf{Current Policies} & \textbf{NDCs} & \textbf{Divergent Net Zero} & \textbf{Net Zero 2050} \\  \hline
\textbf{$P_{carbon, 0}$ (in euro/ton)} & 30.957&	33.321&	32.963&	34.315 \\ \hline
\textbf{$\eta_\delta$ (in \%)}& 1.693&	7.994&	12.893&	17.935\\ \hline
\end{tabular}
\caption{Carbon price parameters}
\label{ct-tab:carbon_price_params}
\end{table}
We can then compute the carbon price, whose evolution is plotted in Figure~\ref{fig:carbon_price_per_scenario}, at each date using~\eqref{ct-eq:carbon price}. 

For the energy price, we consider electricity as the unique source of energy. Then, we assume a linear relation between the electricity and the carbon price inspired by~\cite{abrell2023rising}, where a variation of the carbon price is linked withe the variation of the electricity by a the pass-through rate noted~$k$ \footnote{{We can definitively consider other types of energy. Moreover, instead of working with a linear relation (with a pass-through rate) between the energy price and the carbon price, we could consider a more complex model such as NiGEM (the National Institute Global Econometric Model in \cite{hantzsche2018using}).}}. This means that~$\ff^{elec}_1$ and~$\ff^{elec}_2$  define in~\eqref{ct-eq:f} are respectively~$k$ and $P_{elec, 0} - k \times P_{carbon, 0}$. For France, we take the electricity price $P_{elec, 0} = 0.2161$ euro per Kilowatt-hour and $k= 0.55$ (see~\cite{abrell2023rising}) ton per Kilowatt-hour. Its evolution is plotted in Figure~\ref{fig:Energy_price_per_scenario}.
\begin{figure}[!ht]
    \centering
    \begin{subfigure}[b]{0.4\textwidth}
        \includegraphics[width=0.95\textwidth]{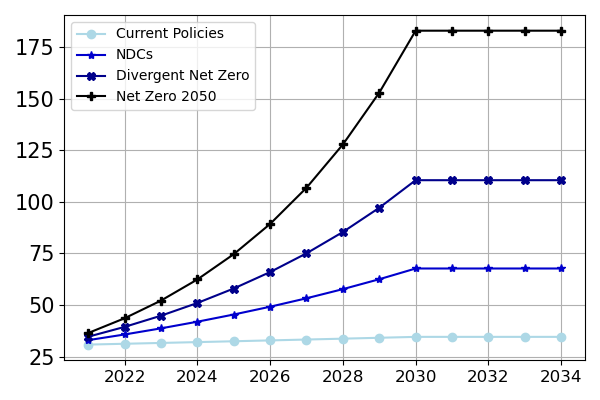}
        \caption{Carbon price}
        \label{fig:carbon_price_per_scenario}
    \end{subfigure}
    \begin{subfigure}[b]{0.4\textwidth}
        \includegraphics[width=0.95\textwidth]{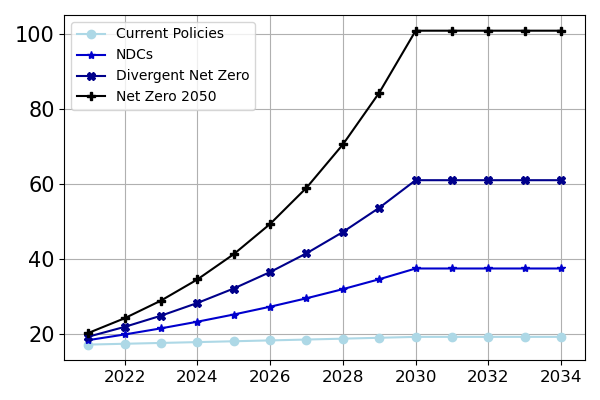}
        \caption{Energy price}
        \label{fig:Energy_price_per_scenario}
    \end{subfigure}
    \caption{per scenario and per year}
\end{figure}
For the renovation costs to improve a building for the energy efficiency~$\alpha$ to~$\alpha^\star$ as defined in~\eqref{ct-eq:c}, we take $c_0 = 0.01$ euro per kilowatt-hour and per square meter (\euro/kWh.m$^2$) and $c_1 = 0.1$.  

\subsection{Estimations}
\subsubsection{The carbon intensities}\label{ct-sec:resultintensities}
We use the realized GHG emissions as well as the macroeconomic variables and their frequency being the same as in~\cite{bouveret2023propagation}, we use the same estimations. But after that, we can compute the carbon intensities at each date in~$\RR_+$ using~\eqref{ct-eq:ghg_intensity}.

\subsubsection{Economic and housing pricing index (HPI) parameters} \label{ct-subsec:result_va}
We keep the values of $\phi$, $\sigma$, $(\chi^i)_{i\in\Jj}$, and $(\llambda^{ji})_{i,j\in\Jj}$ already estimated. For the productivity process, we switched from a vector autoregressive process to an Ornstein-Uhlenbeck process. We therefore calibrate ${\mu}$, ${\varsigma}$, ${\Sigma}$, and ${\Gamma}$ as detailed in \cref{ct-subsec:calibva} and we obtain the same results as in \cite{sopgoui2024realestate}[Section 4.3.1.].

 We write the housing price index~$K$ in \textit{Base 2021} and we apply the logarithm function. This means that $K_{t_0} = 0$. 
 We can therefore calibrate~$\varrho, \vartheta, \nu, \overline\sigma$, and $\rho$. The values are presented in \cite{sopgoui2024realestate}[Table 5]. 

\subsection{Simulations and discussions}
In the previous work in discrete time, we simulate for different climate transition scenario between $t_\circ = 2021$ and $t_\star=2030$, the annual evolution of (1) the output growth per sector (2) the output share per sector in the total output, (3) the firms direct GHG emissions per sector, (4) a given firm value and distribution, (5) the probabilities of default of fictive sub-portfolio of 4 firms each and of the resulting portfolio, (6) the expected and the unexpected losses of the previous (sub-)portfolios when the LGD are constant and deterministic, (7) the sensitivities of the losses to the carbon price.

In the current simulations, since we are keeping the same data at the same frequency (annual), the main change is then to replace the {Vector Autoregressive} process by the O.-U. process. Therefore, the comments already made for (1) to (5) concerning the trends, the impact of the carbon price, the difference of scenarios, the relation between sectors, etc. do not change. We will focus here on the LGD and on the losses, with different type of collateral.\\
{We will assume $D = 500$. This may seem low but it is actually $500 \times 500$ simulations. This is  computationally-intensive. In fact, for each year $t$, we firstly simulate $500$ times $(\cZ_t,\cA_t)$. However, we are interested in one year risk measures \footnote{Recall for example that we are in 2025, we would like to compute the probability that a firm defaults between 1 January 2027 and 31 December 2027.}, therefore we also simulate $500$ times $(\cZ_{t+1},\cA_{t+1})$ given $(\cZ_t,\cA_t)$. Moreover, we considered 4 sectors and for each step, there are numerical integrals, matrices products, etc. to compute. }

\subsubsection{Impact of the carbon price on Loss Given Default}
When there is no guarantee, we assume as in the previous work that $\LGD$ is equal to $45\%$ so that $\gamma = 0.55$. To illustrate the case where there is guarantee, we consider, both if the collateral is a \textit{financial asset} and a \textit{building}, $\EAD$ starts at $200$ and growths annually as the economic total output growth in the \textit{Current Policies} scenario (see Table~\ref{ct-tab:EAD_per_year} below).

\begin{table}[ht!]
\footnotesize \centering
\begin{tabular}{|r|r|r|r|r|r|r|r|r|r|r|r|r|r|r|r|r|}
\hline
\textit{\textbf{Year}} & \textbf{2021} & \textbf{2022} & \textbf{2023} &  \textbf{2024} & \textbf{2025} & \textbf{2026} & \textbf{2027} & \textbf{2028} &  \textbf{2029} & \textbf{2030}& \textbf{2031} & \textbf{2032} &  \textbf{2033} & \textbf{2034} \\ \hline
\textit{\textbf{EAD}}    & 200. & 202.8& 206.6& 209.9& 213.1& 216.2& 219.5& 222.6& 226.2& 229.6& 233.1& 236.3& 239.9& 243.9  \\ \hline
\end{tabular}
\caption{EAD per year}
\label{ct-tab:EAD_per_year}
\end{table}

\paragraph{If the collateral is a financial asset}
We consider 4 firms so that firm $1$, $2$, $3$, and $4$ respectively belong to the \textit{Very High Emitting}, \textit{High Emitting}, \textit{Low Emitting}, and \textit{Very Low Emitting} groups. Each firm is characterized by its cash flows~$F_{t_\circ-1}$ at $t_\circ-1$, the standard deviation of its cash flows~$\sigma_{\bb}$, and the contribution~$\mathfrak{a}$ of sectoral output growth to its cash flows growth as detailed in table~\ref{ct-tab:impact_on_fv}. The chosen interest rate $r = 5\%$. 
\begin{table}[!ht]
\small \centering
\begin{tabular}{|l|r|r|r|r|r|}
\hline
{ \textbf{Firm }}          & \multicolumn{1}{l|}{\textbf{$1$}} & \multicolumn{1}{l|}{{ \textbf{$2$}}} & \multicolumn{1}{l|}{{ \textbf{$3$}}} & \multicolumn{1}{l|}{{ \textbf{$4$}}} \\ \hline
{ \textbf{$\sigma_{\bb^n}$}} & 0.05& 0.05& 0.05& 0.05\\ \hline\hline
{ \textbf{$F_0^n$}}          & 1.0  & 1.0  & 1.0  & 1.0 \\ \hline\hline
{ \textbf{$\mathfrak{a}^n$(\textbf{Very High})}} & 1.0 & 0.0 & 0.0 & 0.0 \\ \hline
{ \textbf{$\mathfrak{a}^n$(\textbf{High})}}  & 0.0 & 1.0    & 0.0  & 0.0  \\ \hline
{ \textbf{$\mathfrak{a}^n$(\textbf{Low})}}                 & 0.0 & 0.0    & 1.0    & 0.0 \\ \hline
{ \textbf{$\mathfrak{a}^n$(\textbf{Very Low})}}                 & 0.0 & 0.0    & 0.0    & 1.0 \\ \hline
\end{tabular}
\caption{Characteristics of the firms}
\label{ct-tab:impact_on_fv}
\end{table}
We compute here for $D=500$ simulations of the productivity processes~$(\cZ, \cA)$, the loss given default of 4 loans with the same exposure but with 4 different financial assets collateral described in Table~\ref{ct-tab:impact_on_fv}.

\begin{table}[ht!]
\small \centering
\begin{tabular}{|r|r|r|r|r|r|}
\hline
\textit{\textbf{Emissions level}} & \textbf{No collateral} & \textbf{Firm 1} & \textbf{Firm 2} &  \textbf{Firm 3} & \textbf{Firm 4} \\ \hline
\textit{\textbf{Current Policies}}          & 45.&	32.934&	31.960&	34.561&	29.281 \\ \hline
\textit{\textbf{NDCs}}          & 45.&	33.177&	32.184&	34.609&	29.357 \\ \hline
\textit{\textbf{Divergent Net Zero}}  & 45.	&33.485&	32.471&	34.673&	29.459 \\ \hline
\textit{\textbf{Net Zero 2050}}     & 45.&	33.995&	32.940&	34.784&	29.640 \\ \hline
\end{tabular}
\caption{Average annual LGD per scenario between 2021 and 2030 (in \%)}
\label{ct-tab:LGD_with_fa}
\end{table}

Both in Table~\ref{ct-tab:LGD_with_fa} and in Figure~\ref{ct-fig:LGD_with_fa}, we can first see that the presence of guarantees reduce $\LGD$. Without collateral, we assume 45\%, and with collateral, for all scenarios and for different characteristics of firms, $\LGD$ is less than 45\%.

\begin{figure}[!ht]
    \centering
    \includegraphics[width=0.95\textwidth]{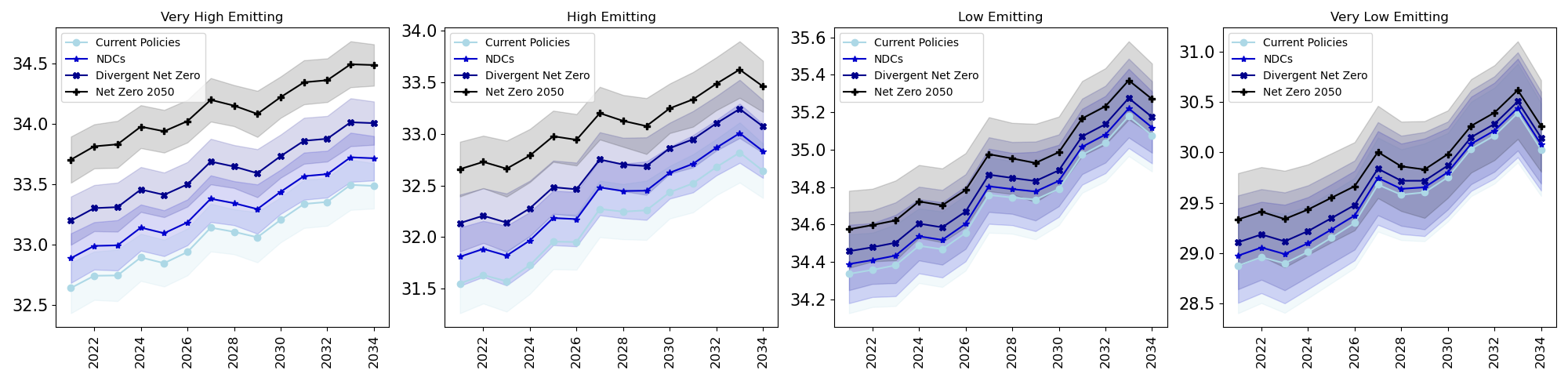}
    \caption{LGD with a financial asset as collateral}
    \label{ct-fig:LGD_with_fa}
\end{figure}
However, the decreasing of LGD depends on the scenarios. When the scenario becomes tougher, the impact of the presence of the collateral on LGD is lessened. This is logical and due to the fact that the value of the liquidated asset loses value when the price of carbon rises. The decreasing of the LGD also depends on the distinctive characteristics of the guarantees. Precisely, each firm in Table~\ref{ct-tab:impact_on_fv}, serving as collateral, belongs to a unique and distinct sector (through~$\af$), which go from the more to the less polluting. Therefore the more the collateral is in a polluting sector, the less it reduces $\LGD$.

\paragraph{If the collateral is a building}
We consider 5 apartments of 25 square meters whose price of the square meter fixed to 4000 euros in $t_\circ=2021$ is the same for all, but whose the energy efficiency are different.
Moreover, we assume that the optimal energy efficiency equals to $\alpha^\star = 70$ kilowatt hour per square meter per year (see~\cite{TotalEnergies2024dpe}) is reached.
\begin{table}[ht!]
\small\centering
\begin{tabular}{|l|r|r|r|r|r|}
\hline
{ \textbf{Building }}          & \textbf{$1$} &  \textbf{$2$} &  \textbf{$3$} & \textbf{$4$}&  \textbf{$5$} \\ \hline
{ \textbf{$C_n^0$}} & 4000& 4000& 4000& 4000& 4000\\ \hline\hline
{ \textbf{$\alpha^n$}}          & 320.& 253.& 187.& 120.&70.\\ \hline\hline
{ \textbf{$R_n$}} & 25.0 &  25.0 & 25.0 & 25.0 & 25.\\ \hline
\end{tabular}
\caption{Characteristics of the building}
\label{ct-tab:impact_on_hp}
\end{table}

We use the $D=500$ trajectories of the productivity processes~$(\cZ, \cA)$ simulated above. We compute the loss given default of 4 loans with the same exposure but with the 4  buildings described in Table~\ref{ct-tab:impact_on_hp} as collateral.

\begin{table}[ht!]
\small \centering
\begin{tabular}{|r|r|r|r|r|r|}
\hline
\textit{\textbf{Emissions level}} & \textbf{No collateral} & \textbf{Building 1} & \textbf{Building 2} &  \textbf{Building 3} & \textbf{Building 4} \\ \hline
\textit{\textbf{Current Policies}}          & 45.&	36.020&	36.152&	35.928&	36.095 \\ \hline
\textit{\textbf{NDCs}}          & 45.&	38.383&	37.752&	36.922&	36.499 \\ \hline
\textit{\textbf{Divergent Net Zero}}  & 45.	&38.939	&38.303&	37.377&	36.751 \\ \hline
\textit{\textbf{Net Zero 2050}}     & 45.&	39.102&	38.524&	37.615&	36.908 \\ \hline
\end{tabular}
\caption{Average annual LGD per scenario between 2021 and 2030 (in \%)}
\label{ct-tab:LGD_with_hou}
\end{table}

As we see in \cref{ct-tab:LGD_with_hou,ct-fig:LGD_with_hou}, all the comments made for a financial asset as collateral are valid for a building: the presence of a collateral reduces LGD, but this reduction is smaller when the climate transition scenario becomes tougher.

\begin{figure}[!ht]
    \centering
    \includegraphics[width=0.95\textwidth]{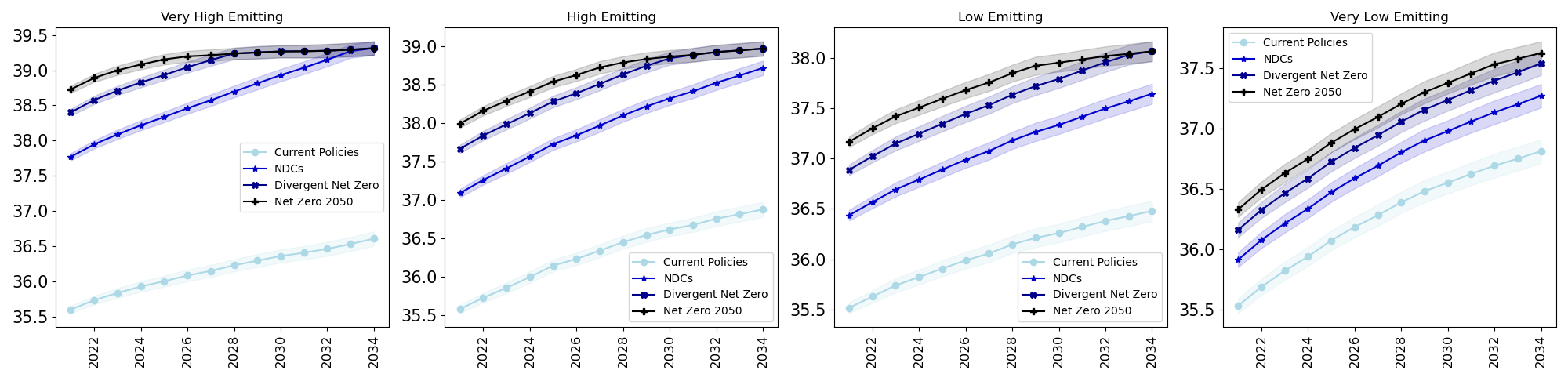}
    \caption{LGD with a building as collateral}
    \label{ct-fig:LGD_with_hou}
\end{figure}
There are two main differences. First, the more the building is energetically inefficient, the more LGD increases (it is the same above when the financial asset belongs to a very polluting sector). Secondly, LGD decreases when time increases. This is a consequence of the dynamics of the impact of the carbon price of the housing market (as described in \cite{sopgoui2024realestate}): as we approach the optimal renovation date, the prices of energy-inefficient buildings rise and converge progressively towards the prices of energy-efficient buildings (we can see on \cite{sopgoui2024realestate}[Figure 3]). LGD follows the same behaviour logically but with an inverse monotony.

\subsubsection{Expected and unexpected loss}
To this aim, to keep things simple, we will consider a credit portfolio of $N=12$ loans contracted by the firms described in Table~\ref{ct-tab:portfolio} below. 
\begin{table}[ht!]
\small
\begin{tabular}{|l||llll||llll||llll|}
\hline
\textbf{Loans} &
  {\textbf{1}} &
  {\textbf{2}} &
  {\textbf{3}} &
  \textbf{4} &
  {\textbf{5}} &
  {\textbf{6}} &
  {\textbf{7}} &
  \textbf{8} &
  {\textbf{9}} &
  {\textbf{10}} &
  {\textbf{11}} &
  \textbf{12} \\ \hline
\textit{\textbf{$\EAD_n$}} &
  \multicolumn{1}{l|}{200.} &
  \multicolumn{1}{l|}{200.} &
  \multicolumn{1}{l|}{200.} &
  200. &
  \multicolumn{1}{l|}{200.} &
  \multicolumn{1}{l|}{200.} &
  \multicolumn{1}{l|}{200.} &
  200. &
  \multicolumn{1}{l|}{200.} &
  \multicolumn{1}{l|}{200.} &
  \multicolumn{1}{l|}{200.} &
  200. \\ \hline
\textit{\textbf{$F_0^n$}} &
  \multicolumn{1}{l|}{1.} &
  \multicolumn{1}{l|}{1.} &
  \multicolumn{1}{l|}{1.} &
  1. &
  \multicolumn{1}{l|}{1.} &
  \multicolumn{1}{l|}{1.} &
  \multicolumn{1}{l|}{1.} &
  1. &
  \multicolumn{1}{l|}{1.} &
  \multicolumn{1}{l|}{1.} &
  \multicolumn{1}{l|}{1.} &
  1. \\ \hline
\textit{\textbf{$B^n$}} &
  \multicolumn{1}{l|}{3.76} &
  \multicolumn{1}{l|}{3.98} &
  \multicolumn{1}{l|}{3.75} &
  4.41 &
  \multicolumn{1}{l|}{3.76} &
  \multicolumn{1}{l|}{3.98} &
  \multicolumn{1}{l|}{3.75} &
  4.41 &
  \multicolumn{1}{l|}{3.76} &
  \multicolumn{1}{l|}{3.98} &
  \multicolumn{1}{l|}{3.75} &
  4.41 \\ \hline
\textit{\textbf{$\sigma_{\bb^n}$}} &
  \multicolumn{1}{l|}{0.05} &
  \multicolumn{1}{l|}{0.05} &
  \multicolumn{1}{l|}{0.05} &
  0.05 &
  \multicolumn{1}{l|}{0.05} &
  \multicolumn{1}{l|}{0.05} &
  \multicolumn{1}{l|}{0.05} &
  0.05 &
  \multicolumn{1}{l|}{0.05} &
  \multicolumn{1}{l|}{0.05} &
  \multicolumn{1}{l|}{0.05} &
  0.05 \\ \hline
\textit{\textbf{$\af^n$(Very High)}} &
  \multicolumn{1}{l|}{1.} &
  \multicolumn{1}{l|}{0.} &
  \multicolumn{1}{l|}{0.} &
  0. &
  \multicolumn{1}{l|}{1.} &
  \multicolumn{1}{l|}{0.} &
  \multicolumn{1}{l|}{0.} &
  0. &
  \multicolumn{1}{l|}{1.} &
  \multicolumn{1}{l|}{0.} &
  \multicolumn{1}{l|}{0.} &
  0. \\ \hline
\textit{\textbf{$\af^n$(High)}} &
  \multicolumn{1}{l|}{0.} &
  \multicolumn{1}{l|}{1.} &
  \multicolumn{1}{l|}{0.} &
  0. &
  \multicolumn{1}{l|}{0.} &
  \multicolumn{1}{l|}{1.} &
  \multicolumn{1}{l|}{0.} &
  0. &
  \multicolumn{1}{l|}{0.} &
  \multicolumn{1}{l|}{1.} &
  \multicolumn{1}{l|}{0.} &
  0. \\ \hline
\textit{\textbf{$\af^n$(Low)}} &
  \multicolumn{1}{l|}{0.} &
  \multicolumn{1}{l|}{0.} &
  \multicolumn{1}{l|}{1.} &
  0. &
  \multicolumn{1}{l|}{0.} &
  \multicolumn{1}{l|}{0.} &
  \multicolumn{1}{l|}{1.} &
  0. &
  \multicolumn{1}{l|}{0.} &
  \multicolumn{1}{l|}{0.} &
  \multicolumn{1}{l|}{0.} &
  0. \\ \hline
\textit{\textbf{$\af^n$(Very Low)}} &
  \multicolumn{1}{l|}{0.} &
  \multicolumn{1}{l|}{0.} &
  \multicolumn{1}{l|}{0.} &
  1. &
  \multicolumn{1}{l|}{0.} &
  \multicolumn{1}{l|}{0.} &
  \multicolumn{1}{l|}{0.} &
  1. &
  \multicolumn{1}{l|}{0.} &
  \multicolumn{1}{l|}{0.} &
  \multicolumn{1}{l|}{0.} &
  0. \\ \hline
\textit{\textbf{Collateral type}} &
  \multicolumn{1}{l|}{No} &
  \multicolumn{1}{l|}{No} &
  \multicolumn{1}{l|}{No} &
  No &
  \multicolumn{1}{l|}{Fa} &
  \multicolumn{1}{l|}{Fa} &
  \multicolumn{1}{l|}{Fa} &
  Fa &
  \multicolumn{1}{l|}{Ho} &
  \multicolumn{1}{l|}{Ho} &
  \multicolumn{1}{l|}{Ho} &
  Ho \\ \hline
\textit{\textbf{$\overline{F}_0^n$}} &
  \multicolumn{4}{l|}{\multirow{10}{*}{}} &
  \multicolumn{1}{l|}{1.} &
  \multicolumn{1}{l|}{1.} &
  \multicolumn{1}{l|}{1.} &
  1. &
  \multicolumn{4}{l|}{\multirow{6}{*}{}} \\ \cline{1-1} \cline{6-9}
\textit{\textbf{$\overline{\sigma}_{\bb^n}$}} &
  \multicolumn{4}{l|}{} &
  \multicolumn{1}{l|}{0.05} &
  \multicolumn{1}{l|}{0.05} &
  \multicolumn{1}{l|}{0.05} &
  0.05 &
  \multicolumn{4}{l|}{} \\ \cline{1-1} \cline{6-9}
\textit{\textbf{$\overline{\af}^n$(Very High)}} &
  \multicolumn{4}{l|}{} &
  \multicolumn{1}{l|}{1.} &
  \multicolumn{1}{l|}{0.} &
  \multicolumn{1}{l|}{0.} &
  0. &
  \multicolumn{4}{l|}{} \\ \cline{1-1} \cline{6-9}
\textit{\textbf{$\overline{\af}^n$(High)}} &
  \multicolumn{4}{l|}{} &
  \multicolumn{1}{l|}{0.} &
  \multicolumn{1}{l|}{1.} &
  \multicolumn{1}{l|}{0.} &
  0. &
  \multicolumn{4}{l|}{} \\ \cline{1-1} \cline{6-9}
\textit{\textbf{$\overline{\af}^n$(Low)}} &
  \multicolumn{4}{l|}{} &
  \multicolumn{1}{l|}{0.} &
  \multicolumn{1}{l|}{0.} &
  \multicolumn{1}{l|}{1.} &
  0. &
  \multicolumn{4}{l|}{} \\ \cline{1-1} \cline{6-9}
\textit{\textbf{$\overline{\af}^n$(Very Low)}} &
  \multicolumn{4}{l|}{} &
  \multicolumn{1}{l|}{0.} &
  \multicolumn{1}{l|}{0.} &
  \multicolumn{1}{l|}{0.} &
  1. &
  \multicolumn{4}{l|}{} \\ \cline{1-1} \cline{6-13} 
\textit{\textbf{$C_n^0$}} &
  \multicolumn{4}{l|}{} &
  \multicolumn{4}{l|}{\multirow{4}{*}{}} &
  \multicolumn{1}{l|}{4000.} &
  \multicolumn{1}{l|}{4000.} &
  \multicolumn{1}{l|}{4000.} &
  4000. \\ \cline{1-1} \cline{10-13} 
\textit{\textbf{$R_n$}} &
  \multicolumn{4}{l|}{} &
  \multicolumn{4}{l|}{} &
  \multicolumn{1}{l|}{25.} &
  \multicolumn{1}{l|}{25.} &
  \multicolumn{1}{l|}{25.} &
  25. \\ \cline{1-1} \cline{10-13} 
\textit{\textbf{$\alpha^n$}} &
  \multicolumn{4}{l|}{} &
  \multicolumn{4}{l|}{} &
  \multicolumn{1}{l|}{320.} &
  \multicolumn{1}{l|}{253.} &
  \multicolumn{1}{l|}{187.} &
  120. \\ \cline{1-1} \cline{1-13} 
\end{tabular}
\caption{Characteristics of the portfolio (No = no collateral, Fa = Financial asset collateral, Ho = housing collateral)}
\label{ct-tab:portfolio}
\end{table}

We can remark that, {for each $k=0,1,2$}, firms $4k+1$, $4k+2$, $4k+3$, and $4k+4$ respectively belong to the \textit{Very High Emitting}, \textit{High Emitting}, \textit{Low Emitting}, and \textit{Very Low Emitting} groups. Moreover, we assume that
\begin{itemize}
    \item the loans of the firms $1$, $2$, $3$, and $4$ are not collateralized;
    \item the loans of the firms $5$, $6$, $7$, and $8$ are collateralized by financial assets described in Table~\ref{ct-tab:impact_on_fv};
    \item the loans of the firms $9$, $10$, $11$, and $12$ are collateralized by a building described in Table~\ref{ct-tab:impact_on_hp}.
\end{itemize}
We want to calculate the expected (respectively unexpected) loss noted~$\EL$ (respectively~$\UL$) for each loan~$n=1,\hdots,12$, by using~\eqref{ct-eq:approx stressedEL} (respectively~\eqref{ct-eq:approx stressedUL}).
\begin{table}[ht!]
\small \centering
\begin{tabular}{|r||r|r|r|r||r|r|r|r||r|r|r|r|r|}
\hline
\textit{\textbf{Emissions level}} & 
  {\textbf{1}} &
  {\textbf{2}} &
  {\textbf{3}} &
  \textbf{4} &
  {\textbf{5}} &
  {\textbf{6}} &
  {\textbf{7}} &
  \textbf{8} &
  {\textbf{9}} &
  {\textbf{10}} &
  {\textbf{11}} &
  \textbf{12} \\ \hline
\textit{\textbf{Current Policies}} & 1.00&	1.00&	1.00&	1.00& 0.79&	0.77&	0.81&	0.71& 0.58&	0.59&	0.58&	0.58 \\ \hline
\textit{\textbf{NDCs}} & 1.28&	1.17&	1.04&	1.02& 1.01&	0.90&	0.85&	0.72& 0.91&	0.79&	0.66&	0.62\\ \hline
\textit{\textbf{Divergent Net Zero}}  & 1.74&	1.41&	1.10&	1.05& 1.38&	1.10&	0.90&	0.75& 1.29&	0.99&	0.72&	0.65\\ \hline
\textit{\textbf{Net Zero 2050}} & 2.85&	1.91&	1.21&	1.11& 2.27&	1.47&	0.98&	0.79& 2.13	&1.37&	0.81&	0.70\\ \hline
\end{tabular}
\caption{Average annual EL per scenario between 2021 and 2030 (in \%)}
\label{ct-tab:EL}
\end{table}

Table~\ref{ct-tab:EL} (respectively Table~\ref{ct-tab:UL}) shows average annual EL (respectively UL) normalized to the EL without collateral observed in the scenario \textit{Current Policies}. We can make two key observations that were to be expected from the PD and LGD calculations:
\begin{enumerate}
    \item Whether collateral is involved or not, we can see that EL and UL increase as the transition hardens. This is to be expected, since PD and LGD behave in the same way.
    \item When a loan is collateralized, it significantly reduces the bank’s expected and unexpected losses. And for collateralized loans, these losses increase if the collateral has a high carbon footprint: in particular, if the collateral is a financial asset whose value growth is driven by a polluting sector or if it is a building that is not energy efficient. 
\end{enumerate}

\begin{table}[ht!]
\small \centering
\begin{tabular}{|r||r|r|r|r||r|r|r|r||r|r|r|r|r|}
\hline
\textit{\textbf{Emissions level}} & 
  {\textbf{1}} &
  {\textbf{2}} &
  {\textbf{3}} &
  \textbf{4} &
  {\textbf{5}} &
  {\textbf{6}} &
  {\textbf{7}} &
  \textbf{8} &
  {\textbf{9}} &
  {\textbf{10}} &
  {\textbf{11}} &
  \textbf{12} \\ \hline
\textit{\textbf{Current Policies}} & 1.00&	1.00&	1.00&	1.00& 0.81	&0.81&	0.84&	0.82 &0.59&	0.57&	0.56&	0.62\\ \hline
\textit{\textbf{NDCs}} & 1.18&	1.01&	1.00&	1.00 &0.95&	0.82	&0.84&	0.82& 0.85&	0.67&	0.62&	0.64\\ \hline
\textit{\textbf{Divergent Net Zero}}  & 1.42	&1.02&	1.00&	1.00& 1.16	&0.83&	0.84&	0.82& 1.05&	0.71&	0.64	&0.65\\ \hline
\textit{\textbf{Net Zero 2050}} & 1.80	&1.02&	1.0&	0.99 &	1.49&	0.85&	0.85&	0.82& 1.34&	0.73&	0.65&	0.66\\ \hline
\end{tabular}
\caption{Average annual UL per scenario between 2021 and 2030 (in \%)}
\label{ct-tab:UL}
\end{table}
 We can finally say that introducing and increasing the carbon price in the economy will
\begin{enumerate}
    \item increase bank charges (materialized by the level of provisions calculated from the expected loss~$EL$) charged to clients or by the operating income of banks; 
    \item and reduce the solvency and profitability of banks (translated by the economic capital calculated from the unexpected loss~$UL$).
\end{enumerate}

{For example, consider the \textbf{Net Zero 2050} scenario, for a borrower belonging to a very high GHG-emitting sector and its collateral too, the bank will see its associated EL increases by 187\% and its associated UL by 83\%. Conversely, if the borrower is in a very low GHG emitting sector and its collateral too, the bank will obtain its associated EL increases by 11\% and its UL unchanged.}

\section*{Conclusion}
Following~\cite{bouveret2023propagation}, we developed here  a framework to quantify the impacts of the carbon price on a credit portfolio (expected and unexpected) losses, when the obligor companies as well as their guarantees belong to an economy subject to the climate transition declined by carbon price. We start by describing a closed economy, driven by a productivity following a multidimensional Ornstein-Uhlenbeck and subject to a climate transition modeled through a dynamic and deterministic carbon price, by a dynamic stochastic multisectoral. Then, by using the discounted cash flow methodology with the cash flows, following a stochastic differential equation, depending on the productivity as well as the carbon price, we evaluate the obligor value that helps us later on to compute its probability of \textit{over-indebtedness}. We then turn to the bank's loss in the event of a borrower's \textit{over-indebtedness} and if its loan is collateralized. When that is the case, the potential loss of the bank is written as the difference between the debt amount (EAD) and the collateral liquidated. We finally distinguish two types of collateral: either a financial asset or a building, both belonging to the economy so affected by the productivity and the carbon price. This work opens the door to many extensions as a finer modeling of the real estate market, taking into account other types of guarantees, modeling the unsecured loans that we assumed constant, modelling the impact of the carbon price on the exposure. This work can be extended to include physical risk. In order to implement that, we could for example introduce the damage multiplier defined through the temperature, and which will affected the output of each sector. Therefore, for each sector $i\in\Jj$, we will have
\begin{align}
    \exp{\left(-\frac{\gamma_i}{2} T_t^2\right)},
\end{align}
where $\gamma_i\geq 0$ is the damage function coefficient and $T_t$ is the mean temperature of the earth relative to its
pre-industrial level.

\newpage
\small
\bibliographystyle{apalike}
\bibliography{main}
\newpage

\appendix

\normalsize
\section{Proofs}
\subsection{Hurwitz matrix}\label{ct-app:Hurwitz-matrix}
Assume that $-\Gamma$ is a Hurwitz matrix, then
\begin{enumerate}
    \item if we note $\lambda_\Gamma :=\max_{\lambda\in\lambda(\Gamma)} Re(\lambda) \geq 0$, there exists $c_\Gamma>0$ so that $\lVert e^{-\Gamma t}\rVert < c_\Gamma e^{-\lambda_\Gamma t}$ for all~$t\geq 0$.
    \item Moreover, for $t\geq 0$n $\Upsilon_{t}$ defined in~\eqref{ct-eq:Upsilon} is such that
    \begin{align}\label{ct-eq:Upsilon bound}
        \lVert\Upsilon_{t}\rVert = \left\lVert\int_{0}^{t} e^{-\Gamma s} \dr s\right\rVert \leq \int_{0}^{t} \left\lVert e^{-\Gamma s}\right\rVert \dr s \leq c_\Gamma \int_{0}^{t} e^{-\lambda_\Gamma s}\dr s \leq c_\Gamma \min{\left\{\frac{1}{\lambda_\Gamma}, t\right\}}.
    \end{align}
\end{enumerate}

\subsection{Bivariate Gaussian}\label{ct-app:Bivariate gaussian}

Assume that $X$ and $Y$ are two standard Gaussian with correlation coefficient $\rho$. We then have for $(x,y)\in\RR^2$, the cdf,
\begin{equation}
    \Phi_2(x,y) := \PP[X\leq x, Y\leq y] = \frac{1}{2\pi(1-\rho^2)} \int_{-\infty}^{x} \int_{-\infty}^{y} \exp{\left(-\frac{1}{2(1-\rho^2)} \left(u^2+v^2 - 2\rho u v\right)\right)} \dr u\dr v.
\end{equation}
Let $\sigma > 0$, we want to compute $\EE[e^{\sigma X} \bOne_{X\leq x, Y\leq y}]$. We have
\begin{equation*}
\begin{split}
     \EE[e^{\sigma X} \bOne_{X\leq x, Y\leq y}] &= \frac{1}{2\pi\sqrt{1-\rho^2}} \int_{-\infty}^{x} \int_{-\infty}^{y}e^{\sigma u} \exp{\left(-\frac{1}{2(1-\rho^2)} \left(u^2+v^2 - 2\rho u v\right)\right)} \dr u\dr v\\
     &= \frac{1}{2\pi\sqrt{1-\rho^2}} \int_{-\infty}^{x} e^{\sigma u -\frac{1}{2(1-\rho^2)} u^2} \int_{-\infty}^{y} \exp{\left(-\frac{1}{2(1-\rho^2)} \left(v^2 - 2\rho u v\right)\right)}\dr v\quad\dr u\\
     &= \frac{1}{2\pi\sqrt{1-\rho^2}} \int_{-\infty}^{x} e^{\sigma u -\frac{1}{2(1-\rho^2)} u^2} \int_{-\infty}^{y} \exp{\left(-\frac{1}{2(1-\rho^2)} \left((v - \rho u)^2- \rho^2 u^2\right)\right)}\dr v\quad\dr u\\
     &= \frac{1}{2\pi\sqrt{1-\rho^2}} \int_{-\infty}^{x} e^{\sigma u - \frac{1}{2}u^2} \int_{-\infty}^{y} \exp{\left(-\frac{1}{2(1-\rho^2)} \left((v - \rho u)^2\right)\right)}\dr v\quad\dr u\\
\end{split}
\end{equation*}
But \begin{equation*}
    \int_{-\infty}^{y} \exp{\left(-\frac{1}{2(1-\rho^2)} \left((v - \rho u)^2\right)\right)}\dr v = \sqrt{2\pi(1-\rho^2)} \Phi\left(\frac{y - \rho u}{\sqrt{1-\rho^2}}\right),
\end{equation*}
therefore,
\begin{equation*}
\begin{split}
     \EE[e^{\sigma X} \bOne_{X\leq x, Y\leq y}] &= \frac{1}{\sqrt{2\pi}} \int_{-\infty}^{x} e^{\sigma u-\frac{1}{2}u^2}\Phi\left(\frac{y - \rho u}{\sqrt{1-\rho^2}}\right)\dr u\\
    &= \frac{e^{\frac{1}{2}\sigma^2}}{\sqrt{2\pi}} \int_{-\infty}^{x} e^{-\frac{1}{2}(u-\sigma)^2}\Phi\left(\frac{y - \rho u}{\sqrt{1-\rho^2}}\right)\dr u\\
     &= e^{\frac{1}{2}\sigma^2} \int_{-\infty}^{x-\sigma}\phi(u)\Phi\left(\frac{y-\rho\sigma}{\sqrt{1-\rho^2}} + \frac{-\rho}{\sqrt{1-\rho^2}}u\right)\dr u.
\end{split}
\end{equation*}
However, \begin{equation*}
    \int_{-\infty}^{c} \Phi(a + b x) \phi(x) \mathrm{d} x = \Phi_2\left(c,\frac{a}{\sqrt{1+b^2}}; \frac{-b}{\sqrt{1+b^2}}\right),
\end{equation*}
we can then conclude that 
\begin{equation}\label{ct-INT:int2}
    \EE[e^{\sigma X} \bOne_{X\leq x, Y\leq y}] =  e^{\frac{1}{2}\sigma^2}\Phi_2\left(x-\sigma,y-\rho\sigma; \rho\right).
\end{equation}

\section{The multisectoral model in continuous time}\label{ct-app-sec:multisec}
For all~$i\in\Jj$, let us consider the following $\GG$-measurable and positive processes: $Y^i$ the production of sector~$i$, $N^i$ the labor demand in sector~$i$, and for all~$j\in\Jj$, $Z^{ji}$ the consumption by sector~$i$ of intermediate inputs produced by sector~$j$.
\subsection{The firm's point of view} 

Aiming to work with a simple model, we follow~\cite[Chapter 2]{gali2015monetary}. It then appears that the firm's problem corresponds to an optimization performed at each period, depending on the state of the world. This problem will depend, in particular, on the productivity and the price processes introduced above. Moreover, it will also depend on $P^i$ and $W^i$, two $\mathbb{G}$-adapted positive stochastic processes representing respectively the price of good $i$ and the wage paid in sector $i\in\Jj$. 
We start by considering the associated deterministic problem below, when time and randomness are fixed. 

\paragraph{Solution for the deterministic problem} 
We denote $\overline{a} \in (0,+\infty)^I$ the level of technology in each sector, $\overline{p} \in (0,\infty)^I$ the price of the goods produced by each sector, $\overline{w} \in (0,\infty)^I$ the nominal wage in each sector, $\overline{\tau} \in [0,1)^I$ and $\overline{\zeta} \in [0,1)^{I\times I}$ the price on production and consumption of goods.
For $i \in\Jj$,
we consider a representative firm of sector~$i$, with technology described by the production function 
\begin{align}
   \RR_+\times \RR_+^{I} \ni (n,z) \mapsto F^i_{\overline a}(n,z) = \overline{a}^i n^{\psi^i} \prod_{j\in\Jj} (z^{j})^{\llambda^{ji}} \in \RR_+,
\end{align}
where $n$ represents the number of hours of work in the sector, 
and~$z^j$ the firm's consumption of intermediary input produced by sector~$j$. The coefficients $\psi \in (\RR_+^*)^I$ and $\llambda \in (\RR_+^*)^{I\times I}$ are elasticities satisfying~\eqref{ct-eq const ret to scale}.
The management of firm~$i$
then solves the classical problem of profit maximization
\begin{align}\label{ct-eq firm problem deterministic}
    \widehat{\Pi}^i_{(\overline{a},\overline{w},\overline{p},\overline{\tau},\overline{\zeta}, \overline{\delta})} := \sup_{(n,z) \in \RR_+\times \RR_+^{I}  } \Pi^i(n,z),
\end{align}
where, omitting the dependency in $(\overline{a},\overline{w},\overline{p},\overline{\tau},\overline{\zeta})$,
\begin{equation}
   \Pi^i(n,z) := F^i_{\overline a}(n,z) \overline{p}^i - \overline{\tau}^i F^i_{\overline a}(n,z)\overline{p}^i\overline{\delta}  - \overline{w}^i n - \sum_{j \in \mathcal{I} }z^{j}\overline{p}^j + z^{j}\overline{\zeta}^{ji}\overline{p}^j\overline{\delta}.
\end{equation}

Note that $F^i_{\overline a}(n,z)(1-\overline{\tau} ^i)\overline{p}^i$ represents the firm’s revenues after carbon price, that~$\overline{w}^i n$ stands for the
firm’s total compensations, and that $\sum_{j \in \mathcal{I} }z^{j}(1+\overline{\zeta}^{ji})\overline{p}^j$ is the firm’s total intermediary inputs. Now, we would like to solve the optimization problem for the firms, namely determine the optimal demands~$\mathfrak{n}$ and~$\mathfrak{z}$ as functions of $(\overline{a},\overline{w},\overline{p},\overline{\tau},\overline{\zeta})$. Because we will lift these optimal quantities in a dynamical stochastic setting, we impose that they are expressed as measurable functions.
We thus introduce:
\begin{definition}\label{ct-de admissible solution}
    An \emph{admissible solution} to problem~\eqref{ct-eq firm problem deterministic} is a pair of measurable functions 
    \begin{equation*}
        (\mathfrak{n},\mathfrak{z}):(0,+\infty)^I\times(0,+\infty)^I\times (0,+\infty)^I \times [0,1)^I \times [0,1)^{I\times I} \rightarrow [0,+\infty)^I \times [0,+\infty)^{I\times I},
    \end{equation*}
    such that, for each sector $i$, denoting $\overline{n}:=\mathfrak{n}^i(\overline{a},\overline{w},\overline{p},\overline{\tau},\overline{\zeta})$ and $\overline{z}:=\mathfrak{z}^{\cdot i}(\overline{a},\overline{w},\overline{p},\overline{\tau},\overline{\zeta})$,
    \begin{align*}
       F^i_{\overline a}(\overline{n},\overline{z})(1-\overline{\tau}^i\overline{\delta})\overline{p}^i - \overline{w}^i \overline n - \sum_{j \in \mathcal{I} }\overline z^{j}(1+\overline{\zeta}^{ji}\overline{\delta})\overline{p}^j = \widehat{\Pi}^i_{(\overline{a},\overline{w},\overline{p},\overline{\tau},\overline{\zeta}, \overline{\delta})},
    \end{align*}
 and $F^i_{\overline a}(\overline{n},\overline{z})>0$ (non-zero production), according to~\eqref{ct-eq firm problem deterministic}.
\end{definition}

\begin{remark}
    The solution obviously depends also on the coefficients $\psi$ and $\llambda$. But these are fixed once and we will not study the dependence of the solution with respect to them. 
\end{remark}

\begin{proposition}\label{ct-pr firm foc}
There exists  admissible solutions in the sense of Definition~\ref{ct-de admissible solution}. Any admissible solution is given by for all $i \in \Jj$, $\mathfrak{n}^i>0$ and for all $(i,j) \in \Jj^2$,
\begin{align}\label{ct-eq optimal demand}
   \mathfrak{z}^{ji} = 
   \frac{\llambda^{ji}}{\psi^i} \frac{\overline w^i}{(1+\overline \zeta^{ji}\overline{\delta})\overline p^j}\mathfrak{n}^i > 0.
\end{align}
Moreover, it holds that $\widehat{\Pi}^i_{(\overline{a},\overline{w},\overline{p},\overline{\tau},\overline{\zeta},\overline{\delta})} = 0$ (according to~\eqref{ct-eq firm problem deterministic}) and 
\begin{subequations}
\begin{align}
\displaystyle \mathfrak{n}^i &= \psi^i F^i_{\overline a}(\mathfrak{n}^i, \mathfrak{z}^{\cdot i}) \frac{(1-\overline \tau^i\overline{\delta})\overline p^i}{\overline w^i} \label{ct-eq:Nopti-firm-inputs},\\
\displaystyle\mathfrak{z}^{ji} &= \llambda^{ji} F^i_{\overline a}(\mathfrak{n}^i, \mathfrak{z}^{\cdot i}) \frac{(1-\overline \tau^i\overline{\delta})\overline p^i}{(1+\overline \zeta^{ji}\overline{\delta})\overline p^j}.\label{ct-eq:Zopti-firm-inputs}
\end{align}
\end{subequations}

\end{proposition}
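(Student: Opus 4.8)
The plan is to treat the deterministic program \eqref{ct-eq firm problem deterministic} as a smooth optimization on the open orthant, extract the interior first-order conditions, and then exploit the constant-returns hypothesis \eqref{ct-eq const ret to scale} to obtain both the proportionality \eqref{ct-eq optimal demand} and the vanishing of the maximal profit. Existence is the only genuinely delicate point, and I would isolate it at the end via the cost function.

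First I would argue that any admissible pair must be interior. By Definition~\ref{ct-de admissible solution} admissibility forces $F^i_{\overline a}(\overline n,\overline z)>0$, and since $F^i_{\overline a}$ is Cobb--Douglas with strictly positive exponents $\psi^i$ and $(\llambda^{ji})_{j\in\Jj}$, positivity of output requires $\overline n>0$ and $\overline z^j>0$ for every $j\in\Jj$. Hence a maximizer lies in $(0,\infty)^{1+I}$, where $\Pi^i$ is differentiable and its gradient must vanish. Using $\partial_n F^i_{\overline a}=\psi^i F^i_{\overline a}/n$ and $\partial_{z^j}F^i_{\overline a}=\llambda^{ji}F^i_{\overline a}/z^j$, the equations $\partial_n\Pi^i=0$ and $\partial_{z^j}\Pi^i=0$ are precisely the marginal-revenue-equals-marginal-cost identities \eqref{ct-eq:Nopti-firm-inputs} and \eqref{ct-eq:Zopti-firm-inputs}. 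Dividing the second family by the first, i.e. substituting $(1-\overline\tau^i\overline\delta)\overline p^i F^i_{\overline a}=\overline w^i\mathfrak n^i/\psi^i$ into \eqref{ct-eq:Zopti-firm-inputs}, yields the claimed relation \eqref{ct-eq optimal demand}; positivity of $\mathfrak n^i$ and of every $\mathfrak z^{ji}$ then follows because all wages, prices and the net-of-carbon factor $1-\overline\tau^i\overline\delta$ are strictly positive (the latter by the standing assumption \eqref{ct-eq:prod_vs_emiss}).

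For the value of the supremum I would substitute the first-order conditions back into $\Pi^i$: the three contributions become $(1-\overline\tau^i\overline\delta)\overline p^i F^i_{\overline a}$ multiplied by $1$, $-\psi^i$ and $-\sum_{j}\llambda^{ji}$ respectively, so their sum equals $(1-\overline\tau^i\overline\delta)\overline p^i F^i_{\overline a}\,(1-\psi^i-\sum_{j}\llambda^{ji})=0$ by the constant-returns identity \eqref{ct-eq const ret to scale}. This is just Euler's theorem applied to the degree-one homogeneous map $F^i_{\overline a}$, and it gives $\widehat\Pi^i_{(\overline a,\overline w,\overline p,\overline\tau,\overline\zeta,\overline\delta)}=0$ at any critical point.

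The hard part will be existence, exactly because constant returns make $\Pi^i$ positively homogeneous of degree one in $(n,z)$: its supremum over the cone is either $0$ or $+\infty$, and a maximizer with strictly positive output is not automatic. I would resolve this through the Cobb--Douglas cost function: minimizing $\overline w^i n+\sum_j(1+\overline\zeta^{ji}\overline\delta)\overline p^j z^j$ subject to $F^i_{\overline a}(n,z)=y$ produces a constant unit cost $c^i$, whence the profit factorizes as $y\big((1-\overline\tau^i\overline\delta)\overline p^i-c^i\big)$. The supremum is then finite (and equal to zero, attained along an entire ray of inputs with $F^i_{\overline a}=y>0$, which is why the scale of $\mathfrak n^i$ is left free in \eqref{ct-eq optimal demand}) if and only if the zero-profit relation $(1-\overline\tau^i\overline\delta)\overline p^i=c^i$ holds. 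The main obstacle is therefore to verify that the price configuration under which the firm's program is posed satisfies this free-entry identity; this is the place where the argument must connect to the market-clearing and household optimality conditions of the surrounding equilibrium, rather than being a property of arbitrary prices.
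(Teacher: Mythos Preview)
Your derivation of the necessary conditions is essentially the same as the paper's: both argue interiority from $F^i_{\overline a}>0$ and the strict positivity of the Cobb--Douglas exponents, then read off \eqref{ct-eq:Nopti-firm-inputs}--\eqref{ct-eq:Zopti-firm-inputs} from the vanishing gradient and obtain \eqref{ct-eq optimal demand} by taking the ratio. You go further than the paper in two respects. First, you actually establish $\widehat\Pi^i=0$ by substituting the first-order conditions back and invoking the constant-returns identity \eqref{ct-eq const ret to scale}; the paper states this conclusion but its proof does not spell out this step. Second, your cost-function discussion pinpoints exactly why existence is delicate under constant returns: the profit is positively homogeneous of degree one, so a maximizer with strictly positive output exists only on the zero-profit manifold $(1-\overline\tau^i\overline\delta)\overline p^i=c^i$. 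The paper's proof does not address this and simply asserts existence; your observation that the argument ultimately relies on the equilibrium price relations (rather than holding for arbitrary $(\overline w,\overline p)$) is a sharper reading of what is actually being claimed. In short, your proposal is correct and more complete than the paper's own proof, and your caveat on existence is well placed rather than a gap in your argument.
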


\begin{proof}
We study the optimization problem for the representative firm $i \in \Jj$.
Since $\psi^i>0$ and $\llambda^{ji} > 0$, for all $j \in \Jj$, as soon as $n=0$ or  $z^j = 0$, for some $j \in \Jj$, the production is equal to $0$. From problem~\eqref{ct-eq firm problem deterministic}, we obtain that necessarily $n \neq 0$ and $z^j\neq 0$ for all $j$ in this case. So an admissible solution, which has non-zero production, has positive components.
\\
Setting $\overline{n}=\mathfrak{n}^i(\overline{a},\overline{w},\overline{p},\overline{\tau},\overline{\zeta})>0$ and $\overline{z}=\mathfrak{z}^{\cdot i}(\overline{a},\overline{w},\overline{p},\overline{\tau},\overline{\zeta})>0$, the optimality of $(\overline n, \overline z)$ yields
\begin{equation*}
    \partial_{n} \Pi^i(\overline n, \overline z) = 0 \text{ and for any }j\in\Jj,\quad \partial_{z^j} \Pi^i(\overline n, \overline z) = 0.
\end{equation*}
We then compute
\begin{equation*}
    \psi^i \frac{F^i_{\overline a}(\overline n, \overline z)}{\overline n}(1-\overline \tau^i\overline{\delta})\overline p^i-\overline w^i = 0
    \text{ and for any }j\in\Jj,\quad 
    \llambda^{ji} \frac{F^i_{\overline a}(\overline n, \overline z)}{\overline z^j}(1-\overline \tau^i\overline{\delta})\overline p^i- (1+\overline \zeta^{ji}\overline{\delta})\overline p^j = 0,
\end{equation*}
which leads to~\eqref{ct-eq optimal demand}, \eqref{ct-eq:Nopti-firm-inputs}, and~\eqref{ct-eq:Zopti-firm-inputs}.
\end{proof}

\paragraph{Dynamic setting} In~\ref{ct-subse equilibrium} below, we  characterize the dynamics of the output and consumption processes using market equilibrium arguments. There, the optimal demand by the firm for intermediary inputs and labor is lifted to the stochastic setting where the admissible solutions then write as functions of the productivity, carbon price, price of goods/services; and wage processes, see Definition~\ref{ct-def:market equilibrium}. For all~$i\in\Jj$, $Y^i$ representing the production of sector~$i$, $N^i$ representing the labor demand in sector~$i$, and for all~$j\in\Jj$, $Z^{ji}$ representing the consumption by sector~$i$ of intermediate inputs produced by sector~$j$ are therefore positive and $\GG$-adapted processes.

\subsection{The household's point of view}
 Let $(r_t)_{t\geq 0}$ be the (exogenous) deterministic interest rate, valued in~$\RR_{+}$.
At each time $t\geq 0$ and for each sector~$i \in\Jj$, we denote
\begin{itemize}
    \item $C_{t}^i$ the quantity consumed of the single good in the sector~$i$, valued in~$\RR_{+}^{*}$;
    \item $H_{t}^{i}$ the number of hours of work in sector~$i$, valued in~$\RR_{+}^{*}$.
\end{itemize}

We also introduce a time preference parameter 
$\beta \in [0,1)$
and a utility function $U:(0,\infty)^2 \to \RR$ given, for $\varphi\geq 0$, by~$U(x, y) := \frac{x^{1-\sigma}}{1-\sigma} - \frac{y^{1+\varphi}}{1+\varphi}$
if  $\sigma \in [0, 1)\cup(1, +\infty)$ and by $U(x, y) := \log(x) - \frac{y^{1+\varphi}}{1+\varphi}$, if $\sigma = 1$. 
We also suppose that 
\begin{align}\label{ct-eq hyp integrability P,W}
\mathfrak{P}:=\sup_{t\geq 0,i \in\Jj} \EE\left[\left(\frac{P_{t}^i}{W_{t}^i}\right)^{1+\varphi}\right]<+\infty.
\end{align}

\noindent For any
$\Cc, \Hh \in \mathscr{L}^1_{+}(\mathbb{G},(0,\infty)^I)$, we introduce the wealth process
\begin{equation}\label{ct-eq:WealthProcess}
\dr Q_{t} = r_{t} Q_{t} \dr t + \sum_{i\in\Jj} W_{t}^i H_{t}^{i}- \sum_{i\in\Jj}P_{t}^i C_{t}^i - \sum_{i\in\Jj}\kappa_{t}^{i}P_{t}^iC_{t}^i\delta_t,
\qquad\text{for any }t\geq 0,
\end{equation}
with the convention $Q_{0}:=0$ and $r_{0}:=0$. 
Note that we do not indicate the dependence of~$Q$ upon~$C$ and~$H$ to alleviate the notations.\\

For~$t \geq 0$ and~$i \in\Jj$, $P_{t}^i C_{t}^i$ represents the household's consumption in the sector~$i$ and  $\kappa_{t}^{i}P_{t}^iC_{t}^i\delta_t$ is the cost paid by households due to their emissions when they consume goods~$i$, so $\sum_{i\in\Jj}P_{t}^i  C_{t}^i(1 + \kappa_{t}^{i}\delta_t)$ is the household's total expenses. Moreover, $W_{t}^i H_{t}^{i}$ is the household's labor income in the sector~$i$, $(1 + r_{t-1}) Q_{t-1}$ the household's capital income, and $(1 + r_{t-1}) Q_{t-1} + \sum_{i\in\Jj} W_{t}^i H_{t}^{i}$ the household's total revenue. 

We define $\mathscr{A}$ as the set of all  couples $(\Cc, \Hh)$ with~$\Cc, \Hh \in \mathscr{L}^1_{+}(\mathbb{G},(0,\infty)^I)$ such that
\begin{equation*}
\left\{
\begin{array}{ll}
 & \displaystyle \EE \left [ \sum_{i\in\Jj} \int_{t=0}^{\infty} \beta^t |U(C_{t}^i, H_{t}^i)|\dr t\right] < \infty,\\
 &\lim_{T\uparrow\infty}\EE[Q_T|\cG_t]\geq 0,
\qquad \text{for all } t \geq 0.
\end{array}
\right.
\end{equation*} 

The representative household consumes the~$I$ goods of the economy and provides labor to all the sectors. For any $(\Cc, \Hh)\in\mathscr{A}$, let
\begin{equation*}
    \mathcal{J}(\Cc, \Hh) := \sum_{i\in\Jj} \mathcal{J}_i(\Cc^{i}, \Hh^{i}),
    \qquad\textrm{with}\qquad
    \mathcal{J}_i( \Cc^{i}, \Hh^{i}) := \EE \left[\int_{t=0}^{\infty} \beta^t U(C_{t}^i, H_{t}^{i})\dr t\right], 
    \quad\textrm{for all } i\in\Jj.
\end{equation*} 
The representative household seeks to maximize its objective function by solving
\begin{equation}\label{ct-eq:HouseholdMax}
    \max_{(\Cc, \Hh) \in \mathscr{A}}\quad 
    \mathcal{J}(\Cc, \Hh).
\end{equation}
We choose above a separable utility function as~\cite{miranda2019comparing} does, meaning that the representative household optimizes its consumption and hours of work for each sector independently but under a global budget constraint.
\noindent The following proposition provides an explicit solution to~\eqref{ct-eq:HouseholdMax}.

\begin{proposition}\label{ct-prop:HouseholdMax}
Assume that~\eqref{ct-eq:HouseholdMax} has a solution $(C,H)\in \mathscr{A}$. Then,
for all $i,j\in\Jj$, 
the household’s optimality condition reads,
for any $t\geq 0$,
\begin{subequations}
\begin{align}
\displaystyle\frac{P_{t}^i}{W_{t}^i} & = \displaystyle \frac{1}{1 + \kappa_{t}^{i}\delta_t}  (H_{t}^i)^{-\varphi} (C_{t}^i)^{-\sigma}\label{ct-eq:fH_first_order_1},\\
\displaystyle\frac{P_{t}^i}{P_{t}^j} & = \displaystyle \frac{1 + \kappa_{t}^{j}\delta_t}{1 + \kappa_{t}^{i}\delta_t} \left( \frac{C_{t}^i}{C_{t}^j}\right)^{-\sigma}.\label{ct-eq:sH_first_order_1}
\end{align}
\end{subequations}
\end{proposition}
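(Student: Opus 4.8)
The plan is to reduce the dynamic budget constraint encoded by the wealth process to a single lifetime (expected, discounted) budget constraint, and then to solve the resulting concave program by a Lagrange-multiplier / pointwise variational argument, exploiting the fact that the criterion is additively separable in time and across sectors.

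First I would integrate the wealth dynamics~\eqref{ct-eq:WealthProcess}. Writing $R_t := \int_0^t r_s\,\dr s$ and multiplying the linear ODE for $Q$ by the integrating factor $e^{-R_t}$, the initial condition $Q_0=0$ gives
\[
e^{-R_T} Q_T = \int_0^T e^{-R_s}\Big(\sum_{i\in\Jj} W_s^i H_s^i - \sum_{i\in\Jj} P_s^i C_s^i(1+\kappa_s^i\delta_s)\Big)\,\dr s .
\]
Taking $\cG_t$-conditional expectations, letting $T\uparrow\infty$, and invoking the transversality condition $\lim_{T}\EE[Q_T|\cG_t]\geq 0$ from the definition of $\mathscr{A}$, every admissible pair satisfies an expected discounted budget inequality. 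Since $U$ is strictly increasing in its first argument, at any optimum the household exhausts its budget, so the constraint binds:
\[
\EE\Big[\int_0^\infty e^{-R_s}\Big(\sum_{i\in\Jj} P_s^i C_s^i(1+\kappa_s^i\delta_s) - \sum_{i\in\Jj} W_s^i H_s^i\Big)\,\dr s\Big] = 0 .
\]

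Next I would attach a single scalar multiplier $\lambda\ge 0$ to this constraint and form the Lagrangian
\[
\mathfrak{L}(\Cc,\Hh) := \EE\Big[\int_0^\infty\sum_{i\in\Jj}\Big(\beta^t U(C_t^i,H_t^i) - \lambda e^{-R_t}\big(P_t^i C_t^i(1+\kappa_t^i\delta_t) - W_t^i H_t^i\big)\Big)\,\dr t\Big] .
\]
Because the integrand is additively separable across $t$ and across sectors and concave in $(C_t^i,H_t^i)$, maximizing $\mathfrak{L}$ reduces to a pointwise, $\omega$-by-$\omega$ and sector-by-sector maximization. With $\partial_x U(x,y)=x^{-\sigma}$ and $\partial_y U(x,y)=-y^{\varphi}$, the stationarity conditions read, for each $t$ and $i$,
\[
\beta^t (C_t^i)^{-\sigma} = \lambda e^{-R_t} P_t^i(1+\kappa_t^i\delta_t),
\qquad
\beta^t (H_t^i)^{\varphi} = \lambda e^{-R_t} W_t^i ,
\]
the logarithmic case $\sigma=1$ being covered identically since $\partial_x\log x = x^{-1}$. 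Dividing the two relations within sector $i$ eliminates $\lambda e^{-R_t}$ and, after rearrangement, yields~\eqref{ct-eq:fH_first_order_1}; equating the first relation across two sectors $i,j$ again eliminates $\lambda e^{-R_t}$ and yields~\eqref{ct-eq:sH_first_order_1}.

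The main obstacle is not the algebra but the rigorous justification of the variational step: one must check that admissible perturbations remain in $\mathscr{A}$ and that the integrability hypotheses, in particular~\eqref{ct-eq hyp integrability P,W} together with the $\mathscr{L}^1_{+}$ requirement on $(\Cc,\Hh)$, permit differentiation under the expectation and the time integral. The separability and strict concavity of the problem are precisely what allow me to replace the functional optimization by the pointwise first-order conditions and to conclude that these necessary conditions are also sufficient.
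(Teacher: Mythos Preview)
Your Lagrangian route is a legitimate alternative and the algebra you sketch does produce the two relations, but the paper takes a more elementary path that avoids precisely the technical points you flag as obstacles. Instead of collapsing the wealth dynamics to a single intertemporal constraint and introducing a multiplier, the paper constructs explicit \emph{budget-neutral} perturbations of the assumed optimum $(\widehat C,\widehat H)$. At a fixed time $s$, sector $i$, and event $A^s\in\cG_s$, it shifts $\widehat C^i_s$ by $\eta h\theta\bOne_{A^s}$ and simultaneously shifts $\widehat H^i_s$ by $\eta h\theta\bOne_{A^s}\,P^i_s(1+\kappa^i_s\delta_s)/W^i_s$; by construction the income and expenditure terms in~\eqref{ct-eq:WealthProcess} cancel, so $Q$ is unchanged and the perturbed pair remains in $\mathscr{A}$ automatically. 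Dividing the optimality inequality by $h$ and sending $h\downarrow 0$ for both signs of $\eta$ yields~\eqref{ct-eq:fH_first_order_1}. For~\eqref{ct-eq:sH_first_order_1} the paper perturbs $\widehat C^i_s$ up and $\widehat C^j_s$ down in an expenditure-preserving ratio, leaving $\widehat H$ untouched.

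The gain from this approach is that nothing needs to be said about whether the lifetime budget binds, about the existence of a scalar multiplier in an infinite-dimensional problem, or about the continuum of conditional transversality constraints $\lim_T\EE[Q_T\mid\cG_t]\ge 0$ that you silently reduce to the single $t=0$ inequality. The only analytic step is a one-dimensional differentiation under the expectation at a single $(s,i)$, and the perturbation is chosen bounded by a fixed multiple of $(\widehat C^i_s,\widehat H^i_s)$ so that dominated convergence is immediate. Your approach is not incorrect, but to make it rigorous you would still owe the reader (i) a justification that the full family of conditional constraints can be replaced by one unconditional equality, (ii) a Lagrange-multiplier theorem applicable in this setting, and (iii) the interchange argument you mention but do not carry out.
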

Note that the discrete-time processes~$C$ and~$H$ cannot hit zero by definition of~$\mathscr{A}$,
so that the quantities above are well defined.

\begin{proof}
Suppose that $\sigma \neq 1$. We first check that $\mathscr{A}$ is non empty.
Assume that, for all~$t\geq 0$ and~$i\in\Jj$, $\tilde{C}_t^i = 1$ and $\tilde{H}_t^i = \frac{P_t^i(1+\kappa_t^i)}{W_t^i}$, then
\begin{align*}
    \EE \left [ \sum_{i\in\Jj} \int_{t=0}^{\infty} \beta^t |U(\tilde{C}_{t}^i, \tilde{H}_{t}^i)| \dr t\right] & \le \sum_{i\in\Jj} \int_{t=0}^{\infty} \beta^t \left( \frac{1}{1-\sigma} + \frac{1}{1+\varphi} \EE \left [ \left(\frac{P_t^i(1+\kappa_t^i\delta_t)}{W_t^i}\right)^{1+\varphi}\right] \right) \dr t.
    \\
    &\le \sum_{i\in\Jj} 
    \int_{t=0}^{\infty} \beta^t \left( \frac{1}{1-\sigma} + \frac{\mathfrak{P}(1+\kappa_t^i\delta_t)^{1+\varphi}}{1+\varphi}  \right) \dr t<+\infty,
\end{align*}
using~\eqref{ct-eq hyp integrability P,W}. 
We also observe that $Q$ built from $\tilde{H}, \tilde{C}$  satisfies $Q_t = 0$, for $t\geq 0$. Thus $(\tilde{H}, \tilde{C}) \in \mathscr{A}$.

Let now $(\widehat{\Cc}, \widehat{\Hh}) \in\mathscr{A}$ be such that $\displaystyle\mathcal{J}(\widehat{\Cc}, \widehat{\Hh}) = \max_{(\Cc, \Hh) \in \mathscr{A}} \mathcal{J}(\Cc, \Hh)$.\\
\noindent We fix $s\geq 0$ and~$i\in\Jj$. Let~$\eta = \pm1$, $0<h<1$, $A^s \in \mathcal{G}_s$, 
$\Delta^{(i,s)}:= (\bOne_{\{i=k,s=t\}})_{k\in\Jj,t\geq 0}$ and
$\theta^{(i,s)} := \frac12(1\wedge \frac{W^i_s}{P^i_s(1+\kappa^i_s)})\hat{C}_s^i \wedge \hat{H}^i_s\wedge 1>0$.
Set 
\begin{align}
    \overline{\Cc} := \widehat{\Cc} + \eta h \theta^{(i,s)} \bOne_{A^s} \Delta^{(i,s)}
    \text{ and } 
    \overline{\Hh} := \widehat{\Hh} + \eta h \theta^{(i,s)} \bOne_{A^s} \Delta^{(i,s)} \frac{\Pp^i(1+\kappa^i\delta_s)}{\Ww^i}.
\end{align}
We observe that for $(j,t)\neq (i,s)$, $\overline{\Cc}^j_t = \widehat{\Cc}^j_t$ and $\overline{H}^j_t = \widehat{H}^j_t$ and we compute 
\begin{align*}
    \overline{\Cc}^i_s \ge \widehat{\Cc}^i_s - \theta^{(i,s)} \ge \frac12\widehat{\Cc}^i_s > 0.
\end{align*}
Similarly, we obtain $\overline{H}^i_s>0$. We also observe that $\overline{C} \le \frac32 \widehat{C}$ and $\overline{H} \le \frac32 \widehat{H}$. Finally, we have that 
$$\sum_{j\in\Jj} W_{t}^j \overline{H}_{t}^{j}- \sum_{j\in\Jj}P_{t}^j (1 + \kappa_{t}^{j}\delta_t) \overline{C} _{t}^j
=
\sum_{j\in\Jj} W_{t}^j \widehat{H}_{t}^{j}- \sum_{j\in\Jj}P_{t}^j (1 + \kappa_{t}^{j}\delta_t) \widehat{C} _{t}^j.
$$
This allows us to conclude that $(\overline{\Cc}, \overline{\Hh})\in\mathscr{A}$.\\

We have, by optimality of $(\widehat{\Cc}, \widehat{\Hh})$,
    \begin{equation*}
        \mathcal{J}(\widehat{\Cc}, \widehat{\Hh}) - \mathcal{J}(\overline{\Cc}, \overline{\Hh}) = \sum_{j\in\Jj} \mathcal{J}_j(\widehat{\Cc}^{j}, \widehat{\Hh}^{j}) - \sum_{j\in\Jj} \mathcal{J}_j(\overline{\Cc}^{j}, \overline{\Hh}^{j}) \geq 0.
    \end{equation*}
However, for all~$(t,j) \neq (s,i)$, $\overline{\Cc}^{j}_t = \widehat{\Cc}^{j}_t$ and $\overline{\Hh}^{j}_t = \widehat{\Hh}^{j}_t$, then
    \begin{equation*}
          \EE \left[\beta^s U(\widehat{\Cc}_{s}^{i}, \widehat{\Hh}_{s}^{i})\right] - \EE  \left[\beta^s U\left(\widehat{\Cc}_{s}^{i} + \eta h \theta^{(i,s)} \bOne_{A^s}, \widehat{\Hh}_{s}^{i} + \eta h \theta^{(i,s)} \bOne_{A^s}\frac{P_{s}^{i}(1+\kappa_{s}^{i}\delta_s)}{W_{s}^{i}}\right)\right] \geq 0,
    \end{equation*}
    i.e.
    \begin{equation*}
          \frac{1}{h} \EE \left[U(\widehat{\Cc}_{s}^{i}, \widehat{\Hh}_{s}^{i}) - U\left(\widehat{\Cc}_{s}^{i} + \eta h  \theta^{(i,s)} \bOne_{A^s}, \widehat{\Hh}_{s}^{i} + \eta h \theta^{(i,s)} \bOne_{A^s} \frac{P_{s}^{i}(1+\kappa_{s}^{i}\delta_s)}{W_{s}^{i}}\right)\right] \geq 0.
    \end{equation*}
    Letting $h$ tend to $0$, we obtain
    \begin{equation*}
          \EE \left[\eta \theta^{(i,s)} \bOne_{A^s} \frac{\partial U}{\partial x}(\widehat{\Cc}_{s}^{i}, \widehat{\Hh}_{s}^{i}) + \eta \theta^{(i,s)} \bOne_{A^s} \frac{P_{s}^{i}(1+\kappa_{s}^{i}\delta_s)}{W_{s}^{i}} \frac{\partial U}{\partial y}(\widehat{\Cc}_{s}^{i}, \widehat{\Hh}_{s}^{i})\right] \geq 0.
    \end{equation*}
    Since the above holds for all $A^s \in \mathcal{G}_s$, $\eta = \pm1$ and since $\theta^{(i,s)}>0$, then
    \begin{equation*}
        \frac{\partial U}{\partial x}(\widehat{\Cc}_{s}^{i}, \widehat{\Hh}_{s}^{i}) + \frac{P_{s}^{i}(1+\kappa_{s}^{i}\delta_s)}{W_{s}^{i}} \frac{\partial U}{\partial y}(\widehat{\Cc}_{s}^{i}, \widehat{\Hh}_{s}^{i}) = 0,
    \end{equation*}
    leading to~\eqref{ct-eq:fH_first_order_1}.\\
    For~$j\in\Jj \setminus \{i\}$ and
    $
    \theta^{(i,j,s)} := \frac12 \left(1 \wedge \frac{\Pp^j_s(1+\kappa_s^j\delta_s)}{\Pp^i_s(1+\kappa_s^i\delta_s)}\right)(1 \wedge \widehat{C}^i_s \wedge \widehat{C}^j_s)>0
    $, 
    setting now
     $$\overline{\Cc} := \widehat{\Cc} + \eta h \bOne_{A^s} \theta^{(i,j,s)}\left(\Delta^{(i,s)} -  \Delta^{(j,s)} \frac{\Pp^i(1+\kappa^i\delta_s)}{\Pp^j(1+\kappa^j\delta_s)}\right) 
     \quad\text{and}\quad \overline{\Hh} := \widehat{\Hh},$$ and using similar arguments as above, we obtain~\eqref{ct-eq:sH_first_order_1}.

\noindent When $\sigma = 1$, we carry out an analogous proof.
\end{proof}

\subsection{Markets equilibrium}
\label{ct-subse equilibrium}
\noindent We now consider that firms and households interact on the labor and goods markets.

\begin{definition} \label{ct-def:market equilibrium} A \emph{market equilibrium} is a $\mathbb{G}$-adapted positive random process $(\overline{W},\overline{P})$ such that
    \begin{enumerate}
        \item Condition~\eqref{ct-eq hyp integrability P,W} holds true for $(\overline{W},\overline{P})$.
        \item The goods' and labor's market clearing conditions are met, namely, for each sector~$i\in\Jj$, and for all $t \geq 0$,
        \begin{align}\label{ct-eq market clearing}
            Y_{t}^i = C_{t}^i + \sum_{j\in\Jj} Z_{t}^{ij}
            \qquad\text{and}\qquad  H_{t}^{i}= N_{t}^i, 
        \end{align}
        where $N_t = \overline{n}(A_t,\overline{W}_t,\overline{P}_t,\kappa_t,\zeta_t)$, $Z_t = \overline{z}(A_t,\overline{W}_t,\overline{P}_t,\kappa_t,\zeta_t)$, $Y =F_A(N,Z)$ with $(\overline{n},\overline{z})$ an admissible solution~\eqref{ct-eq:Nopti-firm-inputs}-\eqref{ct-eq:Zopti-firm-inputs} to~\eqref{ct-eq firm problem deterministic}, from Proposition~\ref{ct-pr firm foc} while $C$ and $H$ satisfy~\eqref{ct-eq:fH_first_order_1}-\eqref{ct-eq:sH_first_order_1} for $(\overline{W},\overline{P})$.
    \end{enumerate}
\end{definition}

In the case of the existence of a market equilibrium, we can derive equations that must be satisfied by the output production process~$Y$ and the consumption process~$C$.

\begin{proposition} \label{ct-prop:equilibrium}
Assume that there exists a market equilibrium as in Definition~\ref{ct-def:market equilibrium}. Then, for $t\geq 0$, $i\in\Jj$, it must hold that

\begin{equation}
\left\{
    \begin{array}{rl}
        Y_{t}^i & = \displaystyle C_{t}^i + \sum_{j \in\Jj}  \Lambda^{ij}(\dd_t) \left( \frac{C_{t}^j}{C_{t}^i}\right)^{-\sigma} Y_{t}^j,\\
        Y_{t}^i &= \displaystyle A^i_{t} \left[\Psi^i(\dd_t)(C_{t}^i)^{-\sigma} Y_{t}^i\right]^{\frac{\psi^i}{1+\varphi}} \prod_{j\in\Jj}  \left[\Lambda^{ji}(\dd_t) \left( \frac{C_{t}^i}{C_{t}^j}\right)^{-\sigma} Y_{t}^i \right]^{\llambda^{ji}},
    \end{array}
\right.\label{ct-eq:Y_it_C_it}
\end{equation}
where~$\Psi$ 
and~$\Lambda$ are defined in~\eqref{ct-eq:Psi_Lambda}, and $\dd_t$ is defined in~\eqref{ct-eq:emiss cost rate}.
\end{proposition}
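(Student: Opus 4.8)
The plan is to combine the firm's optimal input demands from Proposition~\ref{ct-pr firm foc} with the household's first-order conditions from Proposition~\ref{ct-prop:HouseholdMax}, substitute both into the market clearing conditions~\eqref{ct-eq market clearing}, and then recognise the resulting coefficients as $\Psi$ and $\Lambda$ from~\eqref{ct-eq:Psi_Lambda}. Fix $i\in\Jj$ and $t\geq 0$ throughout. Because a market equilibrium is assumed to exist, the processes $N$, $Z$, $Y$ are given by the admissible firm solution evaluated at $(A_t,\overline{W}_t,\overline{P}_t,\kappa_t,\zeta_t)$, while $C$ and $H$ satisfy~\eqref{ct-eq:fH_first_order_1}--\eqref{ct-eq:sH_first_order_1}; this is exactly what makes the two families of optimality relations available for substitution.

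For the first line of~\eqref{ct-eq:Y_it_C_it} I would start from the goods clearing condition $Y_t^i = C_t^i + \sum_{j\in\Jj} Z_t^{ij}$, where $Z_t^{ij}$ is the demand for good $i$ by sector $j$. Using the firm's optimal intermediate demand~\eqref{ct-eq:Zopti-firm-inputs} with indices adapted so that sector $j$ consumes good $i$, one gets $Z_t^{ij} = \llambda^{ij} Y_t^j (1-\tau_t^j\delta_t)P_t^j / \big[(1+\zeta_t^{ij}\delta_t)P_t^i\big]$. The decisive step is to eliminate the price ratio by the household relation~\eqref{ct-eq:sH_first_order_1}, which gives $P_t^j/P_t^i = \tfrac{1+\kappa_t^i\delta_t}{1+\kappa_t^j\delta_t}\big(C_t^j/C_t^i\big)^{-\sigma}$. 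After substitution the coefficient multiplying $\big(C_t^j/C_t^i\big)^{-\sigma}Y_t^j$ is precisely $\Lambda^{ij}(\dd_t)$, and summing over $j$ yields the first equation.

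For the second line I would begin from the production function $Y_t^i = A_t^i (N_t^i)^{\psi^i}\prod_{j\in\Jj}(Z_t^{ji})^{\llambda^{ji}}$ and treat the two types of factor separately. For labour, combining the firm demand~\eqref{ct-eq:Nopti-firm-inputs}, the clearing condition $N_t^i=H_t^i$, and~\eqref{ct-eq:fH_first_order_1} to substitute $P_t^i/W_t^i$ produces the implicit relation $(N_t^i)^{1+\varphi} = \psi^i\tfrac{1-\tau_t^i\delta_t}{1+\kappa_t^i\delta_t}(C_t^i)^{-\sigma}Y_t^i = \Psi^i(\dd_t)(C_t^i)^{-\sigma}Y_t^i$; solving for $N_t^i$ and raising to the power $\psi^i$ gives the first factor $\big[\Psi^i(\dd_t)(C_t^i)^{-\sigma}Y_t^i\big]^{\psi^i/(1+\varphi)}$. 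For the intermediate inputs, applying~\eqref{ct-eq:Zopti-firm-inputs} for $Z_t^{ji}$ and eliminating $P_t^i/P_t^j$ via~\eqref{ct-eq:sH_first_order_1} identifies the coefficient as $\Lambda^{ji}(\dd_t)$, so $Z_t^{ji}=\Lambda^{ji}(\dd_t)\big(C_t^i/C_t^j\big)^{-\sigma}Y_t^i$. Inserting both factors into the production function gives the second equation.

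The main obstacle is purely bookkeeping: the demand $Z^{ij}$ for good $i$ by \emph{all} sectors that enters the clearing condition for good $i$ must be carefully distinguished from the bundle $Z^{\cdot i}=(Z^{ji})_j$ of inputs \emph{feeding} sector $i$'s production, and the carbon-intensity superscripts on $\tau$, $\zeta$, $\kappa$ must each be matched to the correct producing or consuming sector so that the algebra collapses exactly onto the definitions in~\eqref{ct-eq:Psi_Lambda}. The only genuinely non-mechanical point is the labour relation, where $N_t^i$ appears on both sides owing to the disutility term $-\tfrac{y^{1+\varphi}}{1+\varphi}$ in $U$; raising to the power $1+\varphi$ and then taking the $\psi^i/(1+\varphi)$ root is what generates the exponent $\psi^i/(1+\varphi)$ on $\Psi^i$ appearing in the statement.
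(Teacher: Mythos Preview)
Your proposal is correct and follows essentially the same route as the paper: combine~\eqref{ct-eq:Nopti-firm-inputs}--\eqref{ct-eq:Zopti-firm-inputs} with~\eqref{ct-eq:fH_first_order_1}--\eqref{ct-eq:sH_first_order_1} to eliminate prices and wages, use labour clearing to solve $(N_t^i)^{1+\varphi}=\Psi^i(\dd_t)(C_t^i)^{-\sigma}Y_t^i$, and insert the resulting expressions for $N_t^i$ and $Z_t^{ji}$ into the production function and goods clearing condition respectively. Your careful distinction between $Z^{ij}$ and $Z^{ji}$ and the index tracking on $\tau,\zeta,\kappa$ is exactly the bookkeeping the paper's (more terse) proof relies on.
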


\begin{proof} Let $i,j\in\Jj$ and $t\geq 0$.
Combining Proposition~\ref{ct-pr firm foc} and Proposition~\ref{ct-prop:HouseholdMax}, we obtain
\begin{align}\label{ct-eq Z equilibrium}
Z_{t}^{ji} = \llambda^{ji} \frac{1 - \tau_{t}^{i}\delta_t}{1+\zeta^{ji}_t\delta_t} \frac{1 + \kappa_{t}^{j}\delta_t}{1 + \kappa_{t}^{i}\delta_t} \left( \frac{C_{t}^i}{C_{t}^j}\right)^{-\sigma} Y_{t}^i.
\end{align}
From Propositions~\ref{ct-pr firm foc} and~\ref{ct-prop:HouseholdMax} again, we also have
\begin{align*}
    N_{t}^i
    = \psi^i \frac{1 - \tau_{t}^{i}\delta_t}{1 + \kappa_{t}^{i}\delta_t}  (H_{t}^{i})^{-\varphi} (C_{t}^i)^{-\sigma} Y_{t}^i.
\end{align*}
The labor market clearing condition in Definition~\ref{ct-def:market equilibrium} yields
\begin{equation}\label{ct-eq N equilibrium}
    N_{t}^i = \left[ \psi^i \frac{1 - \tau_{t}^{i}\delta_t}{1 + \kappa_{t}^{i}\delta_t} (C_{t}^i)^{-\sigma} Y_{t}^i\right]^{\frac{1}{1+\varphi}}.
\end{equation}
Then, by inserting  the expression of $N_{t}^i$ given in~\eqref{ct-eq N equilibrium}and~$Z_{t}^{ji}$ given in~\eqref{ct-eq Z equilibrium} into the production function $F$, we obtain the second equation in~\eqref{ct-eq:Y_it_C_it}.
The first equation in~\eqref{ct-eq:Y_it_C_it} is obtained by combining the market clearing condition with~\eqref{ct-eq Z equilibrium} (at index $(i,j)$ instead of $(j,i)$).
\end{proof}

\subsection{Output and consumption dynamics and associated growth}

For each time~$t\geq 0$ and noise realization, the system~\eqref{ct-eq:Y_it_C_it} is nonlinear with~$2I$ equations and~$2I$ variables,
and its well-posedness is hence relatively involved. 
Moreover, it is computationally heavy to solve this system for each price trajectory and productivity scenario. We thus consider a special value for the parameter $\sigma$ which allows to derive a unique solution in closed form. From now on, and following~\cite[page 63]{golosov2014optimal}, we assume that  $\sigma = 1$, namely $U(x, y) := \log(x) - \frac{y^{1+\varphi}}{1+\varphi}$
on $(0,\infty)^2$. 

\begin{theorem} \label{ct-eq:output_cons}
Assume that 
\begin{enumerate}
    \item $\sigma = 1$,
    \item $\Ir_I - \llambda$ is not singular,
    \item $\Ir_I - \Ll(\dd_t)^\top$ is not singular for all $t\in\RR_+$.
\end{enumerate}
Then for all $t\geq 0$, there exists a unique $(\Cc_{t},\Yy_{t})$ satisfying~\eqref{ct-eq:Y_it_C_it}.
Moreover, with $\ee_{t}^i := \frac{Y^i_t}{C^i_t}$ for $i\in\Jj$, we have 
\begin{align}
    \ee_{t} = \ee(\dd_t) := (\Ir_I-\Ll(\dd_t)^\top)^{-1} \bOne, \label{ct-eq:ee}
\end{align}
and using $\cB_t = (\cB^i_t)_{i \in \Jj} := \left[ \cA_{t}^{i} + \mfv^i(\dd_t)  \right]_{i\in\Jj}$ with
\begin{equation} 
    \mfv^i(\dd_t) := \log\left((\ee_{t}^{i})^{-\frac{\varphi\psi^i}{1+\varphi}}   \left(\Psi^i(\dd_t) \right)^{\frac{\psi^i}{1+\varphi}} \prod_{j\in\Jj}  \left(\Lambda^{ji}(\dd_t)\right)^{\llambda^{ji}} \right),\label{ct-eq:pricefunc}
\end{equation}
we obtain
\begin{equation}
    \Cc_{t} = \exp{\left( (\Ir_I-\llambda)^{-1}{\cB_t}\right)}.\label{ct-eq:consumption}
\end{equation} 
\end{theorem}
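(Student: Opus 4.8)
The plan is to fix $t\ge 0$ (and a realisation of the noise) and, under the standing choice $\sigma=1$, reduce the coupled nonlinear system \eqref{ct-eq:Y_it_C_it} — $2I$ equations in the $2I$ positive unknowns $(\Cc_t,\Yy_t)$ — to two successive \emph{linear} problems via the change of variables $\ee_t^i:=Y_t^i/C_t^i$. Since admissible $(\Cc_t,\Yy_t)$ must lie in $(0,\infty)^I\times(0,\infty)^I$, the map $(\Cc_t,\Yy_t)\mapsto(\Cc_t,\ee_t)$ is a bijection of that set onto itself, so it is enough to pin down $\ee_t$ and $\Cc_t$ uniquely and then recover $\Yy_t=(\ee_t^i C_t^i)_{i\in\Jj}$.

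First I would handle the first line of \eqref{ct-eq:Y_it_C_it}. Substituting $Y_t^j=\ee_t^j C_t^j$ and using $\sigma=1$, the ratio $C_t^j/C_t^i$ combines with $Y_t^j$ to give $C_t^i\ee_t^j$; dividing the whole equation by the positive scalar $C_t^i$ leaves $\ee_t^i = 1 + \sum_{j\in\Jj}\Lambda^{ij}(\dd_t)\,\ee_t^j$. The decisive feature is that consumption enters only through ratios, so this is a \emph{closed} linear system for $\ee_t$ alone, decoupled from the level of $\Cc_t$; under the stated non-singularity of $\Ir_I-\Lambda(\dd_t)^\top$ it has the unique solution \eqref{ct-eq:ee}. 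I would also record that assumption \eqref{ct-eq:prod_vs_emiss} makes every entry of $\Lambda(\dd_t)$ non-negative, so a Neumann-series (M-matrix) argument gives $\ee_t>0$; this positivity is exactly what guarantees that the $\Yy_t$ reconstructed at the end is admissible.

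Next I would log-linearise the second line. Writing $y^i=\log Y_t^i$, $c^i=\log C_t^i$ and $\cA_t^i=\log A_t^i$, taking logarithms turns the Cobb–Douglas expression into
\begin{equation*}
y^i = \cA_t^i + \tfrac{\psi^i}{1+\varphi}\bigl(\log\Psi^i(\dd_t) - c^i + y^i\bigr) + \sum_{j\in\Jj}\llambda^{ji}\bigl(\log\Lambda^{ji}(\dd_t) + c^j - c^i + y^i\bigr).
\end{equation*}
Substituting $y^i=c^i+\log\ee_t^i$, the consumption level $c^i$ cancels inside both brackets (the $-c^i$ against the $+y^i$), and the constant-returns identity $\psi^i+\sum_{j}\llambda^{ji}=1$ from \eqref{ct-eq const ret to scale} collapses the coefficient of $\log\ee_t^i$ to $-\varphi\psi^i/(1+\varphi)$. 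What remains is $c^i=\cA_t^i+\mfv^i(\dd_t)+\sum_{j\in\Jj}\llambda^{ji}c^j$, where the leftover grouping of the $\log\ee_t^i$, $\log\Psi^i$ and $\log\Lambda^{ji}$ terms is precisely $\mfv^i(\dd_t)$ as defined in \eqref{ct-eq:pricefunc}. This is a linear system in $c=\log\Cc_t$ with right-hand side $\cB_t=\cA_t+\mfv(\dd_t)$; non-singularity of $\Ir_I-\llambda$ yields the unique solution, i.e. \eqref{ct-eq:consumption}, and applying $\exp$ componentwise and multiplying by $\ee_t$ recovers $\Yy_t$.

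I expect the main obstacle to be the bookkeeping in this last step: one must check carefully that, after the substitution $y^i=c^i+\log\ee_t^i$, the level $c^i$ really does drop out of both brackets and that the several contributions in $\log\ee_t^i$ recombine — through the constant-returns identity — into exactly the coefficient appearing in $\mfv^i$; otherwise the residual would not match \eqref{ct-eq:pricefunc}. With that verified, uniqueness is immediate (each linear solve is unique and the change of variables is invertible), and existence follows because the closed forms produce a strictly positive $\Cc_t$ (being an exponential) and, thanks to $\ee_t>0$, a strictly positive $\Yy_t$.
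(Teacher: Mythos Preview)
Your proposal is correct and follows essentially the same route as the paper: divide the first equation of \eqref{ct-eq:Y_it_C_it} by $C_t^i$ to obtain the closed linear system $\ee_t^i=1+\sum_j\Lambda^{ij}(\dd_t)\ee_t^j$ for the ratios, then log-linearise the second equation and use the constant-returns identity \eqref{ct-eq const ret to scale} to arrive at $(\Ir_I-\llambda)\log\Cc_t=\cB_t$. Your additional remark on the positivity of $\ee_t$ via a Neumann/M-matrix argument is a useful refinement that the paper leaves implicit, but the core argument is identical.
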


\begin{proof}
     Let~$t\geq 0$. When $\sigma = 1$, the system~\eqref{ct-eq:Y_it_C_it} becomes for all~$i\in\Jj$,
    \begin{align}\label{ct-eq equilibrium system}
\left\{
    \begin{array}{rl}
        Y_{t}^i & = \displaystyle C_{t}^i + \sum_{j \in\Jj}  \Lambda^{ij}(\dd_t) \left( \frac{C_{t}^i}{C_{t}^j}\right) Y_{t}^j,\\
        Y_{t}^i &= \displaystyle A_{t}^{i} \left[\Psi^i(\dd_t) \ee ^i_t\right]^{\frac{\psi^i}{1+\varphi}} \prod_{j\in\Jj}  \left[\Lambda^{ji}(\dd_t) C_{t}^j \ee_{t}^i \right]^{\llambda^{ji}}.
    \end{array}
\right.
\end{align}
For any~$i\in\Jj$,  dividing the first equation in~\eqref{ct-eq equilibrium system} by $C_{t}^i$, we get
\begin{equation*}
    \ee^i_t = 1 + \sum_{j \in\Jj}  \Lambda^{ij}(\dd_t) \ee^j_t,
\end{equation*}
which corresponds to~\eqref{ct-eq:ee},
thanks to~\eqref{ct-eq const ret to scale}.
Using $\sum_{j\in\Jj} \llambda^{ji} = 1-\psi^{i}$ and $Y_{t}^i = \ee_{t}^{i} C_{t}^i$ in the second equation in~\eqref{ct-eq equilibrium system}, we compute
\begin{equation*}
        C_{t}^i = A_{t}^{i} (\ee_{t}^{i})^{-\frac{\varphi \psi^i}{1+\varphi} }   \left[\Psi^i(\dd_t)\right]^{\frac{\psi^i}{1+\varphi}} \prod_{j\in\Jj}  \left[\Lambda^{ji}(\dd_t)\right]^{\llambda^{ji}} 
        \prod_{j\in\Jj} (C_{t}^j)^{\llambda^{ji}}.
\end{equation*}

Applying log and writing in matrix form, we obtain 
$(\Ir_I-\llambda) \log(\Cc_{t}) = {\cB_t}$,
implying~\eqref{ct-eq:consumption}.
\end{proof}

\begin{remark}
    The matrix $\llambda$ is generally not diagonal, and therefore, from~\eqref{ct-eq:consumption}, the sectors (in output and in consumption) are linked to each other through their respective productivity process. Similarly, an introduction of price in one sector affects the other ones.
\end{remark}

\begin{remark}
For any~$t\geq 0$, $i\in\Jj$, we observe that 
\begin{align}\label{ct-eq de v}
\cB^i_t = \cA^i_t + v^i(\dd_t),
\end{align}
where $v^i(\cdot)$ is defined using~\eqref{ct-eq:pricefunc}.
Namely, $\cB_t$ is the sum of the (random) productivity term and a term involving the price. The economy is therefore subject to fluctuations of two different natures: \textit{the first one comes from the productivity process while the second one comes from the price processes.}
\end{remark}
We now look at the dynamics of production and consumption growth.
\begin{theorem}\label{ct-prop:deltaY_C}
For any $t\geq 0$ and
for $\varpi \in \{\Yy, \Cc\}$. With the same assumptions as in Theorem~\ref{ct-eq:output_cons}, 
\begin{equation}
\dr\log{\varpi}_t \sim \cN\left({m}_{t}^{\varpi}, \widehat{\Sigma}_t \right),
\qquad\text{for }\varpi \in \{\Yy, \Cc\},\label{ct-eq:OutputLaw}
\end{equation}
with 
\begin{align}
    \widehat{\Sigma}_t &= \varsigma^2 (\Ir_I-\llambda)^{-1}\overline{\Sigma}(\Ir_I-\llambda^\top)^{-1} (\dr t)^2,\\
    m^\Cc_t &= (I-\llambda)^{-1}\left[ \mu\dr t+ \dr \mfv(\dd_t)\right],\\
    m^Y_t &= (I-\llambda)^{-1}\left[\mu\dr t + \dr\mfv(\dd_t) \right],
\end{align}
and 
\begin{equation}
    v(\dd_t) :=  \mfv(\dd_t) + (\Ir_I-\llambda)\log(\ee(\dd_t)), \label{ct-eq:taxfuncY}
\end{equation}
where  $\overline \mu$ and $\varsigma^2\overline{\Sigma}$ 
are the mean and the variance of the stationary process~$\cZ$ (Remark~\ref{ct-rem:VAR1}), 
$v$ is defined in~\eqref{ct-eq:pricefunc} and~$\ee$ in~\eqref{ct-eq:ee}.
\end{theorem}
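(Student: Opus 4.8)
The plan is to reduce the statement to an explicit affine representation of $\dr\log\varpi_t$ in the productivity increment $\dr\cA_t$, and then to propagate the stationary Gaussian law of $\cZ_t$ through this affine map. From Theorem~\ref{ct-eq:output_cons}, and more precisely from~\eqref{ct-eq:consumption} together with~\eqref{ct-eq de v}, I already have the closed forms $\log\Cc_t = (\Ir_I-\llambda)^{-1}\cB_t = (\Ir_I-\llambda)^{-1}(\cA_t + \mfv(\dd_t))$ and, using $\Yy_t^i = \ee_t^i\Cc_t^i$ and the relation~\eqref{ct-eq:taxfuncY} between $v$ and $\mfv$, $\log\Yy_t = (\Ir_I-\llambda)^{-1}(\cA_t + v(\dd_t))$. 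Both expressions are affine in $\cA_t$, whose increment carries randomness only through the Gaussian vector $\cZ_t$; the whole argument therefore reduces to taking the linear image of a Gaussian vector.

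Next I would take the instantaneous increment of each expression. Since $(\Ir_I-\llambda)^{-1}$ is a constant matrix, $\dr\log\Cc_t = (\Ir_I-\llambda)^{-1}(\dr\cA_t + \dr\mfv(\dd_t))$ and $\dr\log\Yy_t = (\Ir_I-\llambda)^{-1}(\dr\cA_t + \dr v(\dd_t))$. The terms $\dr\mfv(\dd_t)$ and $\dr v(\dd_t)$ are deterministic: $\dd$ is deterministic and differentiable by~\cref{ct-sassc:price}, and the differentiability statement of Proposition~\ref{ct-cor:output_consc}(3) guarantees that $t\mapsto\mfv(\dd_t)$, $t\mapsto v(\dd_t)$ and $t\mapsto\log\ee(\dd_t)$ are differentiable, so these increments only shift the mean and carry no randomness. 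The only stochastic contribution is $\dr\cA_t = (\mu+\varsigma\cZ_t)\dr t$, read off from~\cref{ct-sassump:OU} (equivalently from the integral representation in Remark~\ref{ct-rem:VAR1}(3)). Taking expectations and using that the stationary $\cZ$ has mean $\overline\mu = 0$ gives the announced drifts $m^\Cc_t$ and $m^\Yy_t$ (the latter equal to $(\Ir_I-\llambda)^{-1}[\mu\,\dr t + \dr v(\dd_t)]$ once the $\dr\log\ee(\dd_t)$ piece is folded in through~\eqref{ct-eq:taxfuncY}).

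Finally I would identify the law. For $\varpi\in\{\Yy,\Cc\}$ one has $\dr\log\varpi_t = m^\varpi_t + (\Ir_I-\llambda)^{-1}\varsigma\,(\cZ_t-\overline\mu)\,\dr t$, an affine function of the single Gaussian vector $\cZ_t$; hence $\dr\log\varpi_t$ is Gaussian. Its covariance is the image of $\VV[\cZ_t]$ under the linear map $A := (\Ir_I-\llambda)^{-1}\varsigma\,\dr t$, namely $A\,\VV[\cZ_t]\,A^\top = \varsigma^2(\Ir_I-\llambda)^{-1}\overline\Sigma(\Ir_I-\llambda^\top)^{-1}(\dr t)^2 = \widehat\Sigma_t$, where $\overline\Sigma$ is the stationary covariance of $\cZ$ from Remark~\ref{ct-rem:VAR1}(2), giving~\eqref{ct-eq:OutputLaw}. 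The only genuinely delicate points are bookkeeping ones: keeping the deterministic price-driven increments ($\mfv$, $v$, $\log\ee$) separate from the stochastic productivity increment, and being explicit that the quoted mean and variance of $\cZ$ are those of its \emph{stationary} law rather than of the initial condition $\cZ_0\sim\cN(0,\Sigma\Sigma^\top)$; once these are fixed, Gaussianity and the moment propagation are immediate.
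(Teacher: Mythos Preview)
Your proof is correct and follows essentially the same route as the paper: both compute $\dr\log\Cc_t = (\Ir_I-\llambda)^{-1}\bigl[(\mu+\varsigma\cZ_t)\,\dr t + \dr\mfv(\dd_t)\bigr]$ from~\eqref{ct-eq:consumption} and~\eqref{ct-eq de v}, read off the Gaussian law from the stationary distribution of~$\cZ$ (Remark~\ref{ct-rem:VAR1}), and then obtain $\dr\log\Yy_t$ by adding the deterministic increment $\dr\log\ee(\dd_t)$. Your bookkeeping is slightly more explicit---separating $v$ from $\mfv$ via~\eqref{ct-eq:taxfuncY} and flagging that the moments used are those of the \emph{stationary} law of~$\cZ$---but the argument is the same.
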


\begin{proof}
Let $t\geq 0^*$, from~\eqref{ct-eq de v}, we have, for $i \in \Jj$, 
\begin{equation*}
    \dr \cB^i_t  = (\mu^i + \varsigma \cZ_t^i) \dr t + \dr v^i(\dd_t).
\end{equation*}
Combining the previous equality with~\eqref{ct-eq:consumption}, we get
\begin{equation}
    \dr\log{\Cc}_t = (\Ir_I-\llambda)^{-1} \left[(\mu + \varsigma \cZ_t)\dr t + \dr v(\dd_t)\right].\label{ct-eq:DeltalogC}
\end{equation}
Applying Remark~\ref{ct-rem:VAR1} leads to $\dr\log{\Cc}_t \sim \cN\left({m}_{t}^{\Cc}, \widehat{\Sigma}_t \right)$.
Using~\eqref{ct-eq:ee}, we observe that, for $i \in \Jj$,
\begin{align}
    (\dr\log{\Yy_t})^i = (\dr\log{\Cc_t})^i + \dr\log(\ee^i(\dd_t)),
\end{align}
which, using the previous characterization of the law of~$\dr\log{\Cc_t}$, allows to conclude.
\end{proof}

From the previous result, we observe that output and consumption growth processes have a stationary variance but a time-dependent mean. 

\begin{proof}\label{ct-proof:lem:approx firm value}
of Proposition~\ref{ct-lem:approx firm value}.\\

\noindent Let~$t\geq 0$, $n\in\OneN$, and $T > t_\star$. 
\begin{enumerate}
    \item we also introduce,
    \begin{equation}\label{ct-eq intro cVnKt}
        \cV^{n,K}_{t,\dd} :=F^n_{t,\dd} \int_{t}^{+\infty} e^{-r(s-t)}\EE_{t}\left[ \exp \left( (s-t)\af^{n\cdot} \mu
    +\af^{n\cdot}\left(v(\dd_{s})-v(\dd_t)\right)
    + \sigma_{n} (\cW^n_s - \cW^n_t)
    \right)\dr s\right].
    \end{equation}  
    Similar computations as (in fact easier than) the ones performed in the proof of Proposition 2.5. in~\cite{bouveret2023propagation} show that $\cV^n_{t,\dd} = \lim_{K\to+\infty} \cV^{n,K}_t $ is well defined in $\cL^q(\cH,\mathbb{E})$ for any $q\ge 1$.
    Furthermore,
    \begin{equation*}
        \cV^{n,K}_{t,\dd} = F^n_{t,\dd}\int_{s=0}^{K} e^{\varrho_n s}\exp{\left(\af^{n\cdot}\left(v(\dd_{t+s})-v(\dd_t)\right)
        \right)}\dr s = F^n_{t,\dd}e^{-\af^{n\cdot}v(\dd_t)} \int_{s=0}^{K} e^{\varrho_n s} \exp{(\af^{n\cdot}v(\dd_{t+s}))} \dr s,
    \end{equation*}  
    where $\varrho_n$ is defined in the lemma, and from Assumption~\ref{ct-ass:link} and Corollary~\ref{ct-cor:output_consc}, 
    \begin{equation*}
        F^n_{t} = F^n_0 \exp \left( \int_{u=0}^{t} \af^{n\cdot} (\Theta_u\dr u + \dr v(\dd_u)) + \sigma_{n} \dr \cW^n_t \dr u \right) 
        = F^n_0 e^{ \af^{n\cdot}\left(v(\dd_{t})-v(\dd_0)\right)} \exp \left( \af^{n\cdot}\cA^\circ_t+ \sigma_{n}\cW^n_t \right).
    \end{equation*}
    We then have
    \begin{equation*}
        F^n_{t,\dd} e^{-\af^{n\cdot}v(\dd_t)} \int_{s=0}^{K} e^{\varrho_n s} \exp{(\af^{n\cdot}v(\dd_{t+s}))} \dr s = F^n_0 e^{-\af^{n\cdot}v(\dd_0)} \exp \left( \af^{n\cdot}\cA^\circ_t+ \sigma_{n}\cW^n_t \right) \int_{s=0}^{K} e^{\varrho_n s}\exp{\left(\af^{n\cdot} v(\dd_{t+s}) 
        \right)} \dr s.
    \end{equation*}  
    \item Moreover,
    \begin{itemize}
        \item If $t < t_\circ$, then
    \begin{small}
    \begin{equation*}
        \begin{split}
    \mathfrak{R}^{n,K}_t(\dd) &:= \int_{s=0}^{K} e^{\varrho_n s}\exp{\left(\af^{n\cdot} v(\dd_{t+s})  \right)}\dr s\\ 
    &= \int_{s=0}^{t_\circ-t} e^{\varrho_n s}\exp{\left(\af^{n\cdot} v(\dd_{t+s})  \right)}\dr s + \int_{s=t_\circ-t}^{t_\star-t} e^{\varrho_n s}\exp{\left(\af^{n\cdot} v(\dd_{t+s})  \right)}\dr s + \int_{s=t_\star-t}^{K} e^{\varrho_n s}\exp{\left(\af^{n\cdot} v(\dd_{t+s})  \right)}\dr s\\
    &= e^{\af^{n\cdot}v(\dd_{t_\circ})} \frac{1-e^{\varrho_n (t_\circ-t)}}{-\varrho_n} + \int_{s=t_\circ-t}^{t_\star-t} e^{\varrho_n s}\exp{\left(\af^{n\cdot} v(\dd_{t+s})  \right)}\dr s + e^{\af^{n\cdot} v(\dd_{t_\star}) + \varrho_n (t_\star-t)} \frac{1-e^{\varrho_n (K-t_\star+t)}}{-\varrho_n}.
    \end{split}
    \end{equation*}
    \end{small}
        \item If $t_\circ \leq t < t_\star$, then
\begin{small}
\begin{equation*}
        \begin{split}
    \int_{s=0}^{K} e^{\varrho_n s}\exp{\left(\af^{n\cdot} v(\dd_{t+s})  \right)}\dr s &= \int_{s=0}^{t_\star-t} e^{\varrho_n s}\exp{\left(\af^{n\cdot} v(\dd_{t+s})  \right)}\dr s  + \int_{s=t_\star-t+1}^{K} e^{\varrho_n s}\exp{\left(\af^{n\cdot} v(\dd_{t+s})  \right)}\dr s \\
    &= \int_{s=0}^{t_\star-t} e^{\varrho_n s}\exp{\left(\af^{n\cdot} v(\dd_{t+s})  \right)}\dr s  + e^{\af^{n\cdot} v(\dd_{t_\star}) + \varrho_n (t_\star-t+1)} \frac{1-e^{\varrho_n (K-t_\star+t)}}{-\varrho_n}.
    \end{split}
    \end{equation*}
\end{small}
    \item If $t\geq t_\star$, then
    \begin{equation*}
        \begin{split}
    \int_{s=0}^{K} e^{\varrho_n s}\exp{\left(\af^{n\cdot} v(\dd_{t+s})  \right)}\dr s  &=\int_{s=0}^{K} e^{\varrho_n s}\exp{\left(\af^{n\cdot} v(\dd_{t_\star})  \right)}\dr s  = e^{\af^{n\cdot} v(\dd_{t_\star})} \frac{1-e^{\varrho_n (K+1)}}{-\varrho_n}.
    \end{split}
    \end{equation*}
    \end{itemize}
    Finally, $e^{\varrho_n (K+1)}$ and $e^{\varrho_n (K-t_\star+t)}$ converge to~$0$ for $\varrho_n < 0$ 
    as~$K$ tends to infinity, 
and the result follows.
\item We denote
    \begin{align*}
         V^{n,T}_{t,\dd} := \EE_{t}\left[\int_{t}^{T} e^{-r (s-t)} F^n_{s,\dd} \dr s \right].
    \end{align*}
    As we have from~\eqref{ct-eq:CF_vs_GDPGrowth with conso}, $F^n_{s,\dd} = F^n_{t,\dd}\exp\left( \af^{n\cdot}(\cA_s-\cA_t)+\af^{n\cdot}\left(v(\dd_{s})-v(\dd_t)\right)+ \sigma_{n}(\cW^n_s-\cW^n_t) \right)$, and given that for all $h,t\geq 0$,
    \begin{equation*}
    \cA_{t+h} = \cA_{t} + \mu h + \varsigma\Upsilon_{h}\cZ_t - \varsigma\Gamma^{-1} \int_{t}^{t+h} \left(e^{-\Gamma (t+h -s)} - \Ir_I \right) \Sigma \dr B_s^{\cZ}.
\end{equation*}
We obtain
    \begin{align*}
        V^{n,T}_{t,\dd}&= \EE_{t}\left[\int_{t}^{T} e^{-r (s-t)} F^n_{t}\exp\left( \af^{n\cdot}(\cA_s-\cA_t)+\af^{n\cdot}\left(v(\dd_{s})-v(\dd_t)\right)+ \sigma_{n}(\cW^n_s-\cW^n_t) \right) \dr s \right]\\
        &=F^n_{t,\dd}\int_{t}^{T} e^{\left(\frac{1}{2}\sigma_{n}^2-r\right) (s-t)} \exp\left(\af^{n\cdot}\left(v(\dd_{s})-v(\dd_t)\right) \right) \EE_{t}\left[\exp\left( \af^{n\cdot}(\cA_s-\cA_t) \right)\right] \dr s \\
         &=F^n_{t,\dd}\int_{t}^{T} e^{\left(\frac{1}{2}\sigma_{n}^2+\af^{n\cdot}\mu-r\right) (s-t)} \exp\left(\af^{n\cdot}v(\dd_{s})-v(\dd_t)\right) \exp\left( \varsigma\af^{n\cdot}\Upsilon_{s-t}\cZ_t +\frac{1}{2} \af^{n\cdot} \Sigma^{\cA,h}_{t} (\af^{n\cdot})^\top\right) \dr s.
    \end{align*}
    Then using Hölder's inequality (with $1= \frac{1}{p} + \frac{1}{q}$), we have
    \begin{small}
    \begin{align*}
        \lVert V^{n,T}_{t,\dd}\rVert_1 &\leq \lVert F^n_{t,\dd}\rVert_q \left\lVert \int_{t}^{T} e^{\left(\frac{1}{2}\sigma_{n}^2+\af^{n\cdot}\mu-r\right) (s-t)} \exp\left(\af^{n\cdot}v(\dd_{s})-v(\dd_t)\right) \exp\left( \varsigma\af^{n\cdot}\Upsilon_{s-t}\cZ_t +\frac{1}{2} \af^{n\cdot} \Sigma^{\cA,s-t}_{t} (\af^{n\cdot})^\top\right) \dr s\right\rVert_p\\
        &\leq \lVert F^n_{t,\dd}\rVert_q \int_{t}^{T} e^{\left(\frac{1}{2}\sigma_{n}^2+\af^{n\cdot}\mu-r\right) (s-t)} \exp\left(\af^{n\cdot}v(\dd_{s})-v(\dd_t)\right) \exp\left(\frac{1}{2} \af^{n\cdot} \Sigma^{\cA,s-t}_{t} (\af^{n\cdot})^\top\right)\left\lVert \exp\left( \varsigma\af^{n\cdot}\Upsilon_{s-t}\cZ_t\right)\right\rVert_p \dr s.
    \end{align*}
    \end{small}
    Observe that under Assumption~\ref{ct-sassc:price}, there exists a constant $\mathfrak{C}_\dd>0$ such that
\begin{align*}
    \sup_{n,s,t}\exp \left( \mathfrak{a}^{n\cdot}\left(v(\dd_{s})-v(\dd_t) \right)\right) \le \mathfrak{C}_\dd\,.
\end{align*}
Given that $\cZ$ is stationary and $\Upsilon_{s-t}$ is bounded (\eqref{ct-eq:Upsilon bound}), there exists $\mathfrak{C}_{n,p}>0$ so that $\leq \mathfrak{C}_{n,p}$
    \begin{align*}
        \left\lVert \exp\left( \varsigma\af^{n\cdot}\Upsilon_{s-t}\cZ_t\right)\right\rVert_p = \EE\left[\exp\left( \varsigma p \af^{n\cdot}\Upsilon_{s-t}\cZ_t\right)\right]^\frac{1}{p} \leq \mathfrak{C}_{n,p}. 
    \end{align*}
    Moreover,
    \begin{align*}
        \exp\left(\frac{1}{2} \af^{n\cdot} \Sigma^{\cA,h}_{t} (\af^{n\cdot})^\top\right) &= \exp\left(\frac{1}{2}  \varsigma^2\int_{0}^{s-t} \af^{n\cdot}\Upsilon_{u} \Sigma\Sigma^\top \Upsilon_{u}^\top  (\af^{n\cdot})^\top \dr u \right) \\
        &\leq \exp\left(\frac{1}{2}  \varsigma^2\int_{0}^{s-t} \lVert\af^{n\cdot}\rVert^2 \lVert\Sigma\rVert^2 \lVert\Upsilon_{u}\rVert^2 \dr u \right)\\
        &\leq \exp\left(\frac{1}{2}  \varsigma^2\frac{c_\Gamma^2}{\lambda_\Gamma^2}\lVert\af^{n\cdot}\rVert^2 \lVert\Sigma\rVert^2 (s-t) \right).
    \end{align*}
    Next, we can write
    \begin{align*}
        \lVert V^{n,T}_{t,\dd}\rVert_1 &\leq \mathfrak{C}_\dd \mathfrak{C}_{n,p} \lVert F^n_{t,\dd}\rVert_q \int_{t}^{T} \exp{\left(\frac{1}{2}\sigma_{n}^2+\af^{n\cdot}\mu+\frac{1}{2}  \varsigma^2\frac{c_\Gamma^2}{\lambda_\Gamma^2}\lVert\af^{n\cdot}\rVert^2 \lVert\Sigma\rVert^2-r\right) (s-t)} \dr s,
    \end{align*}
    and if \eqref{ct_eq:main technical ass} is satisfied and $T\to+\infty$, then $V^{n,K}_{t,\dd}$ converges to~$V^n_{t,\dd}$. Finally, similar methods must be used to show $\EE \left[\left|\frac{V^n_{t,\dd}}{F^n_{t,\dd}} - \frac{\cV^n_{t,\dd}}{F^n_{t,\dd}}\right|\right] \le C \varsigma$.
\end{enumerate}
\end{proof}

\end{document}